\def\eqalign#1{\null\,\vcenter{\openup\jot\mathsurround=0 pt
  \ialign{\strut\hfil$\displaystyle{##}$&$\displaystyle{{}##}$\hfil
      \crcr#1\crcr}}\,}
\newcommand{\ign}[1]{}
\newcommand{\miff}{\text{ iff }}
\newcommand{\tlef}{\mathsf{EF}}
\newcommand{\tlf}{\mathsf{F}}
\newcommand{\tlx}{\mathsf{X}}
\newcommand{\ctlex}{\mathsf{EX}}
\newcommand{\freef}[2]{(#1,#2)^\Delta}
\newcommand{\onefreef}[1]{#1^\Delta}
\newcommand{\hole}{\Box}
\newcommand{\mainle}{\dashv}
\newcommand{\fotwo}{\tlef + \tlf^{-1}}
\newcommand{\wordfotwo}{\tlf + \tlf^{-1}}
\def\doi{5 (3:5) 2009}
\begin{document}

\title[Two-way unary temporal logic over trees]%
  {Two-way unary temporal logic over trees\rsuper*}
\author[M.~Boja{\'n}czyk]{Miko{\l}aj Boja{\'n}czyk}
\address{Warsaw University}
\email{bojan@mimuw.edu.pl}

\keywords{temporal logic, tree automata, forest algebra}
\subjclass{F.4.1}
\titlecomment{{\lsuper*}Supported by Polish government grant no.\ 
  N206 008 32/0810.}

\begin{abstract}
  We consider a temporal logic $\fotwo$ for unranked, unordered finite
  trees.  The logic has two operators: $\tlef \varphi$, which says
  ``in some proper descendant $\varphi$ holds'', and $\tlf^{-1}
  \varphi$, which says ``in some proper ancestor $\varphi$ holds''. We
  present an algorithm for deciding if a regular language of unranked
  finite trees can be expressed in $\fotwo$. The algorithm uses a
  characterization expressed in terms of forest algebras.
\end{abstract}
\maketitle

\section{Introduction}
We say a logic has a decidable characterization if the following
decision problem is decidable: ``given as input a finite automaton,
decide if the recognized language can be defined using a formula of
the logic''. Representing the input language by a finite automaton is
a reasonable choice, since many known logics (over words or trees) are
captured by finite automata. 

This type of problem has been successfully studied for word languages.
Arguably best known is the result of McNaughton, Papert and
Sch\"utzenberger~\cite{schutzenberger, mcnaughton}, which says that
the following three conditions on a regular word language $L$ are
equivalent: a) $L$ can be defined in first-order logic; b) $L$ can be
defined using a star-free expression; and c) the syntactic semigroup
of $L$ does not contain a non-trivial group. Since condition c) can be
effectively tested, the above theorem gives a decidable
characterization of first-order logic. This result demonstrates two
important features of work in this field: a decidable characterization
not only gives a better understanding of the logic in question, but it
often reveals unexpected connections with algebraic concepts.  During
several decades of research, decidable characterizations have been
found for fragments of first-order logic with restricted
quantification and a large group of temporal logics, see
\cite{pin-survey} and \cite{wilke} for references.

For trees, however, much less is known. No decidable characterization
has been found for what is possibly the most important subclass of
regular tree languages, first-order logic with the descendant
relation, despite several attempts~\cite{potthoff,heuter,doktorat}.
Similarly open are chain logic~\cite{chainlogic} and the temporal
logics CTL, CTL* and PDL.  However, there has been some recent
progress.  In~\cite{efextcs}, decidable characterizations were
presented for the temporal logics $\tlef$ and $\ctlex + \tlef$; while
Benedikt and Segoufin~\cite{segoufinfo} characterized tree languages
definable in first-order logic with the successor relation (but
without the descendant relation). Two new results give effective
characterizations for some fragments of first-order logic with limited
quantifier alternation. The expressive power of alternation-free
formulas (i.e.~boolean combinations of formulas with quantifier prefix
$\exists^*$) is characterized in~\cite{treepiecewise}. Properties
that can be defined both with quantifier prefix $\exists^*\forall^*$
and also with quantifier prefix $\forall^*\exists^*$ are characterized
in~\cite{deltatwo}.  We will come back to the latter class later on in
this introduction.

In this paper, we continue the line of research started
in~\cite{efextcs}, by focusing on a temporal logic for trees. We consider
a logic called $\fotwo$. This logic has two operators: $\tlef
\varphi$, which says ``in some proper descendant $\varphi$ holds'',
and $\tlf^{-1} \varphi$, which says ``in some proper ancestor
$\varphi$ holds''. Thanks to the backward modality, $\fotwo$ is more
expressive than $\tlef$ alone. For instance, the formula
\[
  \tlef (a \land \neg
  \tlf^{-1} \neg b)
\]
defines the class of trees where  some node has label $a$, but all of its
ancestors have label $b$. This is a property reminiscent of CTL, and
cannot be expressed by only using $\tlef$, since it fails the
identities  that must be satisfied by $\tlef$-definable languages~\cite{forestalgebra}.

The main result in this paper is Theorem~\ref{thm:main}, which gives a
decidable characterization of languages definable in
$\fotwo$. Before we present this result, in
Section~\ref{sec:why-two-way} we try to justify the choice of the
logic $\fotwo$. In Section~\ref{sec:basic-definitions}
we present the algebraic formalism that will be used in the proofs.
The rest of the paper is devoted to proving the main result.

I would like to thank Luc Segoufin. We spent a lot of time together
trying to understand the expressive power of $\fotwo$. Without his
input this paper would not have been possible. I would also like to
thank the anonymous referees for their helpful comments.

\section{Why two-way  unary  temporal logic}
\label{sec:why-two-way}
There are two reasons to consider $\fotwo$. The first reason is that,
over words, this logic corresponds to an important and well-studied
class of regular languages. The second reason is that, over trees, the
logic is related to XML. We go over these reasons in
Sections~\ref{sec:word-analogy} and~\ref{sec:xpath} respectively.

\subsection{The word analogy}
\label{sec:word-analogy}

There is a very robust class of regular word languages that has
several equivalent descriptions (a survey of this class can be found
in \cite{dadiamond}):
\begin{enumerate}[(1)]
\item \label{item:wordtl} Word languages that can be defined in the temporal logic
  $\wordfotwo$. Here $\tlf \varphi$ means ``in some future
  position $\varphi$'' and $\tlf^{-1} \varphi$ means ``in some past
  position $\varphi$''.
\item \label{item:wordfo2} Word languages that can be defined by a
  first-order formula with two variables and the left-to-right
  ordering of positions (but without the successor relation).
\item \label{item:wordsigma} Word languages that can be defined by a
  first-order formula (with many variables, the left-to-right
  ordering, but without the successor relation) with a $\forall^* \exists^*$
  quantifier prefix, and also by one with an $\exists^* \forall^*$
  quantifier prefix.
\item \label{item:da} Word languages whose syntactic semigroup belongs to the
  semigroup variety DA.  One way of defining this variety is in terms
  of an identity: DA is the class of semigroups that satisfy the
  identity     $(st)^\omega = (st)^\omega s (st)^\omega$.
\item \label{item:unambig} Word languages described by finite disjoint
      unions of unambiguous products (a form of regular expression).
\item \label{item:turtleautomata} Word languages that can be
  recognized by ``turtle automata'', a type of deterministic two-way
  word automaton.
\item \label{item:twoway} Word languages that can be recognized by
  two-way deterministic automata where the states in a run are
  non-decreasing with respect to a given order.
\end{enumerate}
An important corollary of property \ref{item:da} is that membership of
a regular language in the above class is decidable: it suffices to
check if the syntactic semigroup of the language satisfies the DA
identity.

Some of the above classes generalize easily to trees, some don't.

We will not talk about classes~\ref{item:unambig},
\ref{item:turtleautomata} and \ref{item:twoway}. It is not clear what
unambiguous expressions are for trees, likewise for the automata.  

We will come back to the algebraic description in item~\ref{item:da}
later on in the paper.

The three logically defined classes~\ref{item:wordtl},
\ref{item:wordfo2} and \ref{item:wordsigma} can be easily extended to
trees.  A natural counterpart of class~\ref{item:wordtl} is the logic
$\fotwo$ considered in this paper. The classes~\ref{item:wordfo2} and
\ref{item:wordsigma} can define tree languages if the order is
interpreted as the ancestor/descendant ordering of tree nodes. (One
could also consider variants where two partial orders of nodes are
available instead of one: the ancestor/descendant order and also the
left-to-right ordering of siblings.  We keep to the simpler case,
where siblings are unordered.)  The logically defined classes diverge
for trees:
\begin{enumerate}[$\bullet$]
\item Two-variable logic is strictly more expressive than the temporal
      logic.  The translation from temporal to two-variable logic is
      fairly obvious.  For the converse, the problem is that $x \not
      \le y \land y \not \le x$ cannot be expressed in the temporal
      logic.  For instance, the language: ``there are two $a$'s'' can
      be defined by a two-variable formula, but cannot be defined in
      the temporal logic.  This is because the temporal logic is
      bisimulation invariant, and cannot see the difference between
      one child with $a$ and two children with $a$. (Note however,
      that the languages ``two $a$'s below some $b$'', or ``three
      $a$'s'' cannot be defined in two-variable logic.)
\item As we will show at the end of this paper, the intersection of
  $\forall^* \exists^*$ and $\exists^* \forall^*$ is incomparable with
  both the two-variable and the temporal logic. 
\end{enumerate}

The second fragment has been considered in~\cite{deltatwo}, the
investigation therein shows that it is a well-behaved class of tree
languages.  We are left with the temporal logic and two-variable
logic. Why do we choose temporal logic and not two-variable logic?
The reason is that two-variable logic seems to be less robust for
trees: why can ``two $a$'s'' be defined, but not ``three $a$'s''? Of
course it is nonetheless important to understand two-variable logic,
and we leave this task as future work.

\subsection{XPath}
\label{sec:xpath}
XPath is a formalism used to describe paths and nodes in unranked
trees.  There is a strong connection between XPath and two-variable
logics

A set of paths is seen as a binary relation $P(x,y)$, which says when
a source $x$ can be connected with a target $y$.  The basic idea in
XPath is that one starts with atomic paths, called axes, such as ``$x$
is a descendant of $y$'', or ``$x$ is a child of $y$'', and then
constructs longer paths using mechanisms such as concatenation.
Marx and de Rijke~\cite{marxderijke} show that a fragment of XPath called Core XPath
has exactly the same expressive power as two-variable first-order
logic. (The equivalence in expressive power is for Boolean queries  in XPath and
sentences of two-variable  logic. The equivalence also
holds for unary queries in XPath and formulas of two-variable logic
with one free variable; but  it fails for binary queries.)
 Note however, that the axes considered by Marx include child
and next-child, which go beyond the fragments considered in this
paper.  When the only axes allowed are ``descendant'' and
``ancestor'', Core XPath has exactly the same power as ``our'' logic
$\fotwo$.  A decidable characterization for fragments of XPath with
the other axes, including the one considered by Marx, is left as
future work.

\section{Basic definitions}
\label{sec:basic-definitions}
\subsection{Trees and forests}
\label{sec:trees-forests}
We work with unranked finite labeled trees. We assume that an alphabet
$(A,B)$ contains two types of labels: one set of labels $A$ that can
be used in the leaves, and another set of labels $B$ that can be used
in inner nodes (i.e.~not leaves). This division is convenient for the
algebraic framework we use in general, and for the induction proof in
this paper in particular.  \emph{Trees} are defined as follows: every
leaf label $a \in A$ is a tree; if $t_1,\ldots,t_n$ are trees and $b
\in B$ is an inner node label then $b(t_1 + \cdots + t_n)$ is a tree.
A \emph{forest} is a sequence of trees.  As above, we concatenate
forests using~$+$.  In particular every forest is of the form $ t =
t_1 + \cdots + t_n$, for some trees $t_1,\ldots,t_n$.  We do not allow
empty forests, so $n \ge 1$.  We denote both trees and forests using
letters $s,t$. When $b$ is a label and $t$ is a forest, we write $bt$
for the tree that has label $b$ in the root, and where the children
form the forest $t$. In other words, we omit the parentheses and write
$bt$ instead of $b(t)$.

A \emph{context} is a forest where exactly one leaf is labeled by
a special label $\hole$; this leaf is interpreted as a hole. We denote
contexts by $p, q$.  The \emph{main path} in a context
consists of the ancestors of the hole. A forest $ t$ can be
substituted in place of the hole of a context $p$, the resulting
forest is denoted $p( t)$, or sometimes $p t$.
\medskip

  \begin{center}
    \includegraphics[scale=0.65]{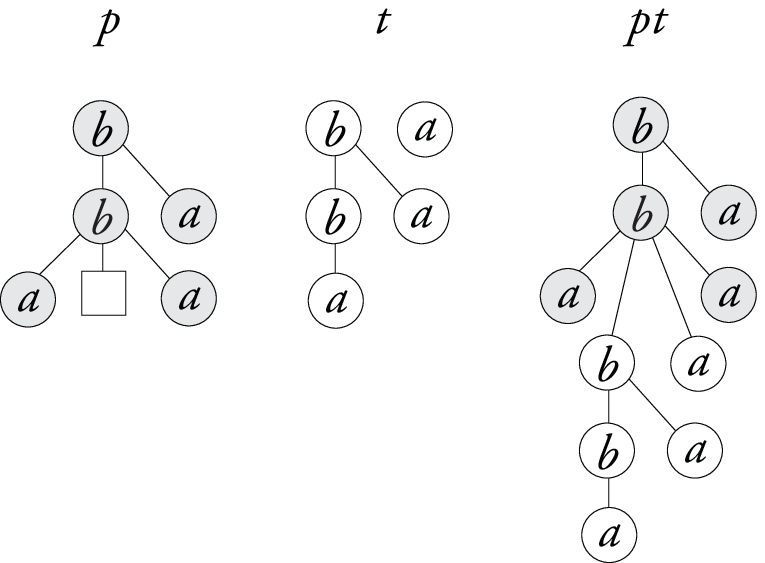}
  \end{center}

There is a natural composition operation on contexts: the context $p
q$ is the unique context such that $ (p q) t = p (q t) $ holds for all
forests~$ t$.  We allow the empty context, denoted by $\hole$; this is
the context where the only node in the context is the hole~$\hole$.
The empty context satisfies $\hole t = t$.  Nodes of trees, forests
and contexts are defined the usual way. We write $x,y,z$ for nodes,
and $x \le y$ when $x$ is an ancestor of $y$.

The reader will notice that the trees and forests we defined are
sibling-ordered (i.e.~$s+t$ is not the same as $t+s$).  However,
properties definable in our logic $\fotwo$ are going to
be invariant under this order.

\subsection{The logic}
\label{sec:logic}
The logic $\fotwo$ is defined as follows:
\begin{enumerate}[$\bullet$]
\item Every label -- both inner node label and leaf label -- is a
  formula; this formula holds in nodes with that label.
\item Formulas are closed under boolean combinations,
  including negation.
\item If $\varphi$ is a formula, then $\tlef \varphi$ is also a
  formula; it is true in a node $x$ if there is some proper descendant
  $y > x$ where $\varphi$ is true. Likewise for $\tlf^{-1} \varphi$,
  but this time $y$ must be a proper ancestor $y < x$.
\end{enumerate}

A formula $\varphi$ of $\fotwo$ is most naturally interpreted as a
unary query, i.e. in a given tree it selects a set of nodes. For
instance, the formula $\tlef \true$ selects all inner nodes. In this
paper, we are interested in tree languages, i.e.~boolean queries,
where a formula is either true or false in a given tree. To get a
boolean query, we say a formula of $\fotwo$ is true in a tree if it is
true in its root.

The main contribution of this paper is a characterization of the
regular tree languages that can be defined by a boolean query of
$\fotwo$.  It is, however, natural to also ask for a characterization
of unary queries.  For instance, the first unary query below can be
defined in $\fotwo$, but the second one cannot:
\begin{enumerate}[$\bullet$]
\item Some ancestor of the  selected node has label $a$,
      i.e.~$\tlf^{-1} a$.
\item Some child of the selected node has label $a$.
\end{enumerate}
In general, a regular unary query can be given e.g.~as a formula of
monadic-second order logic with one free variable.  Note that although
the second unary query cannot be defined, the tree language ``some
child of the root has label $a$'' can be defined, by the formula
\[
  \tlef (a \ \land \tlf^{-1} \true \land \neg \tlf^{-1} \tlf^{-1} \true)\ .
\]
This suggests that characterizing unary queries is a nonobvious
problem, which we leave as future work.

\subsection{Antichain composition  principle}
A problem with $\fotwo$ is that it is not closed under
``composition''. We illustrate this problem, together with a
workaround, for words; then we show the result for trees.

Consider the word languages $aa$ and $(a+b)^*$. Both are definable in
$\wordfotwo$,  and even  only using $\tlf$, but the language $(a+b)^*aa(a+b)^*$ is not. We claim
however, that the concatenation of two definable languages is also
definable if the place in the word where they meet can be uniquely
determined in $\wordfotwo$:
\begin{lem}[Composition for words]\label{lemma:conc-words}
  Let $L,K$ be two word languages definable in $\wordfotwo$ and let
  $\varphi$ be a $\wordfotwo$ formula with the semantic property that
  in every word, $\varphi$ holds in at most one word position.  The
  following word language is also definable in $\wordfotwo$:
  \[
    \set{a_1 \ldots a_n : a_1 \cdots a_i  \in L,\ a_{i+1} \cdots a_n
      \in K,\ \mbox{and $\varphi$ holds in $a_1 \cdots a_n$ at
        position $i+1$}}
  \]
\end{lem}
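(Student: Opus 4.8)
The plan is to express the target language $M = \{a_1\ldots a_n : a_1\cdots a_i \in L,\ a_{i+1}\cdots a_n \in K,\ \varphi \text{ holds at position } i+1\}$ as a Boolean combination of $\wordfotwo$-formulas, using $\varphi$ as a ``marker'' that lets a formula evaluated at the meeting point look left into an $L$-word and right into a $K$-word. First I would record the standard fact that $\wordfotwo$ has both a forward fragment (built from $\tlf$ and labels under Booleans) and a backward fragment (built from $\tlf^{-1}$ and labels under Booleans), and that $L$, being definable in the full logic, is in particular definable; the subtlety is that an arbitrary $\wordfotwo$-formula for $L$ mixes both modalities, so I cannot naively relativize it. The key idea is that at the unique position $i+1$ where $\varphi$ holds, the formula ``sees'' the whole word, so I want to rewrite membership in $M$ as: $\varphi$ holds at exactly one position, and at that position a suitable $\wordfotwo$-formula $\psi$ holds, where $\psi$ encodes ``the prefix strictly to my left is in $L$'' and ``the suffix from me onwards is in $K$.''

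The main technical step is therefore a \emph{relativization lemma}: for every $\wordfotwo$-formula $\alpha$ and every $\wordfotwo$-formula $\varphi$ that holds at most once, there is a $\wordfotwo$-formula $\alpha^{\to\varphi}$ (respectively $\alpha^{\leftarrow\varphi}$) such that, in any word in which $\varphi$ does hold at some position $j$, the formula $\alpha^{\to\varphi}$ holds at a position $x$ iff $\alpha$ holds at $x$ in the suffix $a_j a_{j+1}\cdots a_n$ viewed as a standalone word (and symmetrically $\alpha^{\leftarrow\varphi}$ talks about the prefix $a_1\cdots a_{j-1}$). I would prove this by induction on $\alpha$: labels are unchanged; Booleans commute; for $\tlf\beta$, the relativized version $\tlf(\beta^{\to\varphi})$ almost works but may reach past the cut, so I intersect with the constraint ``not strictly before the $\varphi$-position'', which in $\wordfotwo$ is $\tlf^{-1}\varphi \lor \varphi \lor (\text{here }\varphi\text{ is in the future is excluded})$ — more precisely one uses that ``$x$ lies at or after the unique $\varphi$'' is expressible as $\neg\tlf(\varphi) \lor \varphi$-type combinations together with the at-most-once hypothesis; for $\tlf^{-1}\beta$ inside $\alpha^{\to\varphi}$ one only keeps ancestors that are still at or after the cut, again a $\wordfotwo$-definable side condition. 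Symmetric reasoning handles $\alpha^{\leftarrow\varphi}$.

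Granting the relativization lemma, I would finish as follows. Fix $\wordfotwo$-sentences (i.e.\ formulas evaluated at the first position, or rather their natural ``global'' reading) $\lambda$ for $L$ and $\kappa$ for $K$. Convert $\kappa$ into a one-variable formula $\hat\kappa$ true exactly at the start of a word, then let $\kappa^{\to\varphi}$ be its relativization so that, when $\varphi$ holds at $j$, $\kappa^{\to\varphi}$ holds at $j$ iff $a_j\cdots a_n \in K$. Dually, convert $\lambda$ into a formula $\hat\lambda$ detecting membership and relativize backward to get $\lambda^{\leftarrow\varphi}$ which at position $j$ tests $a_1\cdots a_{j-1}\in L$ (one must be slightly careful that the cut is placed \emph{before} $j$, which is where the ``proper ancestor / strict'' nature of the modalities is used, matching the strict semantics of $\tlf^{-1}$). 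Then
\[
  M \text{ is defined by } \quad \Big(\tlf\varphi \lor \varphi\Big) \ \land\ \neg\big(\text{$\varphi$ holds at two distinct positions}\big) \ \land\ \tlf\big(\varphi \land \lambda^{\leftarrow\varphi}\land \kappa^{\to\varphi}\big)\ \lor\ \big(\varphi\land\lambda^{\leftarrow\varphi}\land\kappa^{\to\varphi}\big),
\]
read at the first position; the ``at most two positions'' clause is vacuous by hypothesis on $\varphi$ but I keep the structure uniform, and the disjunct handles the case where $\varphi$ holds at position $1$ itself.

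The step I expect to be the main obstacle is the inductive relativization of the modalities, specifically making the ``stay on the correct side of the cut'' side conditions genuinely expressible in $\wordfotwo$ using only one occurrence-bounded marker $\varphi$ and the two strict modalities — in particular getting the strictness bookkeeping right so that the prefix tested for $L$ is $a_1\cdots a_{i-1}$ versus $a_1\cdots a_i$ agrees with the problem statement's indexing, and ensuring that when we forbid a modality from crossing the cut we do not accidentally also forbid it from reaching the cut position itself. Everything else (the base cases, the Boolean bookkeeping, and assembling the final formula) is routine once the relativization lemma is in hand.
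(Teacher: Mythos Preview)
Your approach is correct and is the same relativization idea the paper uses, but the paper's execution is considerably simpler. Rather than inducting on formula structure and case-splitting on modality direction, the paper uniformly replaces every subformula $\psi$ of the $L$-defining formula by $\psi \land \tlf\varphi$, and every subformula $\psi$ of the $K$-defining formula by $\psi \land (\varphi \lor \tlf^{-1}\varphi)$; the result is assembled as $\psi_1 \land \tlf(\varphi \land \psi_2)$. Conjoining the cut condition at every node of the syntax tree restricts the domain seen by \emph{all} modalities at once, forward and backward alike, so no per-modality side conditions are needed and your worry that ``mixing both modalities'' blocks naive relativization is unfounded. Your anticipated main obstacle therefore dissolves: the cut predicates are just $\tlf\varphi$ (holding exactly at positions $1,\ldots,i$) and $\varphi \lor \tlf^{-1}\varphi$ (holding exactly at positions $i{+}1,\ldots,n$), already $\wordfotwo$ formulas, and the strict/non-strict bookkeeping takes care of itself. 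A small slip in your sketch: in the suffix relativization $\alpha^{\to\varphi}$, the modality $\tlf$ cannot cross the cut since it moves rightward; it is only $\tlf^{-1}$ that needs restraining. This is harmless, since an unnecessary conjunct does no damage, but it confirms that the uniform substitution is both cleaner and less error-prone than the inductive case analysis.
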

\begin{proof}
  We use  relativization.  We define $\psi_1$ by taking the
  formula defining $L$, and replacing each subformula $\psi$ by $\psi
  \land \tlf \varphi$. Likewise, we define $\psi_2$ by taking the
  formula defining $K$, and replacing each subformula $\psi$ by $\psi
  \land (\varphi \lor \tlf^{-1} \varphi)$. The formula for the
  language in the lemma is then $\psi_1 \land \tlf (\varphi \land
  \psi_2)$.
\end{proof}

For trees, the situation is more complicated. First of all, there are
two notions of composition: concatenation $s+t$ for forests and
composition $pq$ for contexts.  We are interested in generalizing
Lemma~\ref{lemma:conc-words} to composition of contexts.  In our
generalization though, we may need to substitute many trees
simultaneously.  This leads to a slightly less appealing definition,
which follows.

A formula is called \emph{antichain} if in every tree, the set of
nodes where it holds forms an antichain, i.e.~a set (not necessarily
maximal) of nodes pairwise incomparable with respect to the descendant
relation.  This is a semantic property, and may not be apparent just
by looking at the syntax of the formula. For instance, the first two
formulas below are antichain, while the third is not:
\begin{enumerate}[$\bullet$]
\item The node is a leaf: $\neg \tlef \true$.
\item The node is a minimal occurrence of $b$: $b \land \neg \tlf^{-1}
  b$.
\item The node has label $b$.
\end{enumerate}

Using antichain formulas, we define our notion of concatenation.  The
ingredients are:
\begin{enumerate}[$\bullet$]
\item An antichain formula $\varphi$.
\item Disjoint tree languages $L_1,\ldots,L_n$.
\item Leaf labels $a_1,\ldots,a_n$.
\end{enumerate}
Let $t$ be a tree.  We define the tree
\[
   t[(L_1,\varphi) \to a_1,\ldots,(L_n, \varphi) \to a_n]
\]
as follows. For each node $x$ of $t$ where the antichain formula
$\varphi$ holds, we determine the unique $i$ such the tree language
$L_i$ contains the subtree of $x$. If such an $i$ exists, we remove
the subtree of $x$ (including $x$), and replace $x$ by a leaf labeled
with~$a_i$.  Since $\varphi$ is antichain, this can be done
simultaneously for all $x$. Note that the formula $\varphi$ may depend
also on ancestors of $x$, while the languages $L_i$ only talk about
the subtree of $x$.

\begin{lem}[Antichain composition principle]
  Let $\varphi$, $L_1,\ldots,L_n$ and $a_1,\ldots,a_n$ be as above. If
  $L_1,\ldots,L_n$ are tree-definable, and $K$ is a tree-definable
  language, then so is
  \[
    \set{  t :  t [(L_1,\varphi) \to a_1,\ldots,(L_n, \varphi)
      \to a_n]  \in K} \ .
\]
\end{lem}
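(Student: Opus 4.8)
The plan is to mimic the word case (Lemma~\ref{lemma:conc-words}) by relativization, but now carried out on the \emph{tree} structure, and to handle the substitution of the leaf labels $a_i$ by a bounded-lookahead trick. Fix formulas of $\fotwo$ defining $L_1,\dots,L_n$ and $K$. The key observation is that the transformation $t \mapsto t[(L_1,\varphi)\to a_1,\dots]$ does two things at each node $x$ where $\varphi$ holds and whose subtree lies in some $L_i$: it cuts the subtree and puts a fresh leaf label there. I want a single $\fotwo$ formula on the \emph{original} tree $t$ that is equivalent to ``$t[\cdots]\in K$''. First I would observe that the original alphabet $(A,B)$ can be extended: I will allow the new leaf labels $a_1,\dots,a_n$ already to occur in $t$ (or, to be safe, work with an alphabet disjoint from them and treat the $a_i$ as fresh symbols). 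Since the $L_i$ are disjoint and tree-definable, ``the subtree of $x$ is in $L_i$'' is itself expressible in $\fotwo$ evaluated at $x$ — call this formula $\lambda_i$. Then ``$x$ is a node that gets replaced by $a_i$'' is the antichain formula $\chi_i := \varphi \land \lambda_i$, and ``$x$ is a node that gets replaced at all'' is $\chi := \bigvee_i \chi_i$. Because $\varphi$ is antichain, the $\chi_i$ are pairwise incomparable across all $i$ as well.

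Next I would relativize the formula for $K$ to the part of $t$ that survives the cut, namely the nodes that are not strict descendants of any $\chi$-node. A node $x$ survives iff $\neg \tlf^{-1}\chi$ holds at $x$; a surviving node $x$ is one of the new leaves exactly when $\chi$ holds at $x$. So I define a translation on the formula $\psi_K$ defining $K$: replace every occurrence of a leaf label test $a$ (for $a\in A$) by $a \land \neg \chi$ (an old leaf that survives), replace the impossible ``the node has label $a_i$'' — which does not occur in $\psi_K$ over the original alphabet, but would in the transformed one — by $\chi_i$, replace every inner-label test $b$ by $b\land\neg\chi$ (a cut node becomes a leaf, so it loses its inner label), and relativize the modalities: $\tlef\psi \mapsto \tlef(\neg\tlf^{-1}\chi \land \hat\psi)$ and $\tlf^{-1}\psi\mapsto \tlf^{-1}(\hat\psi)$ — here the backward modality needs no guard because every ancestor of a surviving node is surviving. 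Call the result $\hat\psi_K$. Evaluating $\hat\psi_K$ at the root of $t$ should be equivalent to evaluating $\psi_K$ at the root of $t[\cdots]$, by a straightforward induction on the structure of $\psi_K$ using the bijection between the nodes of $t[\cdots]$ and the surviving nodes of $t$.

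The main obstacle I anticipate is bookkeeping around the leaf-vs-inner distinction and the fresh labels $a_i$: in $t[\cdots]$ a cut node is a leaf with label $a_i\in A$, whereas in $t$ the same node may have had an inner label $b\in B$ (and certainly has proper descendants). So the translation must systematically ensure that, at a cut node, only the ``$a_i$'' leaf-label test succeeds and all old-label tests and the $\tlef$ modality fail — which is exactly what the guards $\neg\chi$ and the rewriting of $\tlef$ accomplish. A second, minor point is that the $\chi_i$ should really be checked ``top-down'': since $\varphi$ may mention ancestors of $x$ but the $\lambda_i$ only the subtree of $x$, and since once we are at a cut node we no longer need to inspect its subtree, there is no circularity — the antichain property of $\varphi$ is what makes the replacements non-interfering, and this is used exactly once, when arguing the node sets match up. Once the equivalence $t \models \hat\psi_K \iff t[\cdots]\models\psi_K$ is established, the language $\set{t : t[\cdots]\in K}$ is defined by the $\fotwo$ formula $\hat\psi_K$, completing the proof.
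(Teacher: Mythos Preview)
Your relativization is exactly the approach the paper indicates; its own proof is the single line ``This is proved by a relativization entirely analogous to the one used in Lemma~\ref{lemma:conc-words}.'' Two small points to tidy: the formula $\lambda_i$ does not come for free from $L_i$ being tree-definable---you must also relativize the $L_i$-formula to the subtree below the current $\varphi$-node (using the guard $\varphi \lor \tlf^{-1}\varphi$, which is a second place the antichain property is needed, not ``exactly once''), and since the $a_i$ are already leaf labels in $A$, the test for label $a_i$ in $\psi_K$ should translate to $(a_i \land \neg\chi)\lor\chi_i$ rather than just $\chi_i$.
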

\begin{proof}
  This is proved by a relativization entirely analogous to the one
  used in Lemma~\ref{lemma:conc-words}.
\end{proof}

The point of this lemma is that the languages $L_i$ are taken out of
their context inside the tree $t$. For instance $L_i$ can say
something like: ``the root has label $a$ and a child with label $b$'',
\[
  L_i = \tlef (b \ \land\  \tlf^{-1} a\  \land\  \neg \tlf^{-1}
  \tlf^{-1} \true)\ ,
\]
while in general the property ``a node in the tree that has label $b$
and a child with label $b$'' cannot be expressed in $\fotwo$.

\section{Forest algebra}
\label{sec:forest-algebra}
To represent languages of trees, we will be using forest algebra. We
feel that using forest algebra instead of automata simplifies the
combinatorics used in our characterization.  Furthermore, when using
forest algebra, the key properties from Theorem~\ref{thm:main} can be
stated in terms of identities.

Here we only sketch out the definitions and basic properties; the
reader is referred to~\cite{forestalgebra} for more details. The
algebras described in~\cite{forestalgebra} differ slightly from those
used here---mainly in that we do not allow empty forests here---but
the results carry over into this setting.

A forest algebra is to a regular language of unranked trees as a
semigroup is to a regular language of words. Formally, a forest
algebra is an algebra with two sorts $(H,V)$, along with some
operations that satisfy a number axioms. While defining the operations
and axioms, we will illustrate them on an important example, called
the free forest algebra, where $H$ is the set of all nonempty forests,
and $V$ is the set of all, possibly empty, contexts.

The operations and axioms of forest algebra are presented below.
Elements of $H$ will be denoted by $h,g,f$ and elements of $V$ will be
denoted by $v,w,u$.
\begin{enumerate}[$\bullet$]
\item A composition operation $+$ on $H$.  This operation is required
      to be associative, i.e.~$h+(g+f)=(h+g)+f$ holds for all $f,g,h
      \in H$. This makes $H$ a semigroup, called the \emph{horizontal
        semigroup}, and justifies the notation $h+g+f$. In the free
      forest algebra, $+$ is forest concatenation. We do not require
      $H$ to contain a neutral element, e.g.~~there is no empty forest
      in the free forest algebra.
\item A composition operation $\cdot$ on $V$. Again, this is required
      to be associative. We omit the $\cdot$ symbol, writing $vw$
      instead of $v \cdot w$, for $v,w \in V$. Furthermore, we require
      there to be a neutral element $\hole \in V$, i.e.~an element
      satisfying $v\cdot\hole=\hole\cdot v=v$ for all $v \in V$. In
      particular, $V$ is a monoid, called the \emph{vertical monoid}.
      In the free forest algebra, $\cdot$ is context composition,
      while $\hole$ is the empty context.
\item An insertion operation $V \to H \to H$. The result of this
      insertion is denoted by $vh \in H$.  The empty context acts as
      the identity of this operation, i.e.~$\hole h=h$. The insertion
      operation must be a left action, i.e.~it must satisfy
      $(vw)h=v(wh)$ for $v,w \in V$ and $h \in H$, which justifies the
      notation $vwh$. In the free forest algebra, the left action is
      substituting a forest into a context. There is an faithfulness
      requirement:  distinct contexts $v,w \in V$ must induce
      different functions.
\item An operation $\mathit{left}: H \times V \to V$. This operation must
      satisfy $\mathit{left}(h,v)g = h+ vg$ for $v \in V$ and  $g,h \in H$. Thanks to this
      axiom, we can without ambiguity write $h+v$ to denote the
      element $\mathit{left}(h,v)$.  In the free forest algebra, $h+v$ is the
      context obtained from $v$ by prepending the forest $h$ (next to
      the root, not the hole). In a similar way we define $v+ h$, in
      terms of an operation $\mathit{right}$. 
\end{enumerate}

As demonstrated above, the free forest algebra is a forest algebra.
Clearly the free algebra depends on the leaf labels $A$ and inner node
labels $B$ (and only on these); once these are given, the free algebra is denoted by
$\freef A B$. When describing a forest algebra, we usually only give
names to the carrier sets $H$ and $V$, leaving the operations
implicit.

Let $(H,V)$ and $(G,W)$ be two forest algebras.  A \emph{forest
  algebra morphism}
\[
\alpha : (H,V) \to (G,W)
\]
is a pair of functions
\[
\alpha=(\alpha_H,\alpha_V) \qquad
  \alpha_H : H \to G \qquad \alpha_V : V \to W
\]
that preserve all operations in the signature, namely, composition $+$
in $H$,  composition $\cdot$  in $V$, insertion, and the $\mathit{left},
\mathit{right}$ operations.  For instance, preserving insertion is:
\[
   \alpha_H(vh) = \alpha_V(v)(\alpha_H(h))\ .
\]
To avoid clutter, we omit the subscripts, and write $\alpha(h)$
instead of $\alpha_H(h)$, likewise for~$v$.  

If $\alpha$ is a morphism, then the \emph{type under $\alpha$} of a
forest $t$ is simply the value $\alpha( t)$. Whenever the morphism
$\alpha$ is clear from the context, we omit the qualifier ``under
$\alpha$''. 

In this paper, a forest algebra will either be a free forest algebra,
or a finite forest algebra. In the first case, elements of the first
sort will be called \emph{forests} and denoted by $s,t$, while
elements of the second sort will be called \emph{contexts}, and
denoted by $p,q$. In the second case, of a finite forest algebra,  elements of the first
sort will be called \emph{forest types} and denoted by $f,g,h$, while
elements of the second sort will be called \emph{context types}, and
denoted by $u,v,w$.

\subsection{Equivalence with regular languages}
In this section we show that forest algebras provide an equivalent
description of regular tree languages.  Although this has already been
shown in~\cite{forestalgebra}, we present the proof here for two
reasons. First, our definition is slightly different from the one
in~\cite{forestalgebra}, where a neutral element was required in $H$.
Second, the notion of semigroup automaton used in the equivalence will
be used later on in the paper.

The point of forest algebras is to recognize forest languages.  Let
$L$ be a set of forests over labels $(A,B)$ and let $(H,V)$ be a
finite forest algebra.  We say a morphism
\[
  \alpha : \freef A B \to (H,V)
\]
\emph{recognizes} a forest language $L$ if membership $ t \in L$
depends only on the value $\alpha(t)$. In this case, we also say that
the algebra $(H,V)$ recognizes the language $L$.  Note that this definition is
for languages of forests, and not languages of trees, as in the logic
$\fotwo$. We will deal with this discrepancy in
Section~\ref{sec:tree-defin-forest}.

Below we show that forest algebras recognize exactly the regular
forest languages. What is a regular forest language? The definition
used here, of a semigroup automaton, is chosen so that the translation
to forest algebra is easiest.  A \emph{semigroup automaton} is a type
of bottom-up finite automaton that can be used to recognize tree and
forest languages. Let $(A,B)$ be an alphabet. A semigroup automaton
$\Aa$ over $(A,B)$ is defined by a finite semigroup $H$, whose
operation is denoted additively by $+$, along with two mappings (which
describe the initial states and transitions, respectively):
\[
\beta_A : A \to H \qquad
\beta_B : B \to H^H
\]
The purpose of the automaton is to uniquely associate a type
$\beta( t) \in H$ to every forest $ t$.  This is done using
the following rules:
\[\eqalign{
  \beta(a) &= \beta_A(a) \cr
  \beta(s_1 + \cdots + s_n) &= \beta(s_1) + \cdots + \beta(s_n)\cr
  \beta(b t)  &= \beta_B(b) (\beta( t))\ .\cr
  }
\]
Recall that in the last line above, $bt$ is a tree that has $b$ in the
root and the forest $t$ below.

An automaton recognizes a forest language $L$ if
membership $t \in L$ depends only on the value $\beta$. In other
words, one can choose a set of accepting elements $F \subseteq H$ such
that a forest $t$ belongs to $L$ if and only the value $\beta(t)$
belongs to $F$. The definition can be modified for  recognizing tree languages  by
requiring the equivalence $t \in L \iff \beta(t) \in F$ to hold only
for trees. Note that even when recognizing a tree language, a
semigroup automaton is still obliged to assign a value from $H$ to every forest.

It is not
difficult to show that this definition is equivalent to other existing
automata models for unranked trees, although there may be an
exponential blowup when translating to semigroup automata.

\begin{thm}\label{thm:equiv-with-regul}
  A forest language is regular if and only if it is recognized by a
  finite forest algebra. 
\end{thm}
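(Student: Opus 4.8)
The plan is to prove the two implications separately, in each case constructing the target object directly.

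\medskip
\noindent\textbf{From semigroup automata to forest algebras.}
First I would show that a regular language---that is, one recognized by a semigroup automaton $\Aa$ with semigroup $H$ and maps $\beta_A,\beta_B$---is recognized by a finite forest algebra. The natural candidate for the horizontal semigroup is $H$ itself. For the vertical monoid, observe that every context $p$ induces a function $H\to H$, namely $h\mapsto\beta(p(h))$ where we think of the hole as accepting an arbitrary forest type; so I would take $V$ to be the submonoid of the transformation monoid $H^H$ generated by the images of the generating contexts (a single label $b$ applied to a context gives the map $g\mapsto\beta_B(b)(g)$, and one also needs the maps $g\mapsto h+g$ and $g\mapsto g+h$ coming from the $\mathit{left},\mathit{right}$ operations, plus closure under composition). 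The action $vh$ is just function application, $\hole$ is the identity map, and $\mathit{left}(h,v),\mathit{right}(h,v)$ are defined by precomposing with $g\mapsto h+g$ and $g\mapsto g+h$. One checks the forest-algebra axioms hold essentially by construction, and that the map sending a forest $t$ to $\beta(t)\in H$ and a context $p$ to its induced function is a morphism $\freef A B\to(H,V)$ refining $\beta$; hence it recognizes $L$. Finiteness of $V$ follows since $H^H$ is finite.

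\medskip
\noindent\textbf{From forest algebras to semigroup automata.}
Conversely, suppose $\alpha:\freef A B\to(H,V)$ recognizes $L$, with $(H,V)$ finite. I would build a semigroup automaton whose semigroup is $H$, with $\beta_A(a)=\alpha(a)$ for leaf labels, and for an inner label $b\in B$ the map $\beta_B(b):H\to H$ defined by $\beta_B(b)(h)=\alpha(b)\cdot h$, using that $b$ (with the hole as its single child) is a context, so $\alpha(b)\in V$ acts on $H$. A straightforward induction on the structure of a forest shows that the resulting $\beta$ coincides with $\alpha$ on forests, so $L$ is recognized with accepting set $F=\alpha(L)\subseteq H$.

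\medskip
\noindent\textbf{Remarks on the main obstacle.}
The bookkeeping half---the forest-algebra-to-automaton direction---is essentially immediate. The genuine work is in the automaton-to-algebra direction, and the subtle point there is the \emph{faithfulness} requirement in the definition of forest algebra: distinct contexts must induce distinct functions on $H$. Defining $V$ as a submonoid of $H^H$ builds faithfulness in for free, but one must then check carefully that this $V$ is closed under all the operations (in particular under $\mathit{left}$ and $\mathit{right}$, which involve the horizontal structure, not just composition of transformations) and that the pair of maps really preserves every operation in the signature. A second minor wrinkle, flagged already in the excerpt, is the absence of empty forests: one must make sure the generators suffice to produce every context type that actually arises, without ever needing the ``empty forest'' as an intermediate value. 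Neither issue is deep, but getting the definition of $V$ and the verification of the morphism conditions exactly right is where the care is needed.
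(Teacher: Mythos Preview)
Your proof is correct and follows the same overall strategy as the paper: in the automaton-to-algebra direction, extend $\beta$ to contexts by letting each context act as a function on $H$, and in the algebra-to-automaton direction, read off the semigroup automaton directly from the morphism.

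There is one simplification worth noting. You build $V$ as a carefully chosen submonoid of $H^H$, and then worry about faithfulness and about closure under $\mathit{left}$ and $\mathit{right}$. The paper simply takes $V=H^H$, the full transformation monoid. With this choice, faithfulness is tautological (distinct elements of $H^H$ are distinct functions), and closure under $\mathit{left}$ and $\mathit{right}$ is immediate since $g\mapsto h+v(g)$ is always some function in $H^H$. The morphism need not be surjective for recognition, so there is no cost to enlarging $V$. This dissolves both of the obstacles you flag in your final remarks; the direction you call ``the genuine work'' becomes a two-line observation once $V=H^H$ is chosen.
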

\begin{proof}
Once we have a semigroup automaton, we can extend the mapping $\beta$
so that  contexts also get values, namely values in $H^H$. A context $p$ is assigned the
following mapping $\beta(p) \in H^H$:
\[
 h \mapsto \beta(p  t)\ ,
\]
where $ t$ is some forest with $\beta( t)=h$ (the choice of $ t$ does
not change this value). It is easy to see that the mapping $\beta$
(when seen as a mapping on both forests and contexts) is a forest
algebra morphism
  \[
    \beta : \freef A B \to (H,H^H)\ .
  \]
  This shows the harder direction in the proof of
  Theorem~\ref{thm:equiv-with-regul}. The other direction, from a
  forest algebra to a semigroup automaton, is immediate.
\end{proof}

\subsection{Syntactic algebra}

The syntactic forest algebra of a forest language $L$ is a
canonical forest algebra that recognizes the language. It is defined
using the following Myhill-Nerode equivalence over forests and
contexts. Two forests $ s, t$ are considered equivalent if for every
context $p$, either both or  neither $ps$ nor $pt$ belongs to $L$.
 Two contexts $p,q$ are considered equivalent if for every forest
  $ t$, the forests $p  t$ and $q  t$ are equivalent in
  the above sense.
  
  It turns out that the above defined equivalences are a congruence
  with respect to all operations in a forest algebra; therefore a
  quotient forest algebra can be defined, where elements of $H$ are
  equivalence classes of forests, and elements of $V$ are equivalence
  classes of contexts. This quotient forest algebra is called the
  \emph{syntactic forest algebra} of $L$. The \emph{syntactic
    morphism} is the morphism that assigns to each forest
  (resp.~context) its equivalence class. The syntactic morphism
  recognizes $L$, furthermore it is optimal in the sense that the
  syntactic morphism factors through any morphism recognizing $L$,
  i.e.~if $\beta$ is a morphism recognizing $L$, and $\alpha$ is the
  syntactic morphism of $L$, then there is a (unique) morphism $\gamma$ with
  $\alpha = \gamma \circ \beta$. In particular, the syntactic forest
  algebra is a morphic image of any forest algebra recognizing $L$,
  and a language has a finite syntactic algebra if and only if it is
  regular.

\subsection{Green's relations for trees}\label{sec:greens-relat-trees}
Fix a forest algebra $(H,V)$.  In this section we introduce two
preorders on $V$ and $H$ that will be used in the paper.

We say that context type $v \in V$ is \emph{reachable} from a context
type $w \in V$ if $v=wu$ holds for some context type $u \in V$.  A
\emph{context component} is a maximal set of mutually reachable
context types. Stated differently, two context types $v,w$ are in the
same context component if the ideals $vV$ and $wV$ are equal. Since
reachability is transitive and reflexive, it induces an order (not
necessarily linear) on context components.

\newcommand{\rreach}{\sqsubseteq} \newcommand{\reachn}{\sqsubsetneq} 
\newcommand{\reach}{\sqsupseteq} \newcommand{\rreachn}{\sqsupsetneq}

We say a forest type $g \in H$ is \emph{reachable} from a forest type
$h \in H$ if $g=uh$ holds for some context type $u \in V$. A
\emph{forest component} is a maximal set of mutually reachable
forests.  Stated differently, two forest types $g,h$ are in the same
forest component if the ideals $Vg$ and $Vh$ are equal. As for context
types, forest components are ordered by reachability.  Note that $g+h$
is reachable from $h$, since we can take the context type $u$ to be $g
+ \hole$.

These two preorders are related to Green's relations used in semigroup
theory. Actually, reachability on contexts simply is the $\mathcal
R$-order on the semigroup $V$. The reachability relation on $H$ is not
one of Green's relations, since its definition involves the two sorts
$H$ and $V$ in the forest algebra.

\section{Tree-Definable vs Forest-Definable}
\label{sec:tree-defin-forest}
A tree language $L$ is \emph{tree-definable} if there is a formula of
$\fotwo$ that is true exactly (in the root of) trees in $L$.  In this
paper, it will sometimes be convenient to talk about $\fotwo$ formulas
defining properties of forests (and not only trees). We say a forest
language $L$ is \emph{forest-definable} if $L$ is a boolean
combination of languages of the form ``some tree in the forest
satisfies $\varphi$'', with $\varphi$ a formula of $\fotwo$.  Such a
boolean combination will be called a \emph{forest formula}.  For
instance, the following property of a forest $t_1 + \cdots + t_n$ is
forest-definable: all trees $t_1,\ldots,t_n$ contain a leaf with label
$a$, and at least one of these trees has root label $b$.  Any nonempty
tree language violates the following property, which is true for
forest-definable languages:
 \[
  t+ t \in L \quad \miff \quad t \in L \ ,
\]
for the simple reason that $t+t$ is not a tree. Therefore no nonempty
tree language is forest-definable. For the same  reason, no nonempty
forest-definable language is tree-definable. 

In this paper, we will present a decidable characterization for
forest-definable languages. Thanks to the following result, this will
also give us a decidable characterization of tree-definable languages.
\begin{prop}\label{prop:from-tree-to-forest}
  Let $L$ be a tree language over $(A,B)$. The following conditions
  are equivalent:
  \begin{enumerate}[$\bullet$]
  \item $L$ is  tree-definable.
  \item For each inner node label $b \in B$, the forest language
    $\set{ t : b t \in L}$ is forest-definable.
  \end{enumerate}
\end{prop}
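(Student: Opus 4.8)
The plan is to prove both implications by relativization, as in the proof of Lemma~\ref{lemma:conc-words}, using repeatedly the following fact: in a tree $bt$, a node $y$ of $t$ has exactly one proper ancestor lying outside its own tree of the forest $t$ — namely the root of $bt$, which carries label $b$ and has no ancestors of its own — so whether a $\fotwo$ formula holds at $y$ in $bt$ is determined by the subtree of $y$ together with the subformulas holding at the root of $bt$. For the implication from tree-definable to forest-definable, fix a $\fotwo$ formula $\psi$ defining $L$ and fix $b\in B$. For a forest $t$ let $\sigma_t\subseteq\mathrm{Sub}(\psi)$ be the set of subformulas of $\psi$ holding at the root of $bt$, and for an arbitrary $\sigma\subseteq\mathrm{Sub}(\psi)$ define a genuine $\fotwo$ formula $\chi^\sigma$ by induction on subformulas: labels, boolean connectives and $\tlef$ are kept unchanged, while each $\tlf^{-1}\xi$ becomes $\true$ if $\xi\in\sigma$ and $\tlf^{-1}(\xi^\sigma)$ otherwise. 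A structural induction shows that when $\sigma=\sigma_t$, the formula $\chi^\sigma$ holds at a node $y$ of $t$ iff $\chi$ holds at $y$ in $bt$ (the replaced root contributes precisely the ancestor whose satisfied subformulas are those in $\sigma$). Hence $\sigma_t$ is the unique $\sigma$ that is \emph{consistent} (contains $b$ and no other label, contains no $\tlf^{-1}\xi$, and respects the boolean connectives, e.g.\ $\neg\xi\in\sigma$ iff $\xi\notin\sigma$) and satisfies, for each subformula $\tlef\xi$ of $\psi$, that $\tlef\xi\in\sigma$ iff some node of $t$ satisfies $\xi^\sigma$ — and this last condition is the forest formula ``some tree of the forest satisfies $\xi^\sigma\lor\tlef\xi^\sigma$''. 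Since there are finitely many consistent $\sigma$, each set $\{t:\sigma_t=\sigma\}$ is forest-definable, and $\{t:bt\in L\}=\{t:\psi\in\sigma_t\}$ is the finite union of those with $\psi\in\sigma$.

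For the converse, write $L=(L\cap A)\cup\bigcup_{b\in B}\{bt:t\in L_b\}$, where $L_b=\{t:bt\in L\}$ is forest-definable by hypothesis; since $L\cap A$ is defined by $\bigvee_{a\in L\cap A}a$, it suffices to define each tree language $\{bt:t\in L_b\}$. Let $\delta:=\tlf^{-1}\true\land\neg\tlf^{-1}\tlf^{-1}\true$, which holds exactly at the children of the root, and for a $\fotwo$ formula $\varphi$ let $\bar\varphi$ be obtained by keeping labels, boolean connectives and $\tlef$, and replacing each $\tlf^{-1}\chi$ by $\tlf^{-1}(\bar\chi\land\tlf^{-1}\true)$ — the extra conjunct prevents the backward modality from reaching the root. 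One then checks that at any non-root node $y$, $\bar\varphi$ holds in $bt$ iff $\varphi$ holds at $y$ inside the subtree of the depth-one ancestor $c$ of $y$. Writing $L_b$ as a boolean combination of atoms ``some tree of the forest satisfies $\varphi_i$'', the tree language $\{bt:t\in L_b\}$ is then defined at the root by $b$ conjoined with the same boolean combination of the formulas $\tlef(\delta\land\bar\varphi_i)$, each of which expresses ``some child of the root satisfies $\varphi_i$ in its subtree''. Disjoining these over $b\in B$ and adding $\bigvee_{a\in L\cap A}a$ yields a $\fotwo$ formula for $L$.

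The technical heart is checking the two relativizations, and specifically their $\tlf^{-1}$ clauses. In the first direction the delicate point is that $\sigma_t$ is pinned down only through conditions that themselves mention the relativized formulas $\xi^\sigma$, an apparent circularity; it is resolved by observing that $\xi^\sigma$ depends on $\sigma$ only through its values on proper subformulas of $\xi$, which makes the induction identifying $\sigma_t$ well founded. In the second direction one must verify that $\bar\varphi$ neither discards nor spuriously introduces the contribution of the root, which is exactly the role of the $\tlf^{-1}\true$ guard. Everything else — the closure of forest-definable and tree-definable languages under boolean combinations, and the locality of $\fotwo$ within a single tree of a forest — is routine.
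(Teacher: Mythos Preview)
Your argument is correct in both directions, and for the bottom-up implication it is essentially the paper's proof: both build the operator $\tlx K$ by relativizing to non-root nodes (the paper conjuncts every subformula with $\tlf^{-1}\true$, you guard only the targets of $\tlf^{-1}$; these are equivalent), and both use the depth-one test $\tlf^{-1}\true\land\neg\tlf^{-1}\tlf^{-1}\true$ to locate the children of the root.

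For the top-down implication your route is genuinely different from the paper's. The paper argues semantically: it introduces the equivalence $s\sim t$ meaning $b(\hole+r)s\in L\iff b(\hole+r)t\in L$ for every forest $r$, observes that $\sim$ is coarser than Myhill--Nerode and hence has finitely many classes, and invokes the (unproved, declared ``standard'') fact that $\{t:pt\in L\}$ is tree-definable for each fixed context $p$ to conclude that every $\sim$-class is tree-definable; membership of $bt$ in $L$ then depends only on the \emph{set} of $\sim$-classes among the trees of $t$, which is a forest formula. Your proof is instead a direct syntactic relativization via the subformula type $\sigma_t$ of the root of $bt$, with the key observation that $\xi^{\sigma}$ depends only on $\sigma$'s values on proper subformulas of $\xi$, so $\sigma_t$ can be pinned down by an inductively well-founded fixpoint condition. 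The paper's version is shorter but leans on an external closure fact; yours is self-contained and yields the forest formula explicitly.
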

\begin{proof}
  
  We begin by showing that the first property implies the second. Assume then that $L$ is
  tree-definable, and fix some $b \in B$. We need to show that the
  forest language $\set{t : b t \in L}$ is forest definable. 
  
  Let $P$ be the set of contexts of the form $p=b(\hole + t)$, where
  $t$ is a forest. Consider the following equivalence relation on
  trees:
  \[
    s \sim t \qquad \miff \qquad ps \in L \iff pt \in L \quad\mbox{ holds for all
      }p \in P\ .
  \]
  This equivalence relation has only finitely many classes, since it
  is coarser than the Myhill-Nerode equivalence relation used in the
  definition of syntactic algebra.  Note that we would get the same
  equivalence relation by also considering contexts of the form $p=(s
  + \hole +t)$, since $\fotwo$ is invariant under reordering siblings.
  Furthermore, each of these equivalence classes is tree-definable,
  thanks to the following fact: if $p$ is a context and $L$ a
  tree-definable language then the set of trees $t$ with $pt \in L$ is
  tree-definable. The standard proof of this fact is omitted here.
  For any forest $t=t_1 + \cdots + t_n$, membership $bt \in L$ only
  depends on the equivalence classes under $\sim$ of the trees
  $t_1,\ldots , t_n$ that the constitute the forest $t$.  Since
  $\fotwo$ formulas are invariant under duplicating and reordering
  sibling trees, it is only the set of equivalence classes that
  counts, which can be described by a boolean combination of languages
  of the form required in forest-definable languages.

  We now do the bottom-up implication.  It suffices to show that if a
  forest language $L$ is forest-definable, then for any inner node
  label $b \in B$, the tree language $\set{ b t : t \in L}$ is
  tree-definable.  The key step is that if a tree language $K$ is
  tree-definable, then the following tree language:
    \[
      \tlx K =\set{b(t_1+ \cdots + t_n) :  b \in B, \exists i.\ t_i \in K}
    \]
    is also tree-definable. Once we demonstrate how to write a formula
    for $\tlx K$, the formula tree-defining $b L$ can be obtained from
    the formula forest-defining $L$.
    
    Note that definability of the language $\tlx K$ does not mean we
    can add the child operator to the logic. This is because $\tlx K$
    uses the child only at a fixed depth. For instance, the property
    ``some node at depth 4 has the same label as its parent'' is
    tree-definable, contrary to the property ``some node has the same
    label as its parent''.
    
    The formula for $\tlx K$ can be obtained from the antichain
    composition principle, but we do a direct construction here. Let
    $\varphi$ be the formula defining $K$. We define $\hat \varphi$ to
    be the formula obtained from $\varphi$ by replacing every
    subformula $\psi$ by $ \psi \land \tlf^{-1} \true$. This way,
    quantification in $\hat \varphi$ is relativized to non-root nodes.
    Finally, the formula for $\tlx K$ is
    \[
      \tlef \big((\tlf^{-1} \true) \land (\neg \tlf^{-1} \tlf^{-1} \true)\land
      \hat \varphi\big)\ .
    \]
    The above formula nondeterministically  picks a  successor $x$ of the
    root, and then tests if $\hat \varphi$ holds in $x$. Since  $\hat
    \varphi$ is relativized to non-root nodes, evaluation of $\hat
    \varphi$ will never leave the subtree of $x$.
      \end{proof}

    \section{The identities and the main result}
In this section we state our main result, the decidable
characterization of the logic $\fotwo$.

The characterization uses a relation~$\mainle$ over contexts in a
forest algebra. The idea is that $u \mainle w$ holds if the context $u$ can be
obtained from the context $w$ by removing forests that are siblings of the main
path (recall that the main path contains ancestors of the hole). Let
$(H,V)$ be a forest algebra. For $u,w \in V$, we write $u \mainle w$
if $u, w$ can be decomposed as
\[
 u = v_0v_1 \cdots v_n \qquad   w=v_0(h_1 + v_1) \cdots (h_n +v_n) 
\]
for some $v_0,\ldots,v_n \in V$ and $h_1,\ldots,h_n \in H$.  The
reason why we have $v_0$ above, and not $h_0 + v_0$, is that a context
type can be empty, but there is no empty forest type.  The following
lemma shows that the relation $\mainle$ can be calculated in
polynomial time using a least fixpoint algorithm:
\begin{lem}\label{lemma:fixpoint}
The relation $\mainle$ is the least relation $R \subseteq V\times V$
such that:
\[\eqalign{
(v,v),(v,v+h),(v,h+v) \in R &\qquad \mbox{for }v \in V,\ h \in H\cr
(v,v'),(w,w') \in R \Rightarrow  (vw,v'w') \in R& \qquad \mbox{for
}v,v',w,w' \in V\ .}
\]
\end{lem}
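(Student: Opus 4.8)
The plan is the standard two-inclusion argument for a least-fixpoint characterization. The relations $R\subseteq V\times V$ satisfying the two displayed closure conditions are closed under intersection, so a least one, call it $R_0$, exists; it then suffices to prove (i) that $\mainle$ itself satisfies both conditions, which gives $R_0\subseteq\mainle$, and (ii) that every pair of $\mainle$ already lies in $R_0$, which gives the reverse inclusion. Direction (ii) is the easy one: given $u\mainle w$, fix a witnessing decomposition $u=v_0v_1\cdots v_n$, $w=v_0(h_1+v_1)\cdots(h_n+v_n)$. The base rules put $(v_0,v_0)$ and each $(v_i,h_i+v_i)$ into $R_0$, and applying the product rule $n$ times --- first to $(v_0,v_0)$ and $(v_1,h_1+v_1)$, then to the result and $(v_2,h_2+v_2)$, and so on --- gives $(v_0v_1\cdots v_n,\,v_0(h_1+v_1)\cdots(h_n+v_n))=(u,w)\in R_0$. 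This is an immediate induction on $n$.

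For direction (i), the three base pairs are witnessed by minimal decompositions read off the definition of $\mainle$: $(v,v)$ is the case $n=0$ with $v_0=v$, while $(v,h+v)$ and $(v,v+h)$ are instances with $n=1$ (for $(v,h+v)$, take $v_0=\hole$, $v_1=v$, $h_1=h$, so that $u=\hole v=v$ and $w=\hole(h+v)=h+v$). The real content is closure under the product rule. Suppose $u_1\mainle w_1$ and $u_2\mainle w_2$ with witnessing decompositions $u_1=v_0\cdots v_n$, $w_1=v_0(h_1+v_1)\cdots(h_n+v_n)$ and $u_2=v_0'\cdots v_m'$, $w_2=v_0'(h_1'+v_1')\cdots(h_m'+v_m')$. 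Then $u_1u_2=v_0\cdots v_n v_0'\cdots v_m'$, and using the forest-algebra identity $(h+v)w=h+vw$, i.e.\ $\mathit{left}(h,v)w=\mathit{left}(h,vw)$, to absorb $v_0'$ into the factor preceding it, $w_1w_2=v_0(h_1+v_1)\cdots(h_{n-1}+v_{n-1})(h_n+v_nv_0')(h_1'+v_1')\cdots(h_m'+v_m')$. Taking $v_0,\ldots,v_{n-1}$, then $v_nv_0'$, then $v_1',\ldots,v_m'$ as the new sequence of factors and $h_1,\ldots,h_n,h_1',\ldots,h_m'$ as the new sequence of forests exhibits $u_1u_2$ and $w_1w_2$ in the shape required by the definition, so $u_1u_2\mainle w_1w_2$.

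The step that actually needs care is this last computation at the seam, and in particular the degenerate cases $n=0$ or $m=0$, where there is no factor $h_i+v_i$ into which to merge $v_0'$ (resp.\ no factor $h_j'+v_j'$): if $n=0$ then $w_1=v_0=u_1$ and one simply prepends $v_0$ to the decomposition of $(u_2,w_2)$, taking $v_0v_0'$ as the new leading factor; the case $m=0$ is symmetric, and $n=m=0$ is trivial. This is exactly why the definition of $\mainle$ keeps a bare leading factor $v_0$ rather than writing $h_0+v_0$: there is no empty forest type to supply a ``missing'' $h_0$, so the head of the product must be handled separately in precisely this way. Apart from this bookkeeping everything is a direct calculation; since $R_0$ is computed by closing the at most $|V|^2$ pairs under the finitely many rule instances, the characterization also yields the polynomial-time algorithm mentioned before the lemma.
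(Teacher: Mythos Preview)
Your proposal is correct and follows exactly the two-direction strategy sketched in the paper's proof: induction on $n$ for $\mainle\subseteq R_0$, and closure of $\mainle$ under the rules (equivalently, induction on derivation length) for $R_0\subseteq\mainle$. You supply the details the paper omits, in particular the identity $(h+v)w=h+vw$ (which follows from the $\mathit{left}$ axiom together with faithfulness) that lets the two witnessing decompositions be spliced at the seam, and you correctly handle the degenerate cases $n=0$ and $m=0$; this is precisely what the paper's one-line ``induction on the number of steps in the derivation'' is hiding.
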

\begin{proof}
  The implication from $(v,w) \in R$ to $v \mainle w$ is proved by
  induction on the number of steps in the derivation. The converse
  implication is proved by induction on $n$ in the definition of
  $\mainle$.
\end{proof}

The relation $\mainle$ is transitive in some forest algebras,
including all free forest algebras. However, in general it need not be
transitive, as illustrated by the following example.  Let the leaf
alphabet $A$ be $\set{a_1,a_2}$ and let the inner node alphabet $B$ be
$\set b$. Consider the forest language $L$: ``the forest does not
contain both labels $a_1$ and $a_2$ at the same time, and every node
with label $b$ has a sibling with label $a_1$ or $a_2$''.  Let
$\alpha$ be the syntactic morphism of this language.  Consider the
following four contexts: \medskip

  \begin{center}
    \includegraphics[scale=0.65]{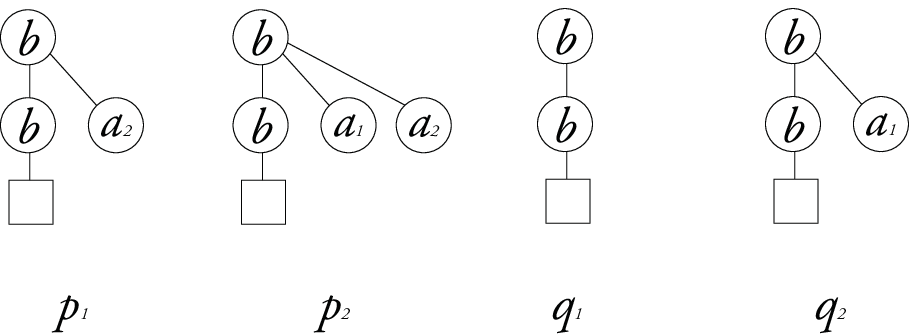}
  \end{center}
  Clearly we have $\alpha(p_1) \mainle \alpha(p_2)$ and $\alpha(q_1)
  \mainle \alpha(q_2)$. We claim that $\alpha(p_2)=\alpha(q_1)$.
  Indeed, both contexts are ``error'' contexts, i.e.~for any context
  $r$ and forest $t$ we have $rp_2t,rq_1t \not \in L$. Therefore, if
  $\mainle$ were a transitive relation, we would have $\alpha(p_1)
  \mainle \alpha(q_2)$. This, however, cannot hold, since otherwise we
  could construct a tree in $L$ with both $a_1$ and $a_2$ labels.

We are now ready to state the main theorem of this paper:
\begin{thm}\label{thm:main}
  A language is forest-definable in $\fotwo$ if and only if its
  syntactic algebra satisfies the following identities:
  \begin{equation}
    \label{eq:apcom}
    h+h=h \qquad g+h = h+g
  \end{equation}
  \begin{equation}
    \label{eq:da}
    (vw)^\omega = (vw)^\omega w (vw)^\omega\ .
  \end{equation}
  \begin{equation}
    \label{eq:special}
    \begin{array}{cc}
      (u_1 w_1 )^\omega (u_2 w_2 )^\omega = 
      (u_1 w_1 )^\omega u_1 w_2  (u_2 w_2 )^\omega  
      \qquad \mbox{if $u_1 \mainle u_2, w_1 \mainle w_2$}\ .
    \end{array}
  \end{equation}
\end{thm}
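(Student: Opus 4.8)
The plan is to prove the two directions of Theorem~\ref{thm:main} separately, with the soundness direction being routine and the completeness direction carrying essentially all of the difficulty.

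\medskip

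\noindent\textbf{Soundness (definable $\Rightarrow$ identities).} First I would check that every forest-definable language satisfies \eqref{eq:apcom}, \eqref{eq:da} and \eqref{eq:special}. Since $\fotwo$ cannot see the difference between one copy and several copies of a sibling subtree, nor the order of siblings, a forest formula cannot distinguish $t+t$ from $t$ or $g+h$ from $h+g$; this gives \eqref{eq:apcom}. For \eqref{eq:da} and \eqref{eq:special} the idea is that a $\fotwo$ formula reading a context built from a large iterate $(vw)^\omega$ or $(u_1w_1)^\omega(u_2w_2)^\omega$ can only "count up to the operator nesting depth" along the main path, using $\tlef$ and $\tlf^{-1}$ to jump over blocks; I would make this precise by an Ehrenfeucht--Fra\"iss\'e-style argument (or directly by induction on formulas, relativizing as in Lemma~\ref{lemma:conc-words}) showing that the two sides of each identity are indistinguishable by formulas of bounded size once $\omega$ is large enough. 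The condition $u_1 \mainle u_2$, $w_1 \mainle w_2$ is exactly what guarantees that the "extra" sibling forests appearing in $u_2,w_2$ but not in $u_1,w_1$ can be matched up, so that the substitution of $u_1w_2$ for one block in \eqref{eq:special} is invisible to a formula that has already exhausted its budget jumping through the surrounding $\omega$-blocks.

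\medskip

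\noindent\textbf{Completeness (identities $\Rightarrow$ definable).} This is the heart of the paper. Assume the syntactic algebra $(H,V)$ of $L$ satisfies \eqref{eq:apcom}--\eqref{eq:special}, and let $\alpha$ be the syntactic morphism. I would prove by induction that for every element of $H$ (equivalently, every $\sim_\alpha$-class of forests) the corresponding language is forest-definable; since $L$ is a union of such classes this suffices. The induction should be organized along Green's relations from Section~\ref{sec:greens-relat-trees}: the outer induction is on the forest component of the target type (or on the size of the algebra), and within a fixed component one does a secondary induction on context components. The key structural tool is the antichain composition principle: identity \eqref{eq:da} is the DA identity, which on the word/context level lets one locate canonically, inside any forest, the "first" node at which the type leaves a given ideal or enters a given component, and such a node is definable by an antichain formula of the form "the minimal place where $\tlf^{-1}$ over previously-built formulas forces a type jump". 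One then peels the tree at those antichain nodes, recursing into strictly smaller subalgebras/components on the pieces below, and into a strictly smaller context algebra on the piece above; the antichain composition principle reassembles the formulas. Identity \eqref{eq:special}, together with the fixpoint characterization of $\mainle$ in Lemma~\ref{lemma:fixpoint}, is what makes the recursion on the context part well-founded: it controls how the context type above an antichain node can depend on the sibling forests that were removed, ensuring that "taking the $L_i$ out of their context" (the point emphasized after the antichain composition lemma) does not lose information.

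\medskip

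\noindent The main obstacle I anticipate is precisely this last point: setting up the right induction parameter and proving that the decomposition at antichain nodes strictly decreases it. Over words the DA identity alone drives the turtle-automaton/unambiguous-product decomposition, but over trees the context type above a peeled node is entangled with the horizontal data of its siblings via the $\mathit{left}$/$\mathit{right}$ operations, and $\mainle$ is not even transitive in general (as the $takeout$ example shows), so one cannot simply quotient by it. The delicate part of the argument will be to show that the non-transitivity is harmless for the particular contexts that arise in the peeling — i.e.\ that whenever \eqref{eq:special} is invoked along a chain of peelings, the relevant $\mainle$-comparisons compose — and to verify that after substituting fresh leaf labels $a_i$ for the removed subtrees (changing the alphabet) the hypotheses of the three identities are inherited by the smaller algebras so that the induction hypothesis applies.
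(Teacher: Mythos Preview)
Your soundness sketch matches the paper: Section~\ref{sec:correctness} sets up an Ehrenfeucht--Fra\"iss\'e game, reduces \eqref{eq:special} via morphic preimage (Proposition~\ref{prop:ef-transfer}) to a normal form with pairwise distinct labels, and gives an explicit Duplicator strategy based on block counts.

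For completeness, your overall architecture --- peel at antichain nodes, substitute fresh leaf labels, recurse on smaller data --- is indeed the paper's. Two points differ, one cosmetic and one substantive.

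\textbf{Induction parameters.} The paper does not induct on ``forest component of the target type'' directly. It proves a stronger statement (Proposition~\ref{induction:prop}) with two extra parameters: a set $X \subseteq H$ of forest types that may occur only in leaves, and a context type $v \in V$; the claim is that $\{t : v\alpha(t)=h\}$ is forest-definable \emph{modulo $X$} (i.e.\ over $X$-trimmed forests). The induction is on the quadruple $(|H|,\,|H\setminus X|,\,|vV|,\,|B|)$. Three cases drive the descent: (i) some inner label $b$ makes $v$ unreachable from $v\alpha(b)$ --- peel at topmost $b$'s and drop $|B|$; (ii) $H\setminus X$ meets more than one forest component --- locate a bottom component $G$, define its types, add $G$ to $X$; (iii) neither holds --- base case. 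Your ``substitute fresh leaf labels'' is exactly the $X$-trimming mechanism, but tracking $v$ and $|B|$ is what makes cases (i) and (iii) work.

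\textbf{Where \eqref{eq:special} enters.} This is the real gap in your plan. Identity~\eqref{eq:special} is \emph{not} used to make any recursion well-founded; the four-parameter descent is well-founded for purely combinatorial reasons, and the concern you raise about composing $\mainle$-comparisons along a chain of peelings does not arise. Identity~\eqref{eq:special} is invoked only in the base case~(iii), where every inner label preserves the context component of $v$ and $H\setminus X$ is a single forest component. There one must show $vf=vg$ for all $f,g\in H\setminus X$, so that the defining formula is essentially ``true'' or ``false''. The key step (Lemma~\ref{lemma:letter-reach}) shows that any $h\in H\setminus X$ can be reached, relative to any $w$ in $v$'s component, from any $g\in H\setminus X$ by a context built \emph{only from inner-node labels}, with no sibling forests. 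The proof constructs a ``fat'' context $u_h$ reaching $h$ from anything, strips its sibling forests to get $v_1\cdots v_n \mainle u_h$, and then applies~\eqref{eq:special} once to swap the fat version for the stripped one inside an $\omega$-power. So your intuition that $\mainle$ is about removing sibling forests without losing information is right, but it is a single algebraic collapse at the end, not a structural control on the induction. Identity~\eqref{eq:da} is what does the structural work along the way, chiefly via Lemma~\ref{lemma:rgreen-preservers}.
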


In the identities above, all variables are quantified universally.
The identities in~(\ref{eq:apcom}) say that children can be duplicated
and reordered. This corresponds to bisimulation invariance in the
following way: a forest  language is bisimulation invariant if and only
if its syntactic forest algebra satisfies~(\ref{eq:apcom}).
The identity~(\ref{eq:da}) says that the vertical monoid belongs to
the variety DA (although the commonly used identity is different).
Only the last identity is new.

The exponent $\omega$ in properties~(\ref{eq:da})
and~(\ref{eq:special}) stands for ``for almost all $n$''. In
particular, identity~(\ref{eq:da}) should be read as:
\[
\exists m \forall n \ge m \qquad    (vw)^n = (vw)^n w (vw)^n\ .
\]
Usually in semigroup theory, $\omega$ stands for ``least idempotent
power'', but the above definition is equivalent for aperiodic monoids,
which is the case here, thanks to~(\ref{eq:da}).

An important corollary of the above theorem is that definability in
$\fotwo$ is decidable:
\begin{cor}
  It is decidable if a forest (resp.~tree) language is
  forest-definable (resp.~tree-definable) in $\fotwo$. The algorithm
  runs in polynomial time if the input is given as a forest algebra.
\end{cor}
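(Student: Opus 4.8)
The plan is to turn Theorem~\ref{thm:main} into an algorithm: definability is equivalent to a short list of identities holding in the syntactic forest algebra, and each identity can be checked mechanically. I first treat the forest-definable case. From the input I compute a finite forest algebra recognizing the language together with its set of accepting forest types; if the input is already a forest algebra this is immediate, and if it is some other kind of automaton this follows from the easy direction of Theorem~\ref{thm:equiv-with-regul}, possibly at the price of an exponential blow-up --- which is why the polynomial bound is only claimed when the input is a forest algebra. From a recognizing algebra I then compute the syntactic forest algebra $(H,V)$ of the language by a partition-refinement minimization, just as for deterministic automata; this costs time polynomial in the size of the recognizing algebra. By Theorem~\ref{thm:main} it now suffices to decide whether $(H,V)$ satisfies (\ref{eq:apcom}), (\ref{eq:da}) and (\ref{eq:special}).

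The identities (\ref{eq:apcom}) are tested by brute force: verify $h+h=h$ for every $h\in H$ and $g+h=h+g$ for every pair $g,h\in H$. For the identities containing the exponent $\omega$ I must first pin down that exponent. For each element $x$ of the vertical monoid $V$ I compute the sequence $x,x^2,x^3,\ldots$ until it repeats; since $V$ is finite its index and period are at most $|V|$, so the idempotent power $x^\omega$ equals $x^m$ for any multiple $m$ of the period with $m\ge |V|$. I then test whether $V$ is aperiodic (all periods equal $1$). If it is not, then (\ref{eq:da}) already fails: instantiating $w=\hole$ turns it into ``$v^n=v^{2n}$ for almost all $n$'', which is impossible once $v$ has period at least $2$; hence the language is not definable and the algorithm halts. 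If $V$ is aperiodic, then $x^n=x^\omega$ for all $n\ge |V|$, so the ``for almost all $n$'' quantifier collapses to this one concrete exponent, and (\ref{eq:da}) becomes the finite check: for all $v,w\in V$, putting $e=(vw)^{|V|}$, verify $e=ewe$.

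To test (\ref{eq:special}) I additionally need the relation $\mainle$ on $V$. By Lemma~\ref{lemma:fixpoint} it is a least fixpoint, hence computable by saturation in polynomial time: seed the relation with the pairs $(v,v)$, $(v,v+h)$, $(v,h+v)$ and close under componentwise products until no new pair appears. With $\mainle$ available I range over all quadruples $u_1,u_2,w_1,w_2\in V$ with $u_1\mainle u_2$ and $w_1\mainle w_2$, and for each one check, with $e_i=(u_iw_i)^{|V|}$, whether $e_1 e_2=e_1 u_1 w_2 e_2$ (the aperiodicity already established makes both $\omega$-powers unambiguous). Every step --- minimization, computing the powers, the fixpoint for $\mainle$, and the table look-ups --- is polynomial in $|H|+|V|$, which yields the stated polynomial running time for forest languages. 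The only delicate point in all of this is the treatment of the $\omega$-exponent, since the ``for almost all $n$'' reading of (\ref{eq:da}) and (\ref{eq:special}) must be captured faithfully on a monoid whose aperiodicity is not known a priori; this is handled by first deciding aperiodicity --- which is in any case a consequence of (\ref{eq:da}) --- and only then substituting a fixed exponent.

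Finally, for the tree-definable case I invoke Proposition~\ref{prop:from-tree-to-forest}: a tree language $L$ over $(A,B)$ is tree-definable exactly when, for every inner node label $b\in B$, the forest language $\{t : bt\in L\}$ is forest-definable. Each of these finitely many forest languages is regular, and a forest algebra recognizing it (from which its syntactic algebra can be minimized) is obtained immediately from one recognizing $L$; applying the forest-definability test above to each of them decides tree-definability, again in polynomial time when the input is a forest algebra. Everything beyond the $\omega$-exponent subtlety is bookkeeping.
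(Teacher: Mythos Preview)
Your proposal is correct and follows essentially the same route as the paper: reduce tree-definability to forest-definability via Proposition~\ref{prop:from-tree-to-forest}, compute the syntactic forest algebra, compute $\mainle$ via the fixpoint of Lemma~\ref{lemma:fixpoint}, and verify the three identities with $\omega$ replaced by $|V|$. The only difference is presentational: you are more explicit than the paper about why the $\omega$-exponent can be replaced by $|V|$, first testing aperiodicity separately (via $w=\hole$ in~(\ref{eq:da})) and only then collapsing ``for almost all $n$'' to the single value $|V|$, whereas the paper compresses this into the remark that aperiodicity is a consequence of~(\ref{eq:da}).
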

\begin{proof}
  To determine if a language is tree-definable, we calculate the
  languages $\set{ t : b t \in L}$ and reduce to the
  characterization of forest-definable language thanks to
  Proposition~\ref{prop:from-tree-to-forest}.  Therefore, we focus on
 deciding if a language is forest-definable.

  We begin by finding the syntactic forest algebra.  The syntactic
  forest algebra can be effectively calculated based on any
  representation of the tree language, be it a tree automaton, or a
  formula of some rich logic, such as MSO. In general, the syntactic
  forest algebra can be exponentially larger than a nondeterministic
  tree automaton, not to mention a formula of MSO.
  
  Once the syntactic forest algebra has been calculated, the
  properties~(\ref{eq:apcom}),~(\ref{eq:da}) and~(\ref{eq:special})
  can be verified in polynomial time (with respect to the algebra).
  The relation $\mainle$ over $V$ can be computed in polynomial time
  thanks to Lemma~\ref{lemma:fixpoint}.  The exponent $\omega$ is not
  a problem. Indeed, a consequence of (\ref{eq:da}) is that $V$ is
  \emph{aperiodic}, i.e.~the identity $v^\omega = v^\omega v$ holds
  for all context types $v$. In particular, it is enough to test for
  $\omega = |V|$.
\end{proof}

The rest of this paper is devoted to showing Theorem~\ref{thm:main}.
The ``only if'' implication in the above theorem is proved in
Section~\ref{sec:correctness} using a simple induction on formula
size.  The difficult part is the proof of the ``if'' implication,
which is found in Section~\ref{sec:completeness}.

In the following fact, we show that property~(\ref{eq:special}) in
Theorem~\ref{thm:main} is not redundant. In a similar way one can
prove that neither~(\ref{eq:apcom}) nor~(\ref{eq:da}) is redundant.
\begin{lem}
  There exists a forest algebra satisfying properties~(\ref{eq:apcom})
  and~(\ref{eq:da}) but not~(\ref{eq:special}).
\end{lem}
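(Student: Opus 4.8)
The plan is to exhibit a concrete finite forest algebra by hand, verify that it satisfies the bisimulation-invariance identities~(\ref{eq:apcom}) and the DA identity~(\ref{eq:da}), and then check that~(\ref{eq:special}) fails for a specific choice of the witnesses $u_1,u_2,w_1,w_2$. The natural source of such an example is a forest \emph{language} whose syntactic algebra is easy to describe; so first I would pick a language $L$ over a small alphabet for which bisimulation invariance is obvious (so~(\ref{eq:apcom}) comes for free), for which the vertical monoid is small and visibly lies in DA (so~(\ref{eq:da}) is a finite check), and yet which is \emph{not} $\fotwo$-definable. The second bullet of Section~\ref{sec:why-two-way} already points to a candidate: a property that distinguishes $\forall^*\exists^*\cap\exists^*\forall^*$ from the temporal logic, i.e.\ something the temporal logic cannot see but which survives duplication/reordering of siblings. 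A convenient concrete choice is a ``nesting-depth of a pattern'' language, e.g.\ over $A=\{a\}$, $B=\{b,c\}$: ``on every root-to-leaf branch, between any two $c$-nodes there is a $b$-node, and symmetrically'' — or, even simpler, the language forcing an alternation pattern along the main path of the kind that makes the two sides of~(\ref{eq:special}) land in different syntactic classes.

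The key steps, in order, are: (1) fix the language $L$ and argue it is regular and bisimulation invariant, so its syntactic forest algebra $(H,V)$ exists, is finite, and satisfies~(\ref{eq:apcom}); (2) compute $(H,V)$ explicitly — in practice $V$ has only a handful of elements (identity, a few ``pattern-recording'' context types, and an absorbing error type), and $H$ is similarly small; (3) verify~(\ref{eq:da}) by checking, for each pair $v,w\in V$, that $(vw)^\omega=(vw)^\omega w (vw)^\omega$; since $V$ will be aperiodic by construction this is a finite computation, and it is what dictates the exact choice of $L$ (the language must be ``DA-like'' on single branches); (4) identify context types $u_1\mainle u_2$ and $w_1\mainle w_2$ — using Lemma~\ref{lemma:fixpoint}, or simply by displaying contexts $p_1,p_2,q_1,q_2$ with $\alpha(p_i)=u_i,\alpha(q_i)=w_i$ and $p_1$ obtained from $p_2$ by deleting sibling forests off the main path — for which $(u_1w_1)^\omega (u_2w_2)^\omega \ne (u_1w_1)^\omega u_1 w_2 (u_2w_2)^\omega$; (5) conclude by separating the two sides with a witnessing context $r$ and forest $t$, exactly as in the non-transitivity example already given in the excerpt (the $\mathit{takeout}$ picture), where one side can be completed to a forest in $L$ and the other cannot.

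I expect step~(3), verifying~(\ref{eq:da}) while \emph{also} arranging the failure of~(\ref{eq:special}), to be the delicate point: these two identities pull in opposite directions, since~(\ref{eq:special}) is a refinement of~(\ref{eq:da}) that additionally exploits the $\mainle$-ordering, so the language must be chosen so that its branches behave like a DA monoid yet the interaction between the main path and its sibling subtrees records strictly more than the temporal logic can. Concretely, the algebra should be one in which $(vw)^\omega w (vw)^\omega$ really does collapse back to $(vw)^\omega$ for \emph{every} $v,w$, but where the extra freedom of~(\ref{eq:special}) — replacing one occurrence of $u_2$ by the smaller $u_1$ and shuffling in $w_2$ — lets the left side ``forget'' a forest hanging off the main path that the right side still remembers. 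Once the right $L$ is pinned down, everything else is a bounded computation; I would present the algebra via its multiplication tables (or via the defining language plus the $\mathit{takeout}$-style picture) and leave these routine checks to the reader, mirroring the style of the non-transitivity example above.
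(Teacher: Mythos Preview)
Your overall strategy---pick a bisimulation-invariant regular language, pass to its syntactic forest algebra, verify~(\ref{eq:apcom}) and~(\ref{eq:da}), and then exhibit explicit contexts $u_1\mainle u_2$, $w_1\mainle w_2$ separating the two sides of~(\ref{eq:special})---is exactly what the paper does. The paper is, however, far more concrete and far less laborious: it never computes multiplication tables, but simply names the language (over $A=\{a_1,a_2\}$, $B=\{b\}$: ``if a node has a child labelled $a_1$, then it has an ancestor with a child labelled $a_2$''), asserts~(\ref{eq:apcom}) and~(\ref{eq:da}) for its syntactic algebra, and then writes down the two forests
\[
(bb)^n((b+a_2)(b+a_1))^n a_2 \in L \qquad (bb)^n\, b(b+a_1)\,((b+a_2)(b+a_1))^n a_2 \notin L,
\]
which directly witnesses the failure of~(\ref{eq:special}) with $u_1=w_1=\alpha(b)$, $u_2=\alpha(b+a_2)$, $w_2=\alpha(b+a_1)$.

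The genuine gap in your proposal is the choice of language. Your concrete candidate---``on every branch, $b$'s and $c$'s alternate''---almost certainly \emph{fails}~(\ref{eq:da}): on a single branch this is the word-language version of strict alternation, and in its syntactic monoid one has $(bc)^\omega = bc$ while $(bc)^\omega b (bc)^\omega$ collapses to the error element (because of the $bb$), so the DA identity is violated. You correctly flag step~(3) as the delicate one, but you have not yet produced a language that survives it; the paper's example shows the trick is to make the ``extra'' information live \emph{off} the main path (in the siblings $a_1,a_2$) rather than in an alternation pattern \emph{on} the main path, so that the vertical monoid stays in DA while~(\ref{eq:special}) still sees the siblings being stripped by $\mainle$.
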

\proof
  Let the leaf alphabet $A$ be $\set{a_1,a_2}$ and let the inner node
  alphabet $B$ be $\set b$. Consider the following language: ``if a
  node has a child with label $a_1$, then it has an ancestor with a
  child with label~$a_2$''.
  The syntactic forest algebra of this language satisfies
  properties~(\ref{eq:apcom}) and (\ref{eq:da}); but it does not
  satisfy~(\ref{eq:special}), since for all $n \in \Nat$ we have
  $$
    (bb)^n((b+a_2)(b+a_1))^n a_2  \in L \qquad
    (bb)^n b(b+a_1)((b+a_2)(b+a_1))^n a_2 \not \in L \ .\eqno{\qEd}
  $$

\section{Correctness}\label{sec:correctness}
In this section we show that any language forest-definable in $\fotwo$
satisfies the identities from Theorem~\ref{thm:main}.  For each of
these identities we show that any formula of $\fotwo$ must, informally
speaking, confuse the two trees described by the opposing sides of the
identity. To show this confusion, we use an Ehrenfeucht-Fra\"iss\'e
game.  The plan of this section is as follows. First, in
Section~\ref{sec:ehrenf-fraisse-games}, we define the
Ehrenfeucht-Fra\"iss\'e that characterizes $\fotwo$. Next, in
Section~\ref{sec:morphic-images}, we use the game to show that
languages defined in $\fotwo$ are closed under morphic
preimages. Finally, in Section~\ref{sec:corr-ident} we show that any
language forest-definable in $\fotwo$ satisfies the identities from
Theorem~\ref{thm:main}.

\subsection{Ehrenfeucht-Fra\"isse Game}
\label{sec:ehrenf-fraisse-games}
In this section, we define an Ehrenfeucht-Fra\"iss\'e game that
characterizes the logic $\fotwo$.

The game is played on two forests $s_0$ and $s_1$, with two
distinguished nodes, $x_0$ in $s_0$ and $x_1$ in $s_1$. A
configuration of the game is therefore a four-tuple
$(x_0,x_1,s_0,s_1)$. Finally, the game has a parameter $n \in \Nat$,
which is called the number of \emph{rounds}.  The game is played by
two players, Duplicator and Spoiler.  The idea is that Duplicator
claims that the same formulas of size at most $n$ hold in $x_0$ and
$x_1$.

The game is played as follows. Assume that there are $n \ge 0$ rounds
left. If the labels of $x_0$, $x_1$ are different, then Spoiler wins
the game immediately, and no further rounds are played. If the labels
are the same, and $n=0$, then Duplicator wins the game, and no further
rounds are played. Finally, if the labels are the same and $n > 0$, a
new round is played as follows.

First, Spoiler chooses one of the two nodes $x_0,x_1$, i.e.~he chooses
an index $i \in \set{0,1}$. The idea is that Spoiler thinks that the
node $x_i$ has some property that the other node $x_{1-i}$ does not
have.  He then chooses to make either a descendant move (in this case,
Spoiler thinks that $x_i$ has a descendant unlike all descendants of
$x_{1-i}$) or an ancestor move (Spoiler thinks that $x_i$ has an
ancestor unlike all ancestors of $x_{1-i}$) . If Spoiler chooses a
descendant (respectively, ancestor) move, then he must choose a proper
descendant (respectively, proper ancestor) $y_i$ of $x_i$ in the
forest $s_i$. To this, Duplicator must respond by choosing a proper
descendant (respectively, proper ancestor) $y_{1-i}$ of $x_{1-i}$ in
the other forest $s_{1-i}$.  The idea is that Duplicator thinks that
$y_{1-i}$ is similar to $y_i$, at least as far as the remaining $n-1$
rounds are concerned. Formally, the new configuration becomes
$(y_0,y_1,s_0,s_1)$ and the game continues with $n-1$ rounds left.

We also define how the $n$-round game is played on two forests
$s_0,s_1$ in case when the nodes $x_0,x_1$ are not specified. In this
case, there is a special introductory round, where Spoiler chooses $i
\in \set{0,1}$ and a root node $x_i$ in $s_i$; Duplicator responds
with a root node $x_{1-i}$ in the other forest. Then the standard
$n$-round game continues from this configuration.

\begin{prop}\label{prop:ef-char}
  A forest language is forest-definable in $\fotwo$ if and only if for
  some~$n$, Spoiler wins the $n$-round game for any pair of forests
  $s_0 \in L$ and $s_1 \not \in L$.
\end{prop}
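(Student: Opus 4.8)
The plan is to set up the standard correspondence between the game and a syntactic rank, and then read off both implications. For $m \in \mathbb N$, say that two pointed forests $(s,x)$ and $(s',x')$ are \emph{$m$-equivalent}, written $(s,x) \equiv_m (s',x')$, if Duplicator wins the $m$-round game from the configuration $(x,x',s,s')$; write $s \equiv_m s'$ for forests without distinguished nodes if Duplicator wins the $m$-round game with the introductory round. Measure $\fotwo$ formulas by modal depth: labels have depth $0$, boolean connectives do not increase depth, and $\tlef$ and $\tlf^{-1}$ increase it by one. Since each round strictly decreases the round counter and forests are finite, the game tree is finite and the game is determined, so Spoiler wins iff Duplicator does not. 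The technical heart of the argument is two facts, both proved by induction on $m$.

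\emph{Finiteness.} Over the fixed alphabet $(A,B)$, $\equiv_m$ has only finitely many classes on pointed forests, and hence finitely many on forests. The induction rests on a back-and-forth description of $\equiv_{m+1}$: Duplicator wins the $(m+1)$-round game from $(x,x',s,s')$ iff $x$ and $x'$ carry the same label, the set of $\equiv_m$-classes realized by the proper descendants of $x$ equals the set realized by the proper descendants of $x'$, and the same holds for proper ancestors. The ``if'' direction is Duplicator's strategy: to a descendant (resp.\ ancestor) move by Spoiler, respond with a node realizing the same $\equiv_m$-class and play the winning $m$-round strategy from there; the ``only if'' direction is the contrapositive, Spoiler exploiting a class realized on one side but not the other. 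Since there are finitely many $\equiv_m$-classes by the inductive hypothesis, there are finitely many possible (label, descendant-class-set, ancestor-class-set) data, hence finitely many $\equiv_{m+1}$-classes; the forest case follows because the $\equiv_m$-class of $s$ is determined by the set of $\equiv_m$-classes of the pointed forests $(s,x)$ with $x$ a root of $s$ (this is exactly what the introductory round computes).

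\emph{Characteristic formulas.} To each $\equiv_m$-class $C$ of pointed forests I associate an $\fotwo$ formula $\chi_m^C$ of modal depth $m$ with $(s',x') \models \chi_m^C$ iff $(s',x') \in C$. This follows the same recursion: $\chi_0^C$ is the common label; for an $(m+1)$-class represented by $(s,x)$, $\chi_{m+1}^C$ is the conjunction of the label of $x$, of $\tlef\chi_m^D$ over the $m$-classes $D$ realized by a proper descendant of $x$, of $\neg\tlef\chi_m^D$ over the $m$-classes $D$ not so realized, and of the analogous conjunction with $\tlf^{-1}$ for ancestors; this is a finite formula by the finiteness fact, and its correctness is immediate from the back-and-forth description. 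Passing to forests, the $\equiv_m$-class of a forest is defined by a forest formula $\Xi_m$, namely a boolean combination consisting of one conjunct ``some tree in the forest satisfies $\chi_m^C$'' (positive or negated according to whether $C$ is realized at a root) for each pointed-forest $m$-class $C$; here one uses that evaluating an $\fotwo$ formula at a root of a forest is unaffected by the sibling trees, since a root has no proper ancestors.

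With these in place both directions are short. For ``if'': if Spoiler wins the $n$-round game for every $s_0 \in L$ and $s_1 \notin L$, then no element of $L$ is $\equiv_n$-equivalent to an element outside $L$, so $L$ is a union of finitely many $\equiv_n$-classes of forests, hence is defined by the finite disjunction of the corresponding $\Xi_n$'s, and is forest-definable. For ``only if'': if $L$ is forest-definable by a boolean combination of formulas ``some tree satisfies $\varphi_j$'', let $n = \max_j \mathrm{md}(\varphi_j)$; a routine induction on formula structure shows that $(s,x)\equiv_m(s',x')$ implies $x$ and $x'$ agree on all $\fotwo$ formulas of modal depth at most $m$ (Duplicator's winning strategy transfers an existential witness for $\tlef$ or $\tlf^{-1}$), whence $s \equiv_n s'$ forces agreement on membership in $L$; so $s_0 \in L$, $s_1\notin L$ gives $\neg(s_0\equiv_n s_1)$, i.e.\ Spoiler wins. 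The main obstacle is the finiteness fact together with the verification of the back-and-forth characterization of $\equiv_{m+1}$; the one subtlety is that each move discards the current pair of nodes, so the recursion must be carried out on arbitrary pointed forests — the set of ancestors changing after a descendant move — rather than on residual subtrees.
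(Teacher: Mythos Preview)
Your proof is correct and is precisely the standard Ehrenfeucht--Fra\"iss\'e argument the paper alludes to; the paper omits the proof entirely, only remarking that $n$ is the nesting depth of the defining formulas, so your write-up simply fills in the details it skips. The one point worth noting explicitly (which you do handle) is that the introductory round is not counted among the $n$ rounds, so the forest relation $\equiv_n$ is indeed determined by the set of pointed-forest $\equiv_n$-classes realized at the roots.
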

\begin{proof}
  The proof is standard, and omitted here. The idea is that $n$ is the
  nesting depth of the formulas used to forest-define $L$. The nesting
  depth counts the maximal nesting of $\tlef$ and $\tlf^{-1}$ in a
  formula, while boolean operations are for free.
\end{proof}

\subsection{Morphic images}
\label{sec:morphic-images}
In this section, we show that languages forest-definable in $\fotwo$
are closed under morphic preimages.  Actually, we show a slightly more
general result. The more general setting will be used in
Section~\ref{sec:empty-forests}, where we show that our
characterization also works for a different model of forest algebra,
where empty forests are allowed.

We first describe the more general setting.  The generalization is
twofold. First, we allow empty forests. Second$^*$\footnote{$*$ It
  turns out that in forest algebra, the first generalization entails
  the second.}, we consider forests over a single alphabet (unlike the
two-sorted alphabet $A,B$ considered before, with $A$ allowed only in
leaves and $B$ allowed only in inner nodes).  The new type of forests
will be called \emph{one-sorted forests}, to distinguish them from the
\emph{two-sorted forests} considered before.  The one-sorted forests
are more general in the following sense: the two-sorted forests over
an alphabet $(A,B)$ are a subset of the one-sorted forests over the
alphabet $A \cup B$.  Of course, the difference is not that big: the
one-sorted forests over $A$ are the two-sorted forests over $(A,A)$,
plus the empty forest.We also have an analogous concept of
\emph{one-sorted contexts}.  A \emph{one-sorted morphism}, with
\emph{source alphabet $A$} and \emph{target alphabet $B$} is given by
a function that assigns to each letter of $A$ a one-sorted context,
possibly empty, over $B$. A one-sorted morphism uniquely extends to
one-sorted forests and one-sorted contexts. To avoid confusion, in
this section we use the name \emph{two-sorted morphism} for the
morphisms introduced previously in the paper.

\begin{thm}\label{thm:ef}
  Let $\alpha$ be a one-sorted morphism. If a forest language $L$ over
  the target alphabet $B$ is forest-definable in $\fotwo$, then so is
  its inverse image $\alpha^{-1}(L)$.
\end{thm}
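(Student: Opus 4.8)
The plan is to give a direct syntactic translation rather than to argue semantically: I build, for each $\fotwo$-formula $\varphi$ over $B$, a forest formula over $A$ that holds of a one-sorted forest $t$ exactly when $\alpha(t)\in L$. Since $L$ is by definition a Boolean combination of languages $L_\varphi=\set{s:\text{some tree of }s\text{ has root satisfying }\varphi}$, and since $\alpha^{-1}$ commutes with Boolean operations while forest-definable languages are trivially closed under Boolean combinations, it suffices to show that each $\alpha^{-1}(L_\varphi)$ is forest-definable.

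The key structural fact is the \emph{block decomposition} of $\alpha(t)$: each node of $\alpha(t)$ lies in a unique copy (a ``block'') of some context $\alpha(a)$, where $a$ is the label of a uniquely determined node $x$ of $t$; within its block a node is either a \emph{spine} node (an ancestor of the hole of $\alpha(a)$, i.e.\ on the main path) or a \emph{side} node. I first record the exact ancestor/descendant relations: the proper descendants of a node $z$ lying in the block of $x$ are the $\alpha(a)$-nodes below $z$ together with --- only when $z$ is a spine node --- all nodes in the blocks of the proper descendants of $x$; dually, the proper ancestors of $z$ are the $\alpha(a)$-nodes above $z$ together with the spine nodes of the blocks of the proper ancestors of $x$ (side nodes of ancestor blocks never appear). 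Using this, I define by induction on a subformula $\psi$ of $\varphi$, for each letter $a\in A$ and each node $z$ of the fixed finite context $\alpha(a)$, a $\fotwo$-formula $[\psi]_{a,z}$ over $A$ such that: for every one-sorted forest $t$ and every node $x$ of $t$ labelled $a$, the formula $[\psi]_{a,z}$ holds at $x$ in $t$ iff $\psi$ holds at the node of $\alpha(t)$ that is copy $z$ in the block of $x$. Labels of $B$ translate to $\true$ or $\false$ (a fact about $\alpha(a)$ alone); Boolean connectives translate directly; $[\tlef\psi_1]_{a,z}$ is the disjunction of the $[\psi_1]_{a,z'}$ over $\alpha(a)$-nodes $z'>z$, together with, in case $z$ is a spine node of $\alpha(a)$, the disjunct $\tlef\bigl(\bigvee_{a'\in A}(a'\wedge\bigvee_{z''\in\alpha(a')}[\psi_1]_{a',z''})\bigr)$; the translation of $\tlf^{-1}$ is symmetric, except that its second disjunct uses $\tlf^{-1}$, ranges only over spine nodes $z''$ of $\alpha(a')$, and requires no spine guard on $z$. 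Finiteness of each $\alpha(a)$ keeps all these formulas finite.

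Finally I assemble $L_\varphi$: a root of $\alpha(t)$ is a copy $z$ in the block of some node $x$ such that $z$ is a root of $\alpha(a)$ and every proper ancestor of $x$ in $t$ has an $\alpha$-image with empty main path --- the latter being expressible as $\neg\tlf^{-1}\bigl(\bigvee\set{b:\alpha(b)\text{ has non-empty main path}}\bigr)$. Hence ``some tree of $\alpha(t)$ has root satisfying $\varphi$'' is equivalent, on $t$, to ``some tree of $t$ satisfies $\chi\vee\tlef\chi$'', where $\chi$ asserts at a node $x$ that $x$ is such a block origin and that $[\varphi]_{a,z}$ holds for some root $z$ of $\alpha(a)$; this is a forest formula over $A$, as required. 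The main obstacle is getting the structural correspondence of the middle paragraph exactly right, particularly for \emph{degenerate} blocks --- when $\alpha(a)$ is empty (a node of $t$ vanishes) or has its hole at top level (a node's subtree becomes a sibling rather than a descendant); one checks these cases cause no harm precisely because the relevant disjunctions become empty, hence $\false$, which is the correct truth value. (Alternatively one could lift a winning Duplicator strategy through the block decomposition in the Ehrenfeucht--Fra\"iss\'e game of Proposition~\ref{prop:ef-char}, simulating each round on $\alpha(t_0),\alpha(t_1)$ by boundedly many rounds on $t_0,t_1$; but the degenerate blocks make the bookkeeping of the introductory round delicate, so I prefer the syntactic route.)
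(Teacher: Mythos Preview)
Your proof is correct and takes a genuinely different route from the paper's. The paper argues semantically via the Ehrenfeucht--Fra\"iss\'e game: it shows (Proposition~\ref{prop:ef-transfer}) that a winning Duplicator strategy on $s_0,s_1$ lifts to one on $\alpha(s_0),\alpha(s_1)$, by maintaining the invariant that the two current nodes have the same \emph{offset} (position within their block) and that their \emph{preimages} in $s_0,s_1$ have the same label and are still winning for Duplicator in the preimage game; Spoiler's move in the image game is simulated by at most one move in the preimage game. You instead carry out the effective syntactic translation that the game argument only implicitly guarantees: for each node $z$ of each fixed context $\alpha(a)$ you build a formula $[\psi]_{a,z}$ computing truth of $\psi$ at the corresponding image node. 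The paper's approach has the cleaner invariant and essentially no case analysis---the same-offset trick absorbs within-block moves for free, and degenerate blocks cause no trouble---whereas your approach is explicitly constructive and would yield concrete size bounds on the resulting forest formula (roughly a multiplicative factor of $\sum_a|\alpha(a)|$ per temporal operator). Your closing parenthetical describing an alternative via strategy transfer \emph{is} the paper's proof; the ``delicate bookkeeping of the introductory round'' you worry about is dismissed there in one sentence as ``done in a similar way''.
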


The version of this theorem for two-sorted morphisms is a special case
of the one-sorted version, since every for two-sorted morphism there
is a one-sorted morphism that gives the same results over all legal
two-sorted forests.

To show this theorem, we will use the Ehrenfeucht-Fra\"iss\'e game. We
fix the forest-language $L$ and the (one-sorted) morphism $\alpha$
from the theorem for the rest of this section. Let $n$ be the number
of rounds obtained by applying Proposition~\ref{prop:ef-char} to the
forest $L$ in the statement of the theorem.  By invoking
Proposition~\ref{prop:ef-char} a second time, to establish that the
inverse image $\alpha^{-1}(L)$ is forest-definable in $\fotwo$, it suffices
to show that Spoiler can win the $n$-round game over any two
preimages, one taken from the preimage $\alpha^{-1}(L)$, and the other
taken from its complement. The proof will be by showing how a strategy
of Duplicator over the preimage can be lifted to a strategy over the
image, as stated in the following proposition.
\begin{prop}\label{prop:ef-transfer}
  If Duplicator wins the $n$-round game over $s_0,s_1$, then
  Duplicator also wins the $n$-round game over
  $\alpha(s_0),\alpha(s_1)$.
\end{prop}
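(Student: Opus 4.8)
The plan is to set up a correspondence between positions in the original forests $s_0, s_1$ and positions in their images $\alpha(s_0), \alpha(s_1)$, and then to play the $n$-round game over the images by consulting Duplicator's winning strategy over the preimages. First I would make precise how a node $z$ of $\alpha(s)$ relates to $s$: each letter $a$ occurring at some node $x$ of $s$ is expanded by $\alpha$ into a context $\alpha(a)$, so the nodes of $\alpha(s)$ that ``come from'' $x$ are precisely the nodes on the main path of this copy of $\alpha(a)$ (the hole is where the children of $x$, already expanded, get plugged in). So to each node $z$ of $\alpha(s)$ I can assign a unique \emph{origin} node $x = \mathrm{orig}(z)$ in $s$, namely the node whose expanded context contains $z$ on its main path. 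The crucial order-theoretic facts are: if $z' $ is a proper ancestor of $z$ in $\alpha(s)$, then $\mathrm{orig}(z')$ is an ancestor of $\mathrm{orig}(z)$ in $s$ (not necessarily proper — $z'$ might sit higher up on the main path of the same copy of $\alpha(a)$); and dually for descendants; and finally, every node $x$ of $s$ with a proper ancestor gives rise to at least one node in $\alpha(s)$ whose origin is $x$ and which sits below some node whose origin is a strict ancestor — more carefully, I need that the origin map, restricted to main paths, behaves well enough that Spoiler's ancestor/descendant moves in $\alpha(s_i)$ can be pushed down to legal moves in $s_i$.

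Next I would define the invariant Duplicator maintains in the game over the images. After $k$ rounds the configuration over the images is some $(z_0, z_1, \alpha(s_0), \alpha(s_1))$; Duplicator will keep, as a side computation, a configuration $(x_0, x_1, s_0, s_1)$ over the preimages that is consistent with Duplicator's winning strategy there, such that (i) $\mathrm{orig}(z_j) = x_j$ for $j = 0,1$, and (ii) $z_0$ and $z_1$ occupy the ``same position'' within the expanded copies $\alpha(a_0)$, $\alpha(a_1)$ of the (equal, since the game has not yet been lost) letters $a_0$ of $x_0$ and $a_1$ of $x_1$ — equal because $x_0, x_1$ carry the same label under $\alpha$'s being applied to equal-labelled nodes. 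Actually the cleanest version: since $x_0, x_1$ have the same letter $a$ (Duplicator's preimage strategy guarantees labels match, and unequal letters in the preimage would already be a loss), the contexts $\alpha(a)$ sitting around $z_0$ and $z_1$ are \emph{identical} trees with a hole; I demand that $z_0$ and $z_1$ are literally the same node of this common context $\alpha(a)$. Then in particular $z_0$ and $z_1$ have the same label in $\alpha(s_0), \alpha(s_1)$, so the game is not immediately lost.

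Then I would describe how Duplicator answers one round over the images. Suppose Spoiler picks $z_i$ and makes, say, a descendant move to some $z_i' > z_i$ in $\alpha(s_i)$. Two cases. If $z_i'$ still lies on the main path of the same copy of $\alpha(a)$ as $z_i$ (so $\mathrm{orig}(z_i') = x_i$ still), Duplicator simply plays the identical node $z_{1-i}'$ in the identical context around $z_{1-i}$; the preimage configuration is unchanged, and the invariant is restored. Otherwise $z_i'$ has an origin $x_i'$ that is a \emph{proper} descendant of $x_i$ in $s_i$: Duplicator feeds the descendant move $x_i \rightsquigarrow x_i'$ to the preimage strategy, which returns a proper descendant $x_{1-i}'$ of $x_{1-i}$ with the same label $a'$; then Duplicator plays in $\alpha(s_{1-i})$ the node $z_{1-i}'$ that occupies, within the copy of $\alpha(a')$ expanding $x_{1-i}'$, the same position that $z_i'$ occupies within the copy of $\alpha(a')$ expanding $x_i'$ (these copies are identical since $a' $ is the same letter) — and I must check this is indeed a proper descendant of $z_{1-i}$, which follows because $x_{1-i}' > x_{1-i}$ strictly, so the whole expanded copy of $\alpha(a')$ hangs strictly below $z_{1-i}$'s copy of $\alpha(a)$. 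Ancestor moves are handled symmetrically, using the ``dual'' origin facts; the one subtlety there — the \textbf{main obstacle} I expect — is the asymmetry when $z_i$ sits deep inside a copy of $\alpha(a)$: an ancestor move within that copy does \emph{not} correspond to any preimage move, while an ancestor move that leaves the copy corresponds to a possibly non-proper step in the preimage if the boundary is reached, so I have to argue that Duplicator can always either mirror within the identical context or hand a genuine proper ancestor move to the preimage strategy, never getting stuck in between; this is exactly where one uses that $z_0, z_1$ sit at identical positions in identical contexts, so ``mirror within the context'' is always available and is label-preserving for free. Since each round over the images either leaves the preimage game untouched or advances it by exactly one round, and Duplicator's preimage strategy survives $n$ rounds, the invariant — in particular label-equality — is maintained for all $n$ rounds over the images, so Duplicator wins there too, proving the proposition.
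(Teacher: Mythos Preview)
Your approach is essentially identical to the paper's: the paper calls your ``origin'' the \emph{preimage} $\bar z$ and calls your ``same position within the context'' the \emph{offset}, maintains exactly your invariant (same offset, preimages carrying the same label and sitting in a winning configuration of the preimage game), and does the same two-case split for each move --- mirror inside the context copy, or hand a proper move to the preimage strategy.

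There is one slip in your setup. You write that the nodes of $\alpha(s)$ coming from a preimage node $x$ with label $a$ are ``precisely the nodes on the main path'' of the copy of $\alpha(a)$, and you define $\mathrm{orig}(z)$ only for such nodes. But $\alpha(a)$ is an arbitrary context and may have side branches off the main path (e.g.\ the $b$-leaf in $\alpha(a)=a(\hole+b)$); those nodes also belong to $\alpha(s)$ and also originate from $x$. As stated, your origin map is undefined on them, and your descendant-move case split (``on the main path of the same copy'' versus ``origin is a proper descendant'') misses the case where Spoiler moves to an off-main-path node inside the same copy. The fix is immediate: drop the main-path restriction, let the offset range over \emph{all} nodes of $\alpha(a)$, and replace your first case by ``$z_i'$ lies in the same copy (equivalently $\mathrm{orig}(z_i')=x_i$)''. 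With that correction your argument matches the paper's exactly.
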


To prove this transfer of strategies, we will be switching back and
forth between the Ehrenfeucht-Fra\"iss\'e games on $s_0,s_1$ and on
$\alpha(s_0),\alpha(s_1)$. To avoid confusion, we use the name
\emph{preimage game} for the former and we use the name \emph{image
  game} for the latter.  We will be comparing configurations of the two
games in the following way.  Every node $x$ in a morphic image
$\alpha(s)$ can be uniquely identified by two pieces of information:
its \emph{preimage} $\bar x$, which is a node in the preimage forest
$s$, and its \emph{offset}, which is a node of the context assigned by
$\alpha$ to the label in $\bar x$. These concepts are illustrated below, in
an example where both the source and target alphabets are $\set{a,b}$,
and the one-sorted morphism is defined by $\alpha(a)=a(\hole + b)$ and
$\alpha(b)=\hole$.
  \begin{center}
    \includegraphics[scale=0.65]{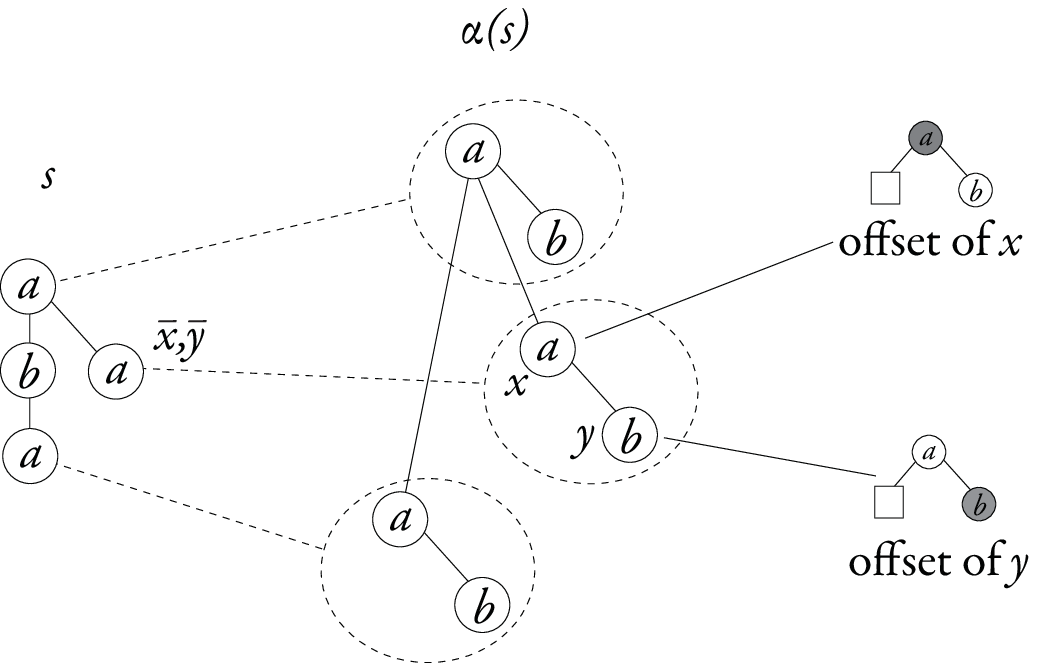}
  \end{center}
  Note that some nodes in the preimage forest $s$ are not the preimage
  of any node in $\alpha(s)$, these are the nodes whose labels are mapped
  to an empty context by $\alpha$.
  
  Armed with the definitions of offset and preimage, we now prove the
  strategy transfer from Proposition~\ref{prop:ef-transfer}. We only
  give the main invariant, which is described below. The missing part
  of the proof, for the introductory round of the game where the root
  nodes are chosen, is done in a similar way.

  \begin{lem}
    Let $m \le n$.  Let $x_0,x_1$ be nodes with the same offset %, and
    such that $\bar x_0, \bar x_1$ have the same label. If Duplicator
    can win the $n$-round preimage game in configuration $(\bar
    x_0,\bar x_1, s_0,s_1)$, then he can also win the $m$-round image
    game in configuration $(x_0,x_1,\alpha(s_0),\alpha(s_1))$.
  \end{lem}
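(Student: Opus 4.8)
The plan is to prove the lemma by induction on $m$, the number of rounds remaining in the image game, maintaining the stated invariant that pairs of nodes in the image game always have the same offset and preimages with the same label, and that Duplicator retains a winning strategy in the $n$-round preimage game from the current preimage configuration. The base case $m=0$ is immediate: since $x_0, x_1$ have the same offset and $\bar x_0, \bar x_1$ have the same label, the labels of $x_0$ and $x_1$ in $\alpha(s_0), \alpha(s_1)$ coincide (the label of a node is determined by its offset together with the label of its preimage, via the context $\alpha$ assigns to that label), so Duplicator wins the $0$-round game.

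For the inductive step, suppose $m > 0$ and Spoiler makes a move in the image game from $(x_0, x_1, \alpha(s_0), \alpha(s_1))$, say picking index $i$ and a proper descendant (the ancestor case is symmetric) $y_i$ of $x_i$ in $\alpha(s_i)$. I would split into two cases according to where $y_i$ lies relative to the block structure. \emph{Case 1: $y_i$ lies in the same block as $x_i$}, i.e.\ $\bar y_i = \bar x_i$ but the offset of $y_i$ is a proper descendant of the offset of $x_i$ within the fixed context $\alpha(\text{label of }\bar x_i)$. Then Duplicator responds in the image game by moving to the node $y_{1-i}$ in the same block as $x_{1-i}$ with the \emph{same offset} as $y_i$ — this is legal since $\bar x_0, \bar x_1$ have the same label, hence the same associated context, and the offset of $x_{1-i}$ equals that of $x_i$, so the offset of $y_i$ is a proper descendant of it there too. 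The new image configuration again satisfies the invariant (same offset, same preimage label, and the preimage configuration is unchanged so Duplicator's winning strategy there persists), and we apply the induction hypothesis with $m-1 \le n$.

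\emph{Case 2: $y_i$ lies strictly below the block of $x_i$}, so $\bar y_i$ is a proper descendant of $\bar x_i$ in $s_i$. Here Duplicator consults his winning strategy in the $n$-round preimage game: from $(\bar x_0, \bar x_1, s_0, s_1)$, Spoiler's move ``pick index $i$, descendant move, node $\bar y_i$'' is answered by some proper descendant $\bar y_{1-i}$ of $\bar x_{1-i}$ in $s_{1-i}$ with $\bar y_{1-i}$ having the same label as $\bar y_i$, and Duplicator still winning the remaining $n-1$-round preimage game from $(\bar y_0, \bar y_1, s_0, s_1)$. Since $\bar y_i$ has a node in $\alpha(s_i)$ above it that equals $y_i$ — more precisely, $\bar y_i$ being a proper descendant of $\bar x_i$ and $y_i$ lying below the block of $x_i$ means $y_i$ sits at some offset within $\alpha(\text{label of }\bar y_i)$ — wait; I should be careful: $y_i$ need not be the image-node associated to $\bar y_i$ itself. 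In general $y_i$ has preimage $\bar y_i'$ for some node $\bar y_i'$ with $\bar x_i < \bar y_i'$ (or possibly the block of $y_i$ is deeper), and an offset inside the context of that node's label. So I should instead let $\bar y_i$ denote $\overline{y_i}$ directly — the preimage of the chosen image node $y_i$. Then $\bar x_i < \bar y_i$ in $s_i$, Spoiler's preimage move is ``descendant, $\bar y_i$'', Duplicator answers with $\bar y_{1-i}$ of the same label, and Duplicator then picks in the image game the node $y_{1-i}$ with preimage $\bar y_{1-i}$ and the \emph{same offset} as $y_i$; this is a well-defined node of $\alpha(s_{1-i})$ since $\bar y_{1-i}$ has the same label as $\bar y_i$ hence the same context, and it is a proper descendant of $x_{1-i}$ because $\bar x_{1-i} < \bar y_{1-i}$. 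The invariant is restored, and we finish by the induction hypothesis applied with $m - 1 \le n - 1 \le n$.

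The main obstacle I anticipate is the bookkeeping in Case 2 when the offset of $y_i$ is nontrivial: one must check that ``descend to a strictly deeper block'' and ``descend within the current block'' together exhaust Spoiler's possibilities, and that in the cross-block case the preimage move Duplicator simulates is genuinely a \emph{proper} descendant move (which holds precisely because $\bar x_i < \bar y_i$). A symmetric subtlety arises for ancestor moves, where one must observe that the preimage of an ancestor of $x_i$ is an ancestor of $\bar x_i$, possibly equal to $\bar x_i$ when the ancestor sits in the same block — so the ancestor case also splits into an in-block sub-case (handled directly, preimage configuration unchanged) and a cross-block sub-case (handled by simulating an ancestor move in the preimage game). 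Everything else is routine, and I would relegate the introductory-round argument — matching root nodes — to a remark, since it follows the same pattern: a root of $\alpha(s_i)$ has a preimage that is a root of $s_i$ together with an offset, and Duplicator uses his introductory-round preimage response plus offset-copying.
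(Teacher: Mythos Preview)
Your proposal is correct and follows essentially the same approach as the paper: an induction with the two-case split according to whether Spoiler's new node $y_i$ has the same preimage as $x_i$ (your Case~1, the paper's second bullet) or a strictly deeper preimage (your Case~2, the paper's first bullet), with Duplicator copying the offset in both cases and consulting the preimage strategy only in the cross-block case. The one cosmetic difference is that you induct on $m$ while the paper inducts on $n$; your choice is arguably slightly cleaner since in the same-block case you keep the full $n$-round preimage strategy intact rather than implicitly weakening it to $n-1$ rounds, but the underlying argument is identical.
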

\proof
  The proof is by induction on $n$. Consider first the case of
  $n=0$. By assumption on the preimage game, the nodes $\bar x_0$ and
  $\bar x_1$ have the same labels in $s_0,s_1$. Since the two nodes
  $x_0,x_1$ have the same offsets, they must also have the same labels
  in the images $\alpha(s_0),\alpha(s_1)$, and therefore Duplicator wins.

Consider now the induction step. We only do the case when Spoiler
chooses a descendant move, the ancestor move is done the same
way. Assume then that Spoiler chooses $x_i$ and indicates a proper
descendant $y_i$ of $x_i$ in $\alpha(s_i)$.  How should Duplicator respond?
There are two possible cases:
\begin{enumerate}[$\bullet$]
\item The preimage $\bar y_i$ is a proper descendant of $\bar x_i$. We
  now go to the preimage game, and make Spoiler play a descendant move
  where he chooses $\bar y_i$. By assumption on Duplicator winning the
  preimage game, there is a proper descendant of $\bar x_{1-i}$, call
  it $\bar y_{1-i}$, such that Duplicator wins the $(n-1)$-round
  preimage game from configuration $(\bar y_0,\bar y_1, s_0,s_1)$. In
  particular, the nodes $\bar y_0,\bar y_1$ have the same labels in
  the preimage, and therefore the same possible offsets in the
  image. Therefore, there exists a node $y_{1-i}$ in $\alpha(s_{1-i})$ such
  that its preimage is $\bar y_{1-i}$, and this node can be chosen to
  have the same offset as $\bar y_i$.  We now use the induction
  assumption to show that Duplicator wins the rest of the image game
  from configuration $(y_0,y_1,\alpha(s_0),\alpha(s_1))$.
\item If the preimage $\bar y_i$ is not a proper descendant of $\bar
  x_i$, then $\bar y_i = \bar x_i$ and the only difference between
  $y_i$ and $x_i$ is in the offset. Duplicator's response is to choose
  in the forest $s_{1-i}$ a node $y_{1-i}$ that has the same offset as
  $y_i$, and such that $\bar y_{1-i} = \bar x_{1-i}$. We then use the
  induction assumption to show that Duplicator wins the rest of the
  image game.\qed
\end{enumerate}

\subsection{Correctness of the identities}
\label{sec:corr-ident}
We are now ready to show the easier implication in
Theorem~\ref{thm:main}, namely that the syntactic forest algebra of a
language forest-definable in $\fotwo$ satisfies the three identities.
Validity of~(\ref{eq:apcom}) can easily be shown. We omit the proof of
(\ref{eq:da}) for two reasons: first, it is the same as in the word
case, see e.g.~\cite{therienwilkefo2}; and second, it follows along
similar lines as the proof of (\ref{eq:special}).

The rest of this section is devoted to showing the validity of
identity~(\ref{eq:special}). Let $L$ be a forest language
forest-definable in $\fotwo$. We need to show that the syntactic
algebra of $L$ satisfies identity~(\ref{eq:special}).  Recall that
elements of the syntactic algebra are equivalence classes of the
Myhill-Nerode equivalence relation.  Therefore, in order to show the
validity of (\ref{eq:special}), we have to show that for any formula
$\varphi$ of $\fotwo$, for all contexts $p_1 \mainle p_2$ and $q_1
\mainle q_2$, every context $p$ and every nonempty forest $ t$, for
almost all $n \in \Nat$ the formula $\varphi$ is true in some tree of
either both or neither of the forests
\begin{equation}
  \label{eq:non-diff}
      s_0 =     p(p_1 q_1 )^n (p_2 q_2 )^n  t \qquad
 s_1=  p(p_1 q_1 )^n p_1 q_2  (p_2 q_2 )^n   t \ .
\end{equation}
We will use the Ehrenfeucht-Fra\"iss\'e game, and show that Duplicator
can win the $n$-round game over the above two forests.
To keep notation simple, we assume the following \emph{simplifying
  assumptions} are met.
\begin{enumerate}[$\bullet$]
\item The context $p$ is a single node $b$ (in particular, $s_0$ and
  $s_1$ are trees).
\item The forest $ t$ is a single node $a$.
\item The contexts $p_1,p_2,q_1,q_2$ are
\[\eqalign{
    p_1 &= b_1 \cdots b_k\cr
    q_1 &= b_{k+1} \cdots b_m \cr
  }
  \qquad
  \eqalign{ 
    p_2 &= b_1(a_1 + \hole) \cdots b_k(a_k+ \hole)\cr
    q_2 &= b_{k+1}(a_{k+1} + \hole) \cdots b_m(a_m+ \hole)\cr
  }
\]
  for some $k < m$ and $b_1,\ldots,b_m \in B, \ a_1,\ldots,a_m \in A$.
\item The labels $a,a_1,\ldots,a_m,b,b_1,\ldots,b_m$ and $a$ are all
  distinct.
\end{enumerate}
The trees $s_0$ and $s_1$ are shown in Figure~\ref{fig:pumping}.  Why
can we make these simplifying assumptions? The reason is that the
general case follows from this special case by way of homomorphic
images. More specifically, consider the two forests $s_0, s_1$ in the
general case, as given in~(\ref{eq:non-diff}). We want to show that
Duplicator wins the $n$-round game over these two forests. The key
observation is that any two forests $s_0,s_1$ as
in~(\ref{eq:non-diff}) can obtained as homomorphic images
$s_0=\alpha(t_0)$ and $s_1=\alpha(t_1)$ from trees $t_0,t_1$ that
satisfy the simplifying assumption, for some (two-sorted) morphism
$\alpha$. As long as we know how Duplicator can win the game over the
simpler trees $t_0, t_1$, we can use
Proposition~\ref{prop:ef-transfer} to transfer this result to the
forests $s_0,s_1$.

\begin{figure}
  \centering
  \includegraphics[scale=0.65]{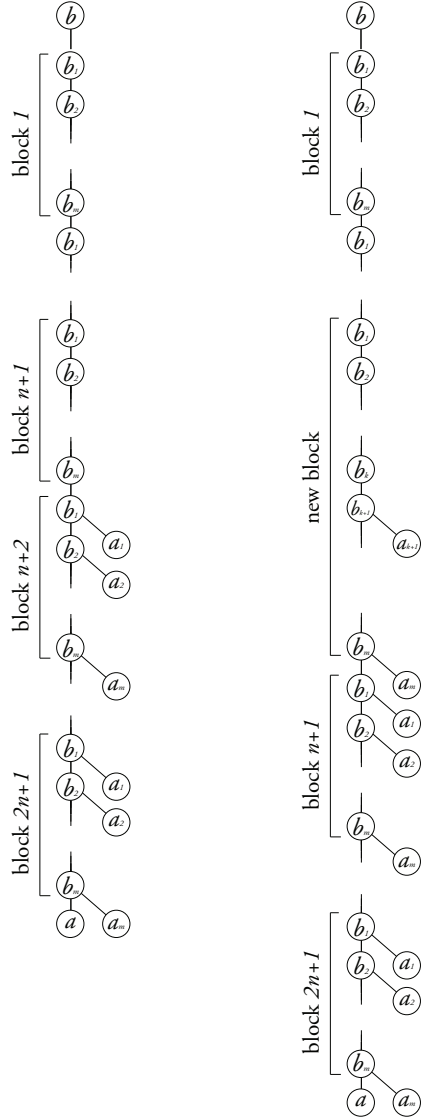}
  \caption{The trees $s_0$ and $s_1$}
  \label{fig:pumping}
\end{figure}

We now proceed to describe a winning strategy for Duplicator over
trees $s_0,s_1$ that satisfy the simplifying assumptions. We use the
term \emph{main path} for the ancestors of the node $a$. The
\emph{projection} of a node onto the main path is its closest ancestor
(not necessarily proper) that is on the main path.  For a node in
either $s_0$ or $s_1,$ the \emph{ancestor block count} (respectively,
\emph{descendant block count}) is the number of ancestors with label
$b_m$ (respectively, descendants with $b_1$) of the node's projection
onto the main path.  For $m \le n$, we say that two nodes $x_0,x_1$ in
the trees $s_0,s_1$ are \emph{$m$-similar} if their labels are the
same and moreover one of the conditions in the following invariant
holds:
\begin{enumerate}
      \item The trees $s_0,s_1$ agree on nodes in the subtrees of
        $y_0,y_1$; or
      \item The trees $s_0,s_1$ agree on nodes not in the subtrees of $y_0,y_1$; or
      \item The ancestor and descendant block counts of $x_0,x_1$ are
        both at least $m$.
\end{enumerate}

  \begin{lem}
    Let $m \le n$. If the nodes $x_0,x_1$ are $m$-similar, then
    Duplicator wins the $m$-round game from configuration
    $(x_0,x_1,s_0,s_1)$.
  \end{lem}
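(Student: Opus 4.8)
The plan is to prove the lemma by induction on $m$, maintaining the $m$-similarity invariant as the loop invariant of Duplicator's strategy. The base case $m = 0$ is immediate: $m$-similarity already includes the requirement that $x_0$ and $x_1$ have the same label, so Duplicator wins the $0$-round game. For the induction step, suppose $x_0, x_1$ are $m$-similar with $1 \le m \le n$ and Spoiler makes a move, say choosing $x_i$ and a proper descendant $y_i$ (the ancestor move is symmetric, using the backward modality). Duplicator must produce a proper descendant $y_{1-i}$ of $x_{1-i}$ so that $y_0, y_1$ are $(m-1)$-similar, and then invoke the induction hypothesis.

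The core of the argument is the case analysis according to which of the three clauses of the invariant holds for $x_0, x_1$. In clauses (1) and (2) — where $s_0, s_1$ literally agree on a region of the tree containing one of the two ``sides'' — Duplicator plays the ``mirror'' node: if Spoiler's $y_i$ lands in a region where the two trees are identical, Duplicator picks the same node in the other tree, and the agreement is preserved, so $y_0, y_1$ satisfy the same clause and are $(m-1)$-similar; if $y_i$ escapes that region, then it lands on the main path or in the copied block structure, and one shows the block counts are still large enough (at least $m-1$) to fall into clause (3). The interesting case is clause (3), where both block counts of $x_i$ are $\ge m$. Here, when Spoiler descends to $y_i$, Duplicator wants to match the \emph{offset of $y_i$ within its block}, the \emph{label} of $y_i$, and arrange that the new block counts are each $\ge m-1$. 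The key geometric fact — which is why the identity~(\ref{eq:special}) takes the shape it does — is that in both $s_0$ and $s_1$ the ``middle'' region consists of $n$ copies of $p_2 q_2$ on one side and $n$ copies of $p_1 q_1$ (with one $p_1 q_2$ seam) on the other, so as long as at least one block is consumed from each direction there is enough slack to realign, using the fact that the contexts satisfy $p_1 \mainle p_2$, $q_1 \mainle q_2$: the ``smaller'' context $p_1$ (resp.\ $q_1$) embeds into $p_2$ (resp.\ $q_2$) along the main path, so any node Spoiler picks inside a $p_1$- or $q_1$-block has a label-and-offset twin inside the corresponding $p_2$- or $q_2$-block, and vice versa. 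One checks that descending by one node can drop each block count by at most one, so $\ge m$ becomes $\ge m-1$, and clause (3) is re-established for $y_0, y_1$.

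I expect the main obstacle to be the bookkeeping in clause (3): precisely specifying, given Spoiler's $y_i$, which node $y_{1-i}$ Duplicator picks so that simultaneously the labels match, the block counts stay $\ge m-1$, \emph{and} — in the boundary subcase where $y_i$ sits near the $p_1 q_2$ seam of $s_1$ or near the top/bottom of the middle region — the configuration degenerates gracefully into clause (1) or clause (2) rather than clause (3). This is where the asymmetry between $s_0$ and $s_1$ bites, and it is essentially the combinatorial heart of the whole correctness direction; the $\mainle$ relations are exactly what make the label-and-offset matching possible across the two sides. Everything else — the symmetric treatment of ancestor moves, and the reduction from arbitrary moves back to an application of the induction hypothesis — is routine once the invariant is shown to be preserved.
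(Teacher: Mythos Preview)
Your approach---induction on $m$, maintaining $m$-similarity as the invariant, with case analysis on the three clauses---is exactly what the paper does (the paper's own proof is in fact far sketchier: it treats only the single case where Spoiler descends from $x_1$ into the ``new block'' $p_1q_2$ of $s_1$, and leaves everything else to the reader).

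One claim in your clause-(3) analysis is wrong as stated, however. You write ``descending by one node can drop each block count by at most one, so $\ge m$ becomes $\ge m-1$.'' In this Ehrenfeucht--Fra\"iss\'e game a descendant move lets Spoiler jump to \emph{any} proper descendant, not just a child, so a single move can drop the descendant block count by an arbitrary amount. The repair is the one you already gesture at in your ``obstacle'' paragraph, and it should be the primary case split rather than an afterthought: if Spoiler's $y_i$ still has both block counts $\ge m-1$, Duplicator picks a same-label node $y_{1-i}$ below $x_{1-i}$ with both counts $\ge m-1$ (such a node exists because $m-1 < n$ and $x_{1-i}$ has descendant block count $\ge m$); if instead $y_i$'s descendant block count has fallen below $m-1$, then $y_i$ lies in the bottom segment of $(p_2q_2)$-blocks, which is literally identical in $s_0$ and $s_1$, so Duplicator mirrors and invokes clause~(1). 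Ancestor moves are symmetric with clause~(2) in place of clause~(1).

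A minor point: under the simplifying assumptions the relation $\mainle$ plays no direct role inside the game argument. Label matching across blocks is available outright because all the labels $b_1,\ldots,b_m,a_1,\ldots,a_m$ are distinct and the block structure is explicit; the general $\mainle$ hypotheses are discharged earlier, via the morphism transfer in Proposition~\ref{prop:ef-transfer}, before the game is ever played.
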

  \begin{proof}
    The proof is by induction on $m$. For the base case $m=0$ we use
    the assumption that the labels are the same. Consider now the
    induction step. We only do one case, when Spoiler chooses a
    descendant move to go from $x_1$ to a node $x'_1$ in the ``new
    block'' of $s_1$ (the new block is the context $p_1q_2$). This
    Spoiler move means that $x_0,x_1$ are $m$-similar for reason (2)
    or (3), since item (1) forbids a descendant of $x_1$ in the new
    block.  What is Duplicator's response?  Note that for all nodes in
    the new block, both the ancestor and descendant block counts are
    at least $n \ge m-1$.  Duplicator goes to any node $x'_0$ in the
    tree $s_0$ where the ancestor and descendant block counts are both
    at least $m-1$.  This must be possible, since either one of items
    (2) or (3) of the invariant was true for $x_0,x_1$. The rest of
    the game is played according the induction assumption, since
    $x'_0$ and $x'_1$ are $(m-1)$-similar.
  \end{proof}
  By taking $m = n$ in the above lemma, we get the desired
  result. This is because the two roots of $s_0,s_1$ have the same
  (empty) prefixes, thus they are $m$-similar, and must therefore
  satisfy the same formulas of size $m=n$.

\section{Completeness}
\label{sec:completeness}
This section is devoted to showing:
\begin{prop}\label{prop:completeness}
  Any forest language recognized by a forest algebra satisfying
  (\ref{eq:apcom}), (\ref{eq:da}) and (\ref{eq:special}) can be
  forest-defined.
\end{prop}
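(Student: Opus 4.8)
The plan is to fix a surjective forest-algebra morphism $\alpha:\freef A B\to(H,V)$ onto a finite algebra satisfying (\ref{eq:apcom}), (\ref{eq:da}) and (\ref{eq:special}), and to show that $\alpha^{-1}(h)$ is forest-definable for every $h\in H$. The first step is a reduction using the semilattice identity (\ref{eq:apcom}): the type $\alpha(t_1+\cdots+t_n)$ depends only on the \emph{set} $\{\alpha(t_1),\dots,\alpha(t_n)\}$ of types of the constituent trees, so $\alpha^{-1}(h)$ is a boolean combination of languages of the form ``some tree in the forest has type $g$'', and it suffices to construct, for each tree type $g$, a formula $\varphi_g$ of $\fotwo$ that is true at the root of a tree $t$ exactly when $\alpha(t)=g$.

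This last statement is proved by induction on a complexity measure of the algebra that combines the reachability order on forest components with the size of the vertical monoid; the only structural tool used for gluing is the antichain composition principle. The outer layer of the induction treats the \emph{forest}-component order top-down. Given $g$, every subtree type occurring in a tree of type $g$ lies in a forest component at least that of $g$; one cuts $t$ at the antichain of topmost nodes whose subtree type lies in a \emph{strictly} higher component. The subtrees hanging below these nodes are recognised by the induction hypothesis, since their types live in strictly higher components, so by the antichain composition principle it suffices to recognise the contracted tree, in which every subtree type now lies in the single forest component of $g$.

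The heart of the argument is therefore the case of a tree living inside one forest component, which is attacked by an inner induction on the vertical monoid via an analysis along a single root-to-leaf main path. Reading $\alpha$ along the path yields a word over $V$, and since (\ref{eq:da}) forces $V$ into the variety DA, the sequence of prefix types stabilises: going down the path it passes through a bounded number of idempotent ``plateaus'' of non-increasing $\mathcal J$-class, with $e=eve$ on each plateau for every context type $v$ occurring on it. A single plateau contributes a type fully determined by its idempotent together with the set of side-forest types it meets --- unordered and duplicate-insensitive by (\ref{eq:apcom}), and each such side forest again handled by the induction hypothesis --- and this is describable by a $\fotwo$ formula that navigates up and down between witnesses of the relevant side types and plateau boundaries, exactly as rankers describe DA-definable word languages.

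The new ingredient, and the step I expect to be the main obstacle, is the transition between two consecutive plateaus: a $\fotwo$ formula cannot locate the exact node where one plateau ends and the next begins, hence cannot control the ``leftover block'' at the junction. When the two plateaus are built from $(u_1w_1)$ and $(u_2w_2)$ with $u_1\mainle u_2$ and $w_1\mainle w_2$ --- the shape that arises once the side forests of the upper plateau have been contracted --- identity (\ref{eq:special}) says precisely that the value is insensitive to how that leftover block is filled, since $(u_1w_1)^\omega(u_2w_2)^\omega=(u_1w_1)^\omega u_1w_2(u_2w_2)^\omega$. I would package this as a normalisation lemma asserting that any tree whose main path crosses such a boundary is $\alpha$-equivalent to one in a normal form to which the plateau analysis and the antichain composition principle apply directly; together with the trivial base case of the one-element algebra, this closes both inductions.
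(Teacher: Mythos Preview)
Your outer reduction via forest components and the antichain composition principle is sound and matches the paper. The inner argument, however, has two genuine gaps. First, the recursion on side forests is not well-founded: a side forest hanging off your chosen main path may have a type in the \emph{same} forest component as $g$, so your outer induction does not apply, and you give no parameter that shrinks when passing to a side forest. The paper avoids this by carrying two extra induction parameters, a context type $v$ and a ``trimming'' set $X\subseteq H$, and proving the stronger statement that $\{t:v\alpha(t)=h\}$ is forest-definable modulo $X$; each reduction step visibly decreases one of $|H|$, $|H\setminus X|$, $|vV|$, $|B|$.

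Second, and more seriously, your invocation of (\ref{eq:special}) at ``plateau transitions'' does not type-check. The hypothesis $u_1\mainle u_2$ means that $u_1$ is obtained from $u_2$ by deleting side forests along the main path; in particular $u_1$ and $u_2$ must share the same sequence of inner-node labels on that path. Two consecutive plateaus in a DA analysis sit at different depths and in general involve different labels, so there is no reason for them to be $\mainle$-related, and contracting the upper plateau's side forests does not help. In the paper, (\ref{eq:special}) is not used to glue plateaus at all: it appears only in the base case (one forest component outside $X$, every label keeps $v$ in its context component) to prove a \emph{collapse} $vf=vg$ for all $f,g\in H\setminus X$. The mechanism (Lemma~\ref{lemma:letter-reach}) is to build, for each $h$, a ``constant'' context $u_h=(f_1+\alpha(b_1))\cdots(f_n+\alpha(b_n))$ with $u_hf=h$ for every $f$, and to observe that the stripped version $\alpha(b_1)\cdots\alpha(b_n)$ satisfies $\alpha(b_1)\cdots\alpha(b_n)\mainle u_h$ \emph{by construction}; identity~(\ref{eq:special}) then lets one swap $u_h$ for the stripped version inside suitable idempotent powers. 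The base-case formula ends up being essentially ``true'' or ``false'', not a ranker-style path description.
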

The above statement immediately implies the more difficult ``if'' part
of Theorem \ref{thm:main}. Indeed, if $L$ is recognized by an algebra
satisfying (\ref{eq:apcom}), (\ref{eq:da}) and (\ref{eq:special}),
then its syntactic algebra satisfies these identities. This is because
the syntactic algebra is a morphic image of any algebra recognizing
the language, and identities are preserved by morphic images.

Let $X \subseteq H$ be a set of forest types. We say a forest $ t$ is
\emph{$X$-trimmed} if the only subtrees of $ t$ that have a type in
$X$ are leaves. We say a tree language $L$ is tree-definable
\emph{modulo $X$} if there is a formula $\varphi$ such that
\[
  t \mbox{ satisfies } \varphi \qquad \miff \qquad t \in L
\]
holds for all $X$-trimmed trees (for other trees, $\varphi$ may
disagree with $L$). In a similar fashion, we define a forest language
that is forest-definable modulo $X$.

Instead of Proposition~\ref{prop:completeness}, we show the slightly
more general result below, which contains the induction parameters
that appear in the proof.
\begin{prop}\label{induction:prop}
  Let $ \alpha : \freef A B \to (H,V)$ be a morphism, with $(H,V)$
  satisfying identities~(\ref{eq:apcom}), (\ref{eq:da})
  and~(\ref{eq:special}).  Let $X \subseteq H$ be a set of forest
  types, and let $v \in V$ be a context type. For each forest type $h
  \in H$ the following forest language is forest-definable modulo $X$:
\begin{equation}
  \label{eq:main-ind}
  \set{  t :  v(\alpha( t))= h}\ .
\end{equation}
\end{prop}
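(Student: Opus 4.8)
The plan is to peel the statement down to single trees and then run an induction whose effect is to enlarge the trim set $X$.

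\textbf{Reduction to trees.} By identity~(\ref{eq:apcom}) the horizontal semigroup $(H,+)$ is commutative and idempotent, i.e.\ a semilattice, so $\alpha(t_1+\cdots+t_n)$ depends only on the \emph{set} $\{\alpha(t_1),\ldots,\alpha(t_n)\}$ of top-level subtree types. Hence ``$v(\alpha(t))=h$'' is a finite Boolean combination of the conditions ``some top-level subtree of $t$ has type $g$'', for $g$ ranging over $H$. It therefore suffices to prove the tree version: for every $g\in H$ the tree language $\{s:\alpha(s)=g\}$ is tree-definable modulo $X$. Indeed, given tree formulas $\psi_g$ witnessing this, the forest formula ``some tree in the forest satisfies $\psi_g$'' has exactly the meaning ``some top-level subtree has type $g$'' on every $X$-trimmed forest: a top-level subtree of an $X$-trimmed forest is itself $X$-trimmed and its root has no ancestors in the forest, so evaluating $\psi_g$ there agrees with evaluating it in that subtree alone. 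If $g\in X$ there is nothing to do, since an $X$-trimmed tree of type $g$ is a subtree of itself with type in $X$, hence a leaf, and $\psi_g$ is just a disjunction of leaf labels. The whole content is in the case $g\notin X$.

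\textbf{The induction and the easy subcase.} I would induct on $|H\setminus X|$; the base case $X=H$ is immediate, as an $X$-trimmed forest then has only leaves as subtrees, hence is flat, and $v(\alpha(t))=h$ is a Boolean combination of ``the leaf $a$ occurs''. For the step, fix $g\notin X$ and put $Y=X\cup\{g\}$, so $|H\setminus Y|<|H\setminus X|$ and the inductive hypothesis gives the statement with $X$ replaced by $Y$. Call a tree \emph{$Y$-inner-trimmed} if it is of the form $bt'$ with $t'$ a $Y$-trimmed forest, i.e.\ all of its proper non-leaf subtrees avoid $Y$. For such $s=bt'$ one has $\alpha(s)=v_b\,\alpha(t')$, where $v_b$ is the context type $\alpha(b\hole)$, so $\alpha(s)=g$ iff the root reads $b$ and $\alpha(t')\in v_b^{-1}(g)$; since $t'$ is $Y$-trimmed, the latter condition on the child-forest is forest-definable modulo $Y$ by the inductive hypothesis, and relativizing that forest formula to the strict descendants of the root --- exactly as in the $\tlx K$ construction in the proof of Proposition~\ref{prop:from-tree-to-forest} --- yields an $\fotwo$ formula $\rho_g$ that is correct on all $Y$-inner-trimmed trees.

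\textbf{From $X$-trimmed to $Y$-inner-trimmed, and the main obstacle.} It remains to reduce a general $X$-trimmed tree $s$ to a $Y$-inner-trimmed one. The idea is to cut away the \emph{maximal} non-leaf subtrees of type $g$, replacing each by a fresh leaf letter $c$ with $\alpha(c)=g$ (extend $\alpha$ to the larger alphabet; since $c$ receives the same type $g$, cutting changes no subtree type, in particular preserving membership in $\{s:\alpha(s)=g\}$); the resulting tree is $Y$-inner-trimmed, so running $\rho_g$ on it is correct. If the cut subtrees form an $\fotwo$-definable antichain, the antichain composition principle turns this into a genuine $\fotwo$ formula. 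This last point is the main obstacle, for two linked reasons: ``being the root of a type-$g$ subtree'' is precisely what we are trying to define, so the antichain is not available directly; and cutting only the maximal such subtrees is exactly what is needed (non-maximal ones then disappear inside the cuts), but a naive attempt to identify them invites an unbounded iteration of cuts. This is where identities~(\ref{eq:da}) and~(\ref{eq:special}) enter. Along any downward path the subtree types are monotone for the reachability preorder on $H$, so a long chain of nested type-$g$ subtrees squeezes a long path segment whose subtree types all lie in a single forest component; on such a segment identity~(\ref{eq:da}) (DA-ness of the vertical monoid) together with identity~(\ref{eq:special}) --- whose role is precisely to compare a context with the same context after sibling forests have been removed (the relation $\mainle$) --- guarantees that $\fotwo$ cannot distinguish that segment from a bounded one. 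I therefore expect the step to carry a secondary induction on the forest-component structure of $H\setminus X$: either the chain of type-$g$ subtrees drops into a strictly smaller component, so the cutting terminates, or one works inside a single component and identities~(\ref{eq:da}) and~(\ref{eq:special}) let the formula ``pump'' through it. Making the antichains genuinely $\fotwo$-definable and verifying that the two identities supply the required pumping is where the bulk of the work lies.
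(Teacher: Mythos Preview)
Your reduction to trees and your base case are fine, but the induction step is where the proposal stops being a proof. You correctly name the obstacle --- making the cut points $\fotwo$-definable --- and then leave it as ``where the bulk of the work lies''. That bulk is not a routine calculation; it is the whole content of the proposition, and your sketch of how identities~(\ref{eq:da}) and~(\ref{eq:special}) would resolve it does not match what actually makes the argument go through.

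Concretely, three things in the paper's proof are missing from your plan. First, the paper does not add a single type $g$ to $X$; it adds an entire \emph{forest component} $G\subseteq H\setminus X$, chosen minimal with respect to reachability. This is what makes the cut points detectable: a formula $\psi$ can test ``subtree type lies in $G$'' by purely local inspection of twigs, labels, and bad-brother leaves (Lemma~\ref{lemma:find-h}), precisely because nothing outside $X\cup G$ can reach into $G$. Your single-type cut has no such local criterion. Second, to then distinguish the individual types inside $G$ (Lemma~\ref{lemma:describe-h}), the paper builds a \emph{strictly smaller} forest algebra $(F,F^F)$ and invokes a further induction on $|H|$; your induction carries only the parameter $|H\setminus X|$ and has no mechanism for shrinking the algebra. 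Third, the paper carries two more induction parameters, $|vV|$ and $|B|$, which drive a separate case (Section~\ref{sec:not-all-labels-preserve}): when some inner label $b$ pushes $v$ out of its context component, one cuts at topmost $b$'s and reduces the inner alphabet. By collapsing everything to ``tree has type $g$'' you have discarded the parameter $v$, and with it this case.

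Finally, identity~(\ref{eq:special}) is not used to ``pump through'' long chains during the cutting step; it is used only in the residual base case (Section~\ref{sec:otherwise}), where all of $H\setminus X$ lies in a single forest component and every inner label keeps $v$ in its context component. There~(\ref{eq:special}) together with~(\ref{eq:da}) proves the algebraic collapse $vf=vg$ for all $f,g\in H\setminus X$ (Lemma~\ref{lemma:letter-reach}), which makes the defining formula essentially trivial. This is exactly where the $v$ you threw away earns its keep.
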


For the rest of Section~\ref{sec:completeness}, we fix 
$\alpha : \freef A B \to (H,V)$, $h \in H$, $v \in V$ and $X \subseteq
H$ from Proposition~\ref{induction:prop}. 
Clearly Proposition~\ref{prop:completeness} follows from the above
result, taking $X=\emptyset$, $v$ to be the empty context type
$\hole$, and doing a disjunction over all forest types $h \in
\alpha(L)$.  The rest of Section~\ref{sec:completeness} is devoted to
a proof of Proposition~\ref{induction:prop}.  The proof is by
induction on four parameters:
\begin{enumerate}
\item The size of $H$, i.e.~the number of all forest types.
\item The size of $H \setminus X$, i.e.~the number of forest types that can be found outside leaves.
\item The size of $vV$, i.e.~the number of context types reachable
      from $v$.
\item The size of $B$, i.e. the number of inner node labels.
\end{enumerate}
The order of these parameters is important: first we try to minimize
$H$, then the other three parameters (the order for the other three is
not important). Note that the last parameter depends on the alphabet
$B$, and the notion ``modulo $X$'' depends on the morphism.

We say a morphism $\alpha$ into $(H,V)$ is \emph{leaf saturated} if
for every $h \in H$, there is a representative leaf label $a$ whose
type $\alpha(a)$ is $h$. In the rest of this section, we will only
consider such morphisms.  By adding leaf labels, any morphism can be
extended to one that is leaf saturated, without affecting the target
forest algebra.

We begin by outlining our proof strategy for
Proposition~\ref{induction:prop}. We will consider three possible
cases.  First, in Section~\ref{sec:not-all-labels-preserve}, we see
what happens when some inner node label $b \in B$ has the property
that $v$ cannot be reached from $v\alpha(b)$. Then, in
Section~\ref{sec:one-forest-comp}, we see what happens if $H \setminus
X$ intersects more than one forest component, i.e.~contains at least
two forest types that are not mutually reachable.  Finally, in
Section~\ref{sec:otherwise}, we show that if neither of the above
holds, then the formula $\varphi$ in Proposition~\ref{induction:prop}
can basically be replaced by either ``true'' or ``false''.

\subsection{For some inner node label $b \in B$, $v$ is not reachable
  from $v\alpha(b)$ }\label{sec:not-all-labels-preserve} We begin with
this case, which is the easiest of the three. The basic idea is that
we cut the forest into two parts, by looking at the first occurrence
of $b$ on each path, beginning at the root. Since after reading the
label $b$, the context type $v$ is no longer reachable, we can use the
induction assumption to calculate the subtree below each such first
$b$. These subtrees can then be squashed into single leafs using the
antichain composition principle, and therefore the induction
assumption can be used on a smaller alphabet of inner node labels,
which now no longer contains $b$.

We say that two forest types $h,g \in H$ are \emph{$v$-equivalent} if
$vuh=vug$ holds whenever $v$ is not reachable from $vu$.

\begin{lem}\label{lemma:conteqplus}
  For each $h$, the set of forests whose type is $v$-equivalent to $h$
  is forest-definable modulo $X$.
\end{lem}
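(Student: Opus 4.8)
The plan is to reduce the statement to an application of the induction hypothesis of Proposition~\ref{induction:prop} on an algebra with strictly fewer inner node labels, together with the antichain composition principle. First I would unwind the definition of $v$-equivalence: two forest types $h,g$ are $v$-equivalent if $vuh = vug$ for every context type $u$ such that $v$ is not reachable from $vu$. Since we are in the case of Section~\ref{sec:not-all-labels-preserve}, there is a fixed inner node label $b\in B$ with $v$ not reachable from $v\alpha(b)$; hence for any context type $w$, $v$ is not reachable from $v\alpha(b)w$ either (reachability is transitive, and $v\alpha(b)wV \subseteq v\alpha(b)V$). This means that below the first occurrence of $b$ on any path from the root, the exact type of the subtree no longer matters, only its $v$-equivalence class.

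Concretely, I would isolate the antichain formula $\varphi_b$ saying ``this node has label $b$ and no proper ancestor has label $b$'' -- this is antichain (it picks out the topmost $b$'s on each root-to-leaf path) and is expressible in $\fotwo$ as $b \land \neg\tlf^{-1} b$. For each $v$-equivalence class $c$, the set $L_c$ of trees whose root has label $b$ and whose forest of children has a type lying in $c$ is itself forest-definable modulo $X$: by the induction hypothesis applied to the morphism $\alpha$, the context type $v\alpha(b)$, and each target forest type, we get that $\set{t : v\alpha(b)\alpha(t) = h'}$ is forest-definable modulo $X$; taking an appropriate boolean combination over the $h'$ in the class $c$ gives $L_c$ (here I use that, since $v$ is not reachable from $v\alpha(b)$, the parameter $|v\alpha(b)V|$ is strictly smaller than $|vV|$, so the induction hypothesis is available). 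Then, using the antichain composition principle with the antichain formula $\varphi_b$, the languages $L_c$, and fresh leaf labels $a_c$ (one per $v$-equivalence class), I replace the topmost $b$-rooted subtrees by single leaves carrying the label of their class. After this substitution the resulting tree uses the inner node alphabet $B\setminus\set b$ and a larger leaf alphabet, and the condition ``$\alpha(t)$ is $v$-equivalent to $h$'' translates, via the fixed relationship above, into a condition ``$v'\alpha'(t') = $ (something in the $v$-equivalence class of $h$)'' on the trimmed tree $t'$, where $\alpha'$ extends $\alpha$ by sending each $a_c$ to a representative of the class $c$ (leaf saturation lets us do this) and $v' = v$; this new instance has strictly smaller $|B|$, so the induction hypothesis of Proposition~\ref{induction:prop} applies and yields a forest formula. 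Pulling this formula back through the antichain composition principle gives the required forest formula modulo $X$ for the original language.

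The main obstacle I anticipate is bookkeeping the ``modulo $X$'' condition correctly through the substitution: the antichain composition principle as stated does not mention a trimming set, so I need to check that $X$-trimmedness of the original tree corresponds in a controlled way to $X'$-trimmedness (for an appropriate $X'$) of the substituted tree, and that the substituted leaves do not introduce spurious occurrences of types in $X$ -- or, more cleanly, that one can absorb this by choosing $X'$ and the leaf labels $a_c$ so that the implication we need only has to hold on $X$-trimmed trees in the first place. A secondary point to get right is the faithfulness/well-definedness: different trees with the same $v$-equivalence class of children must be assigned the same class label $a_c$, which is exactly what makes $\varphi_b$ together with the $L_c$'s fit the input format of the antichain composition principle (disjoint languages $L_c$), and here disjointness is immediate because $v$-equivalence classes partition $H$.
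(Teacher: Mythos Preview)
Your approach is far more elaborate than the lemma requires, and it contains a genuine gap. The paper's proof is three lines: for each context type $u$ such that $v$ is not reachable from $vu$, invoke the induction hypothesis of Proposition~\ref{induction:prop} with the context type $vu$ in place of $v$ (this strictly decreases the third parameter, since $v \in vV$ but $v \notin vuV \subseteq vV$), yielding that $\set{s : vu\alpha(s) = vuh}$ is forest-definable modulo $X$. The language in the lemma is then simply the finite intersection of these sets over all such $u$, by the very definition of $v$-equivalence.

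The gap in your argument is in the step where you define the languages $L_c$. You claim that the induction hypothesis with the single context type $v\alpha(b)$ lets you recover the $v$-equivalence class of the forest below a topmost $b$-node. But that induction hypothesis only gives you the value $v\alpha(b)\alpha(t)$, and this does \emph{not} determine the $v$-equivalence class of $\alpha(t)$: two forest types $g,g'$ lying in distinct $v$-equivalence classes may well satisfy $v\alpha(b)g = v\alpha(b)g'$, since $v$-equivalence demands agreement under \emph{every} $u$ with $v$ not reachable from $vu$, not just under $u = \alpha(b)$. The obvious repair --- use the induction hypothesis for every such $u$ rather than only for $\alpha(b)$ --- is precisely the paper's proof, and once you have it the antichain composition and alphabet reduction you propose become unnecessary for this lemma. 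Those techniques are deployed \emph{after} this lemma, in Lemma~\ref{lemma:v-squash} and the surrounding argument, which consume Lemma~\ref{lemma:conteqplus} as input. In short, you have conflated the proof of this small helper lemma with the proof of the entire case treated in Section~\ref{sec:not-all-labels-preserve}.
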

\begin{proof}
  Fix some context type $u$ such that $v$ is not reachable from $vu$.
  By induction assumption---the third parameter is decreased---the set
  of forests $s$ satisfying $vu\alpha(s)=vuh$ is forest-definable
  modulo $X$. The set in the statement of the lemma is the
  intersection, over $u$, of all these sets.
\end{proof}

\begin{lem}\label{lemma:rgreen-preservers}
  If $v, w, vu \in V$ are in the same context component, then so is $wu$.
\end{lem}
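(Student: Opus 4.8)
The plan is to translate everything into Green's $\rgreen$-relation on the vertical monoid $V$. Being in the same context component means being mutually reachable, i.e.\ $\rgreen$-equivalent; so we are given $v\,\rgreen\,w$ and $v\,\rgreen\,vu$, and we must show $w\,\rgreen\,wu$. Since $wu=w\cdot u$ is always reachable from $w$, it suffices to prove that $w$ is reachable from $wu$, that is, $w\in wuV$. First I would fix $p,q,s\in V$ with $v=wp$, $w=vq$ and $v=vus$ (the last because $v$ is reachable from $vu$). Substituting gives $w=vq=vusq=wpusq$, so $w=wc$ with $c:=pusq$; iterating, $w=wc^{k}$ for all $k$, hence $w=we$ where $e:=c^{\omega}$ is the idempotent power of $c$ --- available because $V$ is aperiodic by~(\ref{eq:da}).

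Next I would apply the $\mathbf{DA}$ identity twice, in its two equivalent forms. Using $(xy)^{\omega}=(xy)^{\omega}x(xy)^{\omega}$ with $x:=pus$, $y:=q$ gives $e=e\,(pus)\,e$, and therefore
\[
 w=we=w\,(pus)\,e=(wp)(us)e=v(us)e=(vus)e=ve ,
\]
using $we=w$, $wp=v$ and $vus=v$. So $w=ve$. Using $(xy)^{\omega}=(xy)^{\omega}y(xy)^{\omega}$ with $x:=p$, $y:=usq$ gives $e=e\,(usq)\,e=\bigl(e(us)\bigr)(qe)$, which exhibits $e$ as reachable from $e(us)$; since $e(us)=e\cdot(us)$ is trivially reachable from $e$, we obtain $e(us)\,\rgreen\,e$.

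Now comes the one step that genuinely uses more than aperiodicity: from $e$ idempotent and $e(us)\,\rgreen\,e$ I want to conclude $e(us)\,e=e$. This holds because in $\mathbf{DA}$ the context component of an idempotent $e$ consists of the elements of a regular $\mathcal{D}$-class, which is a rectangular band; multiplying an element $e(us)$ of that component on the right by the idempotent $e$ of the same component returns $e$. (This is exactly where~(\ref{eq:da}) is needed: in a general finite monoid $e(us)e$ may drop strictly below $e$ in the reachability order, and indeed the lemma itself fails without~(\ref{eq:da}).)

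Finally I would put the pieces together. From $w=ve$ and $v=vus$ we have $ve=vuse$, and right-multiplying $w=ve$ by $(us)e$ gives
\[
 w\,(us)e=ve\,(us)e=vus\,\bigl(e(us)e\bigr)=vus\,e=ve=w .
\]
Hence $w=w\bigl((us)e\bigr)^{k}$ for all $k$, so $w=w\bigl((us)e\bigr)^{\omega}$; writing $\bigl((us)e\bigr)^{\omega}=u\,(seu)^{\omega-1}se$ shows $w=wu\cdot(seu)^{\omega-1}se\in wuV$, which is what we wanted, so $wu$ is reachable from $w$ and lies in the same context component as $v,w,vu$. The main obstacle is isolated in the third paragraph; everything else is bookkeeping with the reachability relation and the two forms of the $\mathbf{DA}$ identity.
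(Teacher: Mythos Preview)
Your argument is correct, but it takes a considerably longer route than the paper's three-line proof, and the detour is avoidable.

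The paper's proof is this. From $wv'=v$ and $vuw'=w$ one gets $w(v'uw')=w$, hence $w(v'uw')^\omega=w$ and $w(v'uw')^\omega v'=v$. A single application of identity~(\ref{eq:da}) with the factorisation $v'\cdot uw'$ gives $(v'uw')^\omega=(v'uw')^\omega(uw')(v'uw')^\omega$, and substituting yields
\[
v=w(v'uw')^\omega v'=w(v'uw')^\omega(uw')(v'uw')^\omega v'=wu\cdot w'(v'uw')^\omega v'\in wuV.
\]
That is the whole proof.

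Your setup reproduces this: with $c=pusq$ and $e=c^\omega$ you have $we=w$. At this point you could finish in one line, exactly as the paper does: applying~(\ref{eq:da}) with the factorisation $p\cdot usq$ gives $e=e(usq)e$, hence
\[
w=we=we(usq)e=w(usq)e=wu\cdot(sqe)\in wuV.
\]
Instead you pass through the auxiliary equality $w=ve$, then invoke the rectangular-band structure of regular $\mathcal D$-classes in $\mathbf{DA}$ to get $e(us)e=e$, and finally conclude $w(use)=w$. All of this is valid (the rectangular-band step is right: $e(us)\,\mathcal R\,e$ in a $\mathbf{DA}$ monoid forces $e(us)\cdot e=e$), but it is machinery the lemma does not need. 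Two further simplifications: once you have $w(use)=w$, that \emph{is} $w=wu\cdot(se)$, so the passage to $((us)e)^\omega$ and the expansion $u(seu)^{\omega-1}se$ are superfluous; and the idempotent power $c^\omega$ exists in any finite monoid, so aperiodicity is not what makes it available.

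Finally, you use the ``other'' form $(xy)^\omega x(xy)^\omega=(xy)^\omega$ of the $\mathbf{DA}$ identity. This is equivalent to the form~(\ref{eq:da}) stated in the paper, but since the paper only states one form, a remark to that effect would be in order.
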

\begin{proof}
  By assumption there must be context types $v',w'$ with $vuw'=w$ and
  $wv'=v$. But then we have $wv'uw'=w$. In particular,
  $w(v'uw')^\omega v' = v$. Using identity~(\ref{eq:da}), we get
  \[
    v = w(v'uw')^\omega v'  = w(v'uw')^\omega uw' (v'uw')^\omega v'
    = wuw' (v'uw')^\omega v'\ ,
  \]
  which shows $v$ can be reached from $wu$.
\end{proof}

Let $\gamma_1,\ldots,\gamma_n$ be all the equivalence classes of
$v$-equivalence. For each such class $\gamma_i$, let $L_i$ be the set
of trees $\set{bt : \alpha(t) \in \gamma_i}$. Thanks to
Lemma~\ref{lemma:conteqplus}, each set $L_i$ is tree-definable. For
any $i=1,\ldots,n$, let $h_i$ be an arbitrarily chosen forest type in
the class $\gamma_i$, and let $a_i$ be a leaf label whose type is
$\alpha(b)h_i$. The label $a_i$ exists by assumption on leaf
saturation.  Note that $a_i$ may have a different type than some of
the trees in $L_i$, since $h_i$ need not be the only forest type in
$\gamma_i$.  However, we will show that no information is lost by
squashing subtree in $L_i$ into a single leaf with label $a_i$, at
least as long as the resulting forest is going to be an argument of
$v$. More formally, we show:
\begin{lem}\label{lemma:v-squash}
  Let $\varphi=b \land \neg \tlf^{-1} b$, i.e.~``a~$b$ without $b$
  ancestors''. For any forest $t$ we have
\[
   v \alpha(t) =  v \alpha(t[ (L_1,\varphi) \to a_1,\ldots,
   (L_n,\varphi) \to a_n])\ .
\]
\end{lem}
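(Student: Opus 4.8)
The plan is to prove the identity node by node along the main path, using induction on the structure of $t$, and to reduce the claim to a single local statement about what happens at a first occurrence of $b$. The key observation is the following: the squashing operation $t \mapsto t[(L_1,\varphi)\to a_1,\ldots,(L_n,\varphi)\to a_n]$ only touches subtrees rooted at nodes $x$ where $\varphi$ holds, i.e.\ at the \emph{first} occurrence of $b$ on the path from the root to $x$. Above such a node there is no $b$, and below it nothing is changed. So it suffices to show that for any context $r$ with no occurrence of $b$ on its main path, and any tree $bt'$, replacing $bt'$ by the leaf $a_i$ (where $i$ is the $v$-equivalence class of $\alpha(t')$) does not change the value of $v$ applied to the enclosing forest. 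Concretely, writing the forest as $r'(bt')$ inside some larger forest that $v$ is applied to, we need
\[
  v\,w\,\alpha(bt') \;=\; v\,w\,\alpha(a_i)
\]
whenever $w = \alpha(r')$ is a context type arising from a context $r'$ whose main path, read from its outside inward until the hole, contains no $b$ below the point where it sits inside $t$ --- more precisely, whenever $v$ is not reachable from $vw$.

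First I would verify that this reachability condition indeed holds for the relevant context types. If $x$ is a node where $\varphi = b \land \neg\tlf^{-1} b$ holds, then the portion of $t$ strictly above the subtree of $x$ determines a context type $w$ such that the path from the root down to (but not including) $x$ contains no $b$. Reading label $b$ at $x$ gives context type $w\alpha(b)$. I claim $v$ is not reachable from $v w \alpha(b)$: indeed, by the standing case assumption of Section~\ref{sec:not-all-labels-preserve}, $v$ is not reachable from $v\alpha(b)$, and reachability only shrinks the ideal, so $v$ is not reachable from $v w \alpha(b)$ either. Here I may need Lemma~\ref{lemma:rgreen-preservers} to handle the interleaving of the outer context $v$ with $w$ cleanly: that lemma guarantees that $v$-reachability behaves well under the relevant compositions, so that the hypothesis ``$v$ not reachable from $v u$'' in the definition of $v$-equivalence applies to the context type $u = w\alpha(b)$ that actually arises.

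With that in place, the core computation is immediate from the definitions. The type of the subtree $bt'$ is $\alpha(b)\alpha(t')$, and the type of the replacement leaf $a_i$ is $\alpha(b)h_i$ by the choice of $a_i$ (leaf saturation), where $h_i$ is a representative of the $v$-equivalence class $\gamma_i$ containing $\alpha(t')$. So $\alpha(t')$ and $h_i$ are $v$-equivalent, which by definition means $v u\,\alpha(t') = v u\,h_i$ whenever $v$ is not reachable from $v u$; applying this with $u = w\alpha(b)$ (legitimate by the previous paragraph) gives $v w\alpha(b)\alpha(t') = v w\alpha(b) h_i$, i.e.\ $v w\,\alpha(bt') = v w\,\alpha(a_i)$. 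Since $\varphi$ is antichain, all such replacements happen at pairwise-incomparable nodes and hence can be performed simultaneously without interaction, so composing these local equalities along the (finitely many, incomparable) replaced subtrees yields $v\alpha(t) = v\alpha(t[(L_1,\varphi)\to a_1,\ldots])$. The main obstacle, and the only place where care is really needed, is the bookkeeping in the second paragraph: making precise the claim that the context type $w$ sitting above a $\varphi$-node has the property ``$v$ not reachable from $vw\alpha(b)$'', since the outer context $v$ and the inner context $w$ are composed on opposite sides of the forest argument --- this is exactly what Lemmas~\ref{lemma:conteqplus} and~\ref{lemma:rgreen-preservers} were set up to handle, and it is where identity~(\ref{eq:da}) does its work.
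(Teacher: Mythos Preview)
Your approach is essentially the paper's: isolate a $\varphi$-node, show that $v$ is not reachable from $v$ composed with the context above that node and the label $b$, then use $v$-equivalence to swap the subtree for the leaf $a_i$, and iterate. The paper organizes this as an induction on the number of $b$'s in $t$, peeling off one $\varphi$-node at a time, which is the same bookkeeping as your ``compose local equalities at incomparable nodes''.

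There is one genuine slip, however. Your sentence ``$v$ is not reachable from $v\alpha(b)$, and reachability only shrinks the ideal, so $v$ is not reachable from $vw\alpha(b)$ either'' does not work: the ideal $vw\alpha(b)V$ is contained in $vV$, not in $v\alpha(b)V$, so nothing a priori prevents $v \in vw\alpha(b)V$ even when $v \notin v\alpha(b)V$. You are right that Lemma~\ref{lemma:rgreen-preservers} is the fix, but the actual argument is by contradiction rather than by ideal inclusion: suppose $v$ were reachable from $vw\alpha(b)$. Then $v$, $vw$, and $vw\alpha(b)$ all lie in the same context component. Applying Lemma~\ref{lemma:rgreen-preservers} with the roles $v' = vw$, $w' = v$, $u = \alpha(b)$ (so $v'$, $w'$, $v'u$ are in the same component) forces $w'u = v\alpha(b)$ into that component as well, contradicting the case assumption that $v$ is not reachable from $v\alpha(b)$. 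Note that this argument uses nothing about the main path of the context above the $\varphi$-node; the ``no $b$ on the main path'' fact only explains why the node satisfies $\varphi$, it is not a hypothesis in the reachability step.
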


Before we show this lemma, we show how it concludes the case
considered in this section. Recall that we want to show that the
following language is forest-definable modulo $X$:
\[
 L=  \set{t : v\alpha(t)=h}\ .
\]
By Lemma~\ref{lemma:v-squash}, this is the same language as
\[
  \set{t : t[ (L_1,\varphi) \to a_1,\ldots,
   (L_n,\varphi) \to a_n] \in L}\ .
\]
Since the  substitution operation removes all
letters $b$ from the forest, we get
\[
L=  \set{t : t[ (L_1,\varphi) \to a_1,\ldots,
   (L_n,\varphi) \to a_n] \in K}\ ,
\]
where $K$ is the set of trees in $L$ that do not use the letter $b$.
To $K$ we can apply the induction assumption on a smaller alphabet,
and then use the antichain composition principle to transfer
definability from $K$ to $L$.

We now resume with the proof of Lemma~\ref{lemma:v-squash}.

\proof
  Note first that the tree on the right hand side of the equation is
  well defined, since the languages $L_i$ are disjoint, and $\varphi$
  is an antichain formula. The proof is by induction on the number of
  $b$ nodes in the forest $t$. The induction base, where there are no
  $b$'s, is immediate since the substitution on the right hand side
  does not change the forest.  Otherwise, let $t$ be of the form
  $pbs$, with the context $p$ not containing any $b$'s on the main path,
  and let $L_i$ be such that $bs \in L_i$.  By induction assumption,
  we have
  \[
    \alpha(pa_i[ (L_1,\varphi) \to a_1,\ldots,
   (L_n,\varphi) \to a_n]) = \alpha(pa_i)\ .
  \] 
  By definition of the substitution we have
  \[
       t[ (L_1,\varphi) \to a_1,\ldots,
   (L_n,\varphi) \to a_n]  =  pa_i[ (L_1,\varphi) \to a_1,\ldots,
   (L_n,\varphi) \to a_n]\ ,
  \]
  it therefore remains to show that $v\alpha(pba_i)=v \alpha(pbs)$.
  
  First, we claim that $v$ is not reachable from $v\alpha(pb)$.
  Indeed, if $v$ is not reachable from $v\alpha(p)$ then we are done.
  Otherwise, $v$ and $v \alpha(p)$ are in the same context component. If this
  context component would also contain $v\alpha(pb)$, then by
  Lemma~\ref{lemma:rgreen-preservers} it would also contain $v
  \alpha(b)$, a contradiction with the assumption on $b$.
  
  Recall now the forest type $h_i$ that represented the equivalence class
  $\gamma_i \ni \alpha(s)$. By assumption on $\alpha(s)$ and $h_i$
  being $v$-equivalent, we get
$$ v\alpha(pbs)=v \alpha(pb)\alpha(s)=v \alpha(pb) h_i = v
    \alpha(p)\alpha(b)h_i = v \alpha(p) \alpha(a_i) = v \alpha(pa_i)\ 
    .\eqno{\qEd}
$$

\subsection{There is more than one forest component in $H \setminus
  X$}\label{sec:one-forest-comp} We now turn to the second case in the
proof of Proposition~\ref{induction:prop}.  Let $G \subseteq H$ be
a forest component not included in~$X$.  We pick $G$ so that no forest
type in $G$ can be reached from a forest type outside $X \cup G$.
Intuitively speaking, forest types from $G$ are close to the leaves.
The essential idea in this section is that we will add $G$ to $X$, by
squashing each subtree of type $g$ to a single leaf with the $g$
written in its label. This is done by applying the antichain
composition.

Let $W \subseteq V$ be the set of context types that preserve $G$,
i.e.~context types $w$ such $g$ is reachable from $wg$ for some $g \in
G$.  The following lemma, proved the same way as
Lemma~\ref{lemma:rgreen-preservers}, shows that ``some'' in the above
definition can be replaced by ``all''.
\begin{lem}\label{lemma:reach-preservers}
  If $g, h, vg$ are in the same forest component, then so is $vh$.
\end{lem}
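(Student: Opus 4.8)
The plan is to mirror the proof of Lemma~\ref{lemma:rgreen-preservers}, transporting it from the right-handed reachability order on $V$ (the $\mathcal R$-order) to the left-handed reachability order on $H$. From the hypothesis that $g,h,vg$ all lie in one forest component I would first extract two witnessing context types: since $h$ and $vg$ lie in one forest component there is $w'\in V$ with $h=w'vg$, and since $g$ and $h$ lie in one forest component there is $v'\in V$ with $g=v'h$. Substituting the second equation into the first, and using that insertion is a left action, produces a self-loop
\[
  h = w'vg = w'v(v'h) = (w'vv')h\ ,
\]
and since $V$ is aperiodic (a consequence of~(\ref{eq:da})) this gives $h = (w'vv')^\omega h$.

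The heart of the matter is to pull the letter $v$ out of the idempotent $(w'vv')^\omega$ so that it lands immediately to the left of $h$. Compared with Lemma~\ref{lemma:rgreen-preservers} there is exactly one twist: because $H$-reachability is ``left-handed'' whereas $V$-reachability is ``right-handed'', one needs the left--right dual of~(\ref{eq:da}), namely $(pq)^\omega = (pq)^\omega p (pq)^\omega$. This is available because the class of finite semigroups satisfying~(\ref{eq:da}) is the variety DA, which is closed under reversal, so $V$ satisfies the dual identity as well. Applying it with $p = w'v$ and $q = v'$ gives $(w'vv')^\omega = (w'vv')^\omega (w'v)(w'vv')^\omega$, and hence
\[
  h = (w'vv')^\omega h = (w'vv')^\omega (w'v)(w'vv')^\omega h = \big((w'vv')^\omega w'\big)\,(vh)\ ,
\]
using $(w'vv')^\omega h = h$. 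Thus $h$ is reachable from $vh$. Conversely $vh$ is reachable from $h$ (apply the context $v$), so $h$ and $vh$ are mutually reachable and lie in the same forest component; since $g$ lies in that component as well, $vh$ lies in the forest component of $g$, as required.

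The main obstacle is precisely the observation that the identity~(\ref{eq:da}) as stated is not quite the right tool on the $H$-side: a verbatim translation of the proof of Lemma~\ref{lemma:rgreen-preservers} only ever yields expressions of the form $(\dots)(vg)$ where one wants $(\dots)(vh)$, and it is the self-duality of DA that closes the gap. Everything else is routine bookkeeping with the axioms of forest algebra.
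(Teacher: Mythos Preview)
Your proof is correct and follows exactly the route the paper indicates (``proved the same way as Lemma~\ref{lemma:rgreen-preservers}''), with the added merit that you make explicit a point the paper glosses over: the left/right flip between context reachability and forest reachability forces one to use the dual form $(pq)^\omega p(pq)^\omega=(pq)^\omega$ of identity~(\ref{eq:da}), and your appeal to the self-duality of DA to justify this is valid.
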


Let $F \subseteq H$ be the set of those forest types $f$ from which a forest type in
$G$ can be reached. In particular, we have
\[
  G \subseteq F \subseteq H\ .
\]
Note that all forest types in $F \setminus X$ are from $G$ by choice of $G$.
Furthermore, the inclusion $F \subseteq H$ is proper, since $H
\setminus X$ contains more than one forest component by assumption.
The inclusion $G \subseteq F$ may also be proper, however all forest types in
the difference $F \setminus G$ are from $X$.

We say $f \in H$ is a \emph{bad brother} if for all $g \in G$, we have
$f + g \not \in G$, i.e.~$g$ is not reachable from $f+g$. Likewise, we
say $f \in H$ is a \emph{good brother} if for all $g \in G$, we have
$f + g \in G$, i.e.~$g$ is reachable from $f+g$.  Note that by
definition of $F$, all good brothers are in $F$. Clearly $f$ is a bad brother
if and only if the context type $f+\hole$ is outside $W$.  Therefore
by Lemma~\ref{lemma:reach-preservers}, every forest type in $H$ is
either a good brother or a bad brother.  In particular, all forest
types in $G$ are good brothers, since they cannot be bad brothers by
$g + g = g$.  Furthermore, since $W$ is closed under context
composition, good brothers are closed under forest concatenation,
i.e.~form a subsemigroup of $H$.

We fix the sets $G$, $F$ and $W$ for the rest of
Section~\ref{sec:one-forest-comp}.

A \emph{twig} is a tree of depth exactly two, i.e.~a root and some
leaves. A \emph{twig node} is a node whose subtree is a twig.

\begin{lem}\label{lemma:find-h}
  There is a formula $\psi$ such that in any $X$-trimmed tree, $\psi$
  holds in nodes with a subtree of type in $G$.
\end{lem}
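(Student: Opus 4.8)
The plan is to build the formula $\psi$ from the bottom up, using the induction hypothesis of Proposition~\ref{induction:prop} on a strictly smaller algebra together with the antichain composition principle. The key observation is that $G$ is a forest component chosen minimally (no forest type in $G$ is reachable from a type outside $X\cup G$), so once a subtree has type in $G$, every larger forest obtained by adding siblings or sitting it inside a context of the appropriate kind either stays in $G$ or leaves it in a controlled way; this is exactly what Lemma~\ref{lemma:reach-preservers} and the good/bad brother dichotomy give us. Since we are working in $X$-trimmed trees, every subtree of type in $X$ is a leaf, so a node $x$ with a subtree of type in $G$ is either a leaf (its own type is in $G\setminus X$, if nonempty, or in $G\cap X$) or a genuinely non-leaf node whose subtree, being $X$-trimmed, we can analyze recursively.

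\textbf{First I would} handle the twig case directly. For a twig node $x$, the subtree is $b(a_1+\cdots+a_k)$ for leaf labels $a_j$, and its type is $\alpha(b)(\alpha(a_1)+\cdots+\alpha(a_k))$, which lies in $G$ iff the forest-type $\alpha(a_1)+\cdots+\alpha(a_k)$, pushed through $\alpha(b)$, lands in $G$. The set of forests $t$ with $\alpha(b)\alpha(t)\in G$ is, by the induction assumption applied with the context type $v$ replaced by $\alpha(b)$ and with the target set taken to be a union over $h\in G$, forest-definable modulo $X$ --- this uses that $|G|<|H|$ when $G$ itself is small, or more carefully that restricting attention to this language decreases one of the induction parameters. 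Expressed over nodes, ``the twig below $x$ has type in $G$'' becomes a $\fotwo$ condition on the leaf children of $x$, which by bisimulation invariance~(\ref{eq:apcom}) only depends on the \emph{set} of their labels, hence is captured by a boolean combination of formulas $\tlef(\text{leaf}\wedge a_j)$ relativized to the children.

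\textbf{Then I would} bootstrap to arbitrary depth. Given that we can detect, via a formula $\psi_0$, the nodes whose subtree is a twig of type in $G$, I apply the antichain composition principle with the antichain formula ``$x$ is a minimal twig node with subtree-type in $G$'' (minimality with respect to descendant makes it antichain) to squash all such subtrees into fresh leaves labelled by their types. This is legitimate because the squash does not change $\alpha$ applied inside any surrounding context: replacing a subtree of type $g\in G$ by a leaf of type $g$ is type-preserving by leaf saturation (or by adding such a leaf label). The result is a tree over a \emph{larger} $X$, namely $X\cup G$, hence with fewer forest types outside $X$ --- parameter~2 of the induction strictly decreases --- so the induction assumption applies to the squashed tree to recover, for every relevant forest type, whether the original node has a subtree of type in $G$. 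Unwinding the antichain composition lemma turns that into the desired formula $\psi$ on the original $X$-trimmed tree.

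\textbf{The hard part will be} verifying that the squashing operation is sound in the precise sense needed: that for an $X$-trimmed tree, replacing minimal-$G$-subtrees by leaves of the same type yields an $(X\cup G)$-trimmed tree whose behaviour under $\alpha$ (and in particular under the outer context $v$ fixed in Proposition~\ref{induction:prop}) agrees with the original, so that membership questions transfer back and forth. This is a type-preservation argument analogous to Lemma~\ref{lemma:v-squash}, proved by induction on the number of $G$-subtrees, using that $\alpha$ is a forest algebra morphism and that $G$-types are closed under the relevant operations among good brothers (Lemma~\ref{lemma:reach-preservers} and the good/bad brother dichotomy). Once soundness of the squash is established, the rest is routine application of the induction hypothesis and the antichain composition principle.
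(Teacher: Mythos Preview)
Your bootstrapping step has a genuine gap. You claim that after squashing twig subtrees of type in $G$ into leaves, the result is $(X \cup G)$-trimmed; but a subtree of depth three or more whose type lies in $G$ is neither a leaf nor a twig, so it survives the squash intact. For example, if $b_1(b_2(a))$ has type in $G$ at every non-leaf level, squashing the twig $b_2(a)$ produces $b_1(a')$, which is still a non-leaf of type in $G$. One round does not suffice, and the number of rounds needed grows with the depth of the tree, so it cannot be encoded in a single $\fotwo$ formula. There is also a type mismatch: the induction hypothesis (Proposition~\ref{induction:prop} with $X$ replaced by $X \cup G$) yields forest-definable \emph{languages}, i.e.\ Boolean queries on whole forests, whereas $\psi$ must be a \emph{node formula} selecting those $x$ whose subtree has type in $G$. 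Turning a forest-formula into a subtree test requires relativizing its $\tlf^{-1}$ operators so they cannot escape above $x$, which in turn needs an antichain formula marking $x$ --- something you do not yet possess, since that is precisely what $\psi$ is supposed to supply.

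The paper avoids both problems by a direct argument that makes no appeal to the induction hypothesis. Since in an $X$-trimmed tree every non-leaf subtree has type either in $G$ or outside $F$ (because $F \setminus X \subseteq G$ by the choice of $G$), it suffices to characterize ``subtree of $x$ has type outside $F$''. The paper shows this holds exactly when some descendant $y$ of $x$ is (1) a leaf or twig of type outside $F$, or (2) a non-twig inner node with label $b$ such that $\alpha(b) \notin W$, or (3) an inner node with a leaf sibling whose type is a bad brother. Each of these is a local $\fotwo$-expressible condition on $y$; necessity and sufficiency are established by a short case analysis using the good/bad brother dichotomy and the closure of good brothers under concatenation.
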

\begin{proof}
  Let $t$ be an $X$-trimmed tree, and $x$ a node in this tree.  If the
  node is a leaf, then the type of its subtree can be read from the
  label. Otherwise, the type of the subtree must be either in $G$ or
  outside $F$, by assumption on the tree being $X$-trimmed.  We claim
  that the following condition is necessary and sufficient for the
  subtree of $x$ to have a type outside~$F$, and can furthermore be
  tested by a formula of $\fotwo$.  The condition is that some
  descendant~$y$ of $x$, not necessarily proper, is either
  \begin{enumerate}[(1)]
\item A leaf or twig node with a type outside $F$; or
\item A non-twig inner node with a label $b \in B$ whose type
      $\alpha(b)$ is outside $W$; or
\item An inner node whose brother has a leaf label $a$ whose type
    $\alpha(a)$ is a bad brother.
  \end{enumerate}
  
  We begin by showing that these conditions can be tested by an
  $\fotwo$ formula.  Testing for 1) is simple. Using $\tlef$, we
  search for a candidate $y$ for the node. If $y$ is a leaf, we just
  test its label. Otherwise, we test if $y$ is a twig node (no path of
  length at least two).  Then we read the label of $y$ and the set of
  labels in descendants of $y$, which uniquely determine the type of
  the subtree of $y$, thanks to idempotency and commutativity,
  i.e.~identities~(\ref{eq:apcom}).  Condition 2 is tested in a
  similar way. For condition 3 we use $\tlef$ to go into a leaf $y$
  with a label $a$ whose type is a bad brother.  We then
  test if $y$ has a sibling that is an inner node (all ancestors of
  $y$ have an inner node descendant).
  
  We now show that these conditions are sufficient. The first one is
  clearly sufficient.  For the other two, note that every inner node
  has a subtree with type outside $X$ by assumption on the tree being
  $X$-trimmed.  This type must then be either in $G \supseteq F
  \setminus X$ or outside $F$.  For the second condition, let $bs$ be
  the subtree of a non-twig inner node, with $s$ a forest.  Since $s$
  has depth at least two, its type must be outside $X$, and therefore
  either outside $F$, or in $G \supseteq F \setminus X$. In either
  case, the type of $bs$ is outside $F$. The last condition is shown
  in a similar way.

  It remains to show that the conditions are necessary.  Indeed,
  assume that the subtree of $x$ has a type outside $G$. Let $s$ be a
  minimal subtree below $x$ that has a type outside $F$. If $s$ is a
  leaf or a twig, then item 1 must hold. Otherwise $s$ is of the form
  $b(s_1+\cdots s_n)$, for some label $b \in B$ and trees
  $s_1,\ldots,s_n$, with at least one tree $s_i$ not being a leaf. By
  assumption on the tree being $X$-trimmed, the type of $s_i$ is
  outside $X$.  Since $F \setminus X \subseteq G$, the type of this
  $s_i$ is in $G$.  If all the types of $s_j$, for $j\neq i$, are good
  brothers, then the type of $s_1 + \cdots + s_n$ must belong to $G$
  by closure of good brothers under composition, and therefore case 2
  must hold.  Finally, we consider the case when the type of some tree
  $s_j$ is a bad brother.  Since all forest types from $G$ are good
  brothers, the type of $s_j$ is in $F \setminus G \subseteq X$. Since
  the tree is $X$-trimmed, $s_j$ is a single leaf, and thus 3 holds.
\end{proof}

\begin{lem}\label{lemma:describe-h}
  For each $g \in G$, the set of trees with type $g$ is tree-definable
  modulo $X$.
\end{lem}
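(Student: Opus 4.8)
The plan is to reduce the task to the previous lemma together with the inductive structure of the problem. By Lemma~\ref{lemma:find-h}, there is an $\fotwo$ formula $\psi$ that, in every $X$-trimmed tree, marks exactly the nodes whose subtree has a type in $G$. In an $X$-trimmed tree, the $\psi$-marked nodes of minimal depth form an antichain (their subtrees are entirely below the ``$X$-free'' part of the tree), and moreover the formula $\varphi = \psi \land \neg \tlf^{-1}\psi$ — ``a $\psi$-node with no $\psi$-ancestor'' — is antichain and picks out precisely these nodes. The point of choosing $G$ so that no forest type in $G$ is reachable from a forest type outside $X\cup G$ is that once we are inside the subtree of such a minimal $\psi$-node, every node's subtree already has type in $G$; so these subtrees live entirely within the forest component $G$ and can be analyzed separately.

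Next I would squash each such minimal subtree to a single leaf. Concretely, for each $g \in G$ pick a leaf label $a_g$ with $\alpha(a_g) = g$ (possible by leaf saturation), let $L_g$ be the tree language ``the subtree has type $g$'', and form $t' = t[(L_{g_1},\varphi)\to a_{g_1}, \ldots]$ over the enumeration $G = \{g_1,\ldots\}$. The resulting tree $t'$ is $X'$-trimmed for $X' = X \cup G$, since every subtree of type in $G$ has been contracted to a leaf. Now observe two things. First, $t$ has type $g$ iff $t'$ has type $g$, because the substitution replaces each maximal $G$-subtree by a leaf of the same type and $\alpha$ is a morphism — so the type of $t$ is a fixed function of the multiset of types of the contracted subtrees and the surrounding context, and this function value is unchanged. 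Second, for the new morphism we have a strictly larger set $X'$ played against the same $H$, so $|H \setminus X'| < |H \setminus X|$ (here we use that $G \not\subseteq X$, so $G\setminus X$ is nonempty), and the second induction parameter strictly decreases while $|H|$, $|vV|$ and $|B|$ do not increase. Hence by the induction hypothesis of Proposition~\ref{induction:prop} (applied with $v = \hole$ and the set $X'$), the language $\{t' : \alpha(t') = g\}$ is tree-definable modulo $X'$. To finish, I would need the sub-languages $L_g$ themselves to be definable in order to apply the antichain composition principle; since inside a minimal $\psi$-subtree everything has type in $G$, the restriction of $\alpha$ to forests with all subtrees of type in $G$ factors through a smaller algebra (carrier $G$ for the first sort), and by the induction hypothesis on $|H|$ applied to that smaller algebra, each $L_g$ is definable modulo $X$. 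Then the antichain composition principle transfers definability from the language of $X'$-trimmed trees back to a formula that is correct on all $X$-trimmed trees, giving the claim.

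The main obstacle I anticipate is the bookkeeping around which induction parameter is being decreased and making sure the morphism one applies the induction hypothesis to is genuinely ``smaller'' in the intended sense: the squashing step must not enlarge $B$ (it does not — it only removes inner nodes), must not enlarge $H$ (it does not — the target algebra is unchanged), and must strictly shrink $H\setminus X$ (it does, using $G\cap(H\setminus X)\neq\emptyset$); separately, the analysis of subtrees living in $G$ requires descending to an algebra whose first sort is $G$, which shrinks $|H|$ outright. A secondary subtlety is verifying that $\varphi$ is genuinely antichain on $X$-trimmed trees and that the substitution $t \mapsto t'$ preserves the type under $\alpha$ — both follow from the choice of $G$ (minimality of $G$ among forest components not in $X$) and from identities~(\ref{eq:apcom}), which guarantee the type of a forest depends only on the multiset, indeed the set, of types of its trees.
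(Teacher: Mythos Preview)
Your proposal confuses the proof of Lemma~\ref{lemma:describe-h} with its \emph{application}. The squashing argument you describe --- using $\varphi = \psi \land \neg \tlf^{-1}\psi$ to collapse maximal $G$-subtrees to leaves and then invoking induction on $|H\setminus X'|$ with $X'=X\cup G$ --- is precisely how the paper uses the lemma after it is proved, to reduce the main Proposition from ``modulo $X$'' to ``modulo $X\cup G$''. It cannot be the proof of the lemma itself, because it is circular: to apply the antichain composition principle you need the languages $L_{g_i}$ to be definable, and those are exactly what the lemma asserts. Worse, if $t$ is a tree with $\alpha(t)=g\in G$, then the root of $t$ already satisfies $\varphi$, so your substitution collapses $t$ to the single leaf $a_g$; nothing is learned about which $g$ unless $L_g$ is known in advance.

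Your one-sentence escape hatch --- ``the restriction of $\alpha$ to forests with all subtrees of type in $G$ factors through a smaller algebra with carrier $G$'' --- is where the real work lies, and it does not go through as stated. First, in an $X$-trimmed tree with type in $G$, \emph{leaves} may have types in $F\setminus G\subseteq X$, not in $G$; so the carrier must be $F$, not $G$. Second, even $F$ is not closed under the action of an arbitrary $\alpha(b)$: one needs $\alpha(b)\in W$, which the paper arranges by a preliminary reduction (squashing twigs) that you omit. Third, once one forces the action back into $F$ by an artificial redefinition (the paper's~(\ref{eq:redefine})), the resulting morphism $\beta$ is no longer a restriction of $\alpha$, and one must re-verify that its image still satisfies identities~(\ref{eq:apcom}),~(\ref{eq:da}) and~(\ref{eq:special}) before the induction on $|H|$ can be invoked; this verification (the paper's Lemma~\ref{lemma:new-is-ok}) is the most delicate step and is entirely absent from your outline.
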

The general idea is that $(G,W)$ is a (smaller) forest algebra, and
therefore the induction assumption can be applied to languages
recognized by $(G,W)$.  However, thanks to bad brothers and such,
$(G,W)$ does not recognize the language in the lemma. Before we 
solve this problem, we show how Lemmas~\ref{lemma:find-h} and
\ref{lemma:describe-h} along with the antichain composition principle conclude
the case considered in this section. The idea is that we add all
forest types from $G$ to~$X$.

Let $h,v$ be as in the statement of
Proposition~\ref{induction:prop}. We need to show that the language 
\[
  L = \set{  t :  v(\alpha( t)) = h}
\]
is forest-definable modulo $X$.  By induction assumption, we know that
this language is forest-definable modulo $X \cup G$. In other words,
there is some forest-definable set of forests $K$ that agrees with $L$
over $(X\cup G)$-trimmed forests.  To describe $L$ modulo $X$, we will
use the antichain composition principle.

Let $\psi$ be the formula from Lemma~\ref{lemma:find-h}. Let
\[
  \varphi =   \psi \land \neg \tlf^{-1} \psi\ .
\]
This formula holds in a node whose subtree has a type in $G$, and the
node is closest to the root for this property. Thanks to the last
clause, $\varphi$ is an antichain formula. Let
$G=\set{g_1,\ldots,g_n}$.  By assumption that $\alpha$ is leaf
saturated, for each $g_i$ there is a leaf label $a_i \in A$ with
$\alpha(a_i)=g_i$. For each $g_i$, let $L_i$ be the set of trees with
type $g_i$.  Thanks to Lemma~\ref{lemma:describe-h}, each tree
language $L_i$ is tree-definable modulo $X$.

It is easy to see that squashing a subtree with type $g_i$ into a
single leaf with label $a_i$ does not change the type of the whole
tree. More precisely, a forest $ t$ has the same value as
\[
    t[ (L_1,\varphi) \to a_1,\ldots, (L_n,\varphi) \to a_n] \ .
\]
Furthermore, the above forest is $(X \cup G)$-trimmed, at least as
long as $ t$ was $X$-trimmed.  It follows that over $X$-trimmed
forests, $L$ agrees with
\[
  \set{  t :   t[ (L_1,\varphi) \to a_1,\ldots, (L_n,\varphi)
    \to a_n]  \in K}\ ,
\]
which is forest-definable thanks to the antichain composition
principle. It now remains to show Lemma~\ref{lemma:describe-h}, which
we do in the next section.

\subsubsection{Trees with type in $G$.}
 Fix
some forest type $g \in G$. Our goal is to show that the set of trees with
type $g$ is tree-definable modulo~$X$.

\begin{lem}\label{lemma:loss-of-generality}
  Without loss of generality, we may assume that all forest types in $F$ are
  good brothers and all inner node labels $b$ satisfy $\alpha(b) \in W$.
\end{lem}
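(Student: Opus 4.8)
The plan is to establish the two normalizing conditions one after the other, each time replacing the morphism $\alpha$ by a morphism for which the condition holds and arguing that tree-definability modulo $X$ of the language $\set{t : \alpha(t)=g}$ of Lemma~\ref{lemma:describe-h} is preserved under the replacement. The tools are the antichain composition principle and the formula $\psi$ of Lemma~\ref{lemma:find-h}: over $X$-trimmed trees $\psi$ marks exactly the nodes whose subtree has type in $G$, so in particular ``this whole tree has type in $G$'' is itself tree-definable modulo $X$.

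I first arrange that $\alpha(b)\in W$ for every $b\in B$. Suppose $\alpha(b)\notin W$. The key remark is that in any $X$-trimmed tree $t$ with $\alpha(t)\in G$, every subtree $s$ rooted at a $b$-node already has $\alpha(s)\in G$: such an $s$ has depth at least two, so $\alpha(s)\notin X$, hence $\alpha(s)\in G$ or $\alpha(s)\notin F$; and the latter is impossible, since $t=r\,s$ for some context $r$, so $\alpha(r)\,\alpha(s)=\alpha(t)\in G$ already witnesses $\alpha(s)\in F$. Thus the subtrees hanging at the $\mainle$-minimal (topmost) $b$-nodes all have type in $G$, and one removes the label $b$ exactly as in the proof of Lemma~\ref{lemma:v-squash}: working by induction on the number of $b$-nodes, an innermost subtree $b\,s'$ with $s'$ free of $b$ has its type computed by a morphism over the strictly smaller alphabet $B\setminus\set b$, so the induction hypothesis of Proposition~\ref{induction:prop} (fourth parameter) makes the languages $\set{s' : \alpha(s')=f'}$ tree-definable modulo $X$; then $\alpha(b)\,\alpha(s')\in G$ is read off, leaf saturation supplies a leaf of that type, and the antichain composition principle carries definability back to the full alphabet. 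Iterating over all offending inner labels yields the first condition.

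Assuming now $\alpha(b)\in W$ for all $b$, I arrange that every forest type in $F$ is a good brother. Since $F\setminus G\subseteq X$ and all of $G$ consists of good brothers, a failure can only come from a leaf label $a$ with $\alpha(a)\in F\setminus G$ a bad brother; such a type occurs only as a leaf. The reduction here shrinks the algebra rather than the alphabet: one identifies each such bad-brother type with a suitable representative in $G$ under the action of $W$ --- passing to an appropriate quotient or generated sub-forest-algebra of $(H,V)$ with a strictly smaller first sort --- so that in the reduced algebra no forest type reaching $G$ is a bad brother. The induction hypothesis of Proposition~\ref{induction:prop} (this time the first parameter, the size of $H$) then applies to the reduced morphism, and the antichain composition principle again transfers definability modulo $X$ back to the original morphism.

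The step I expect to be the main obstacle is this second reduction. One must exhibit the quotient (or sub-forest-algebra) as a genuine forest algebra with strictly fewer forest types, check that the sets $F$ and $W$ recomputed there are correctly placed with respect to $G$ and contain no bad brother, and verify that the $g$-language over the original morphism is genuinely recovered from the reduced one --- all without circling back to Lemma~\ref{lemma:describe-h} at the original size; by contrast the first reduction is comparatively routine, being a direct analogue of the induction-on-occurrences argument already used in the proof of Lemma~\ref{lemma:v-squash}.
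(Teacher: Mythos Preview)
Your overall strategy—replace $\alpha$ by a morphism satisfying the two conditions and transfer tree-definability modulo $X$ back via the antichain composition principle—is exactly the paper's strategy, but your implementation diverges from the paper's and has a real gap.

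For your first reduction you observe correctly that in an $X$-trimmed tree of type in $G$ the subtree at any $b$-node with $\alpha(b)\notin W$ has type in $G$. Your plan is then to squash such subtrees to leaves, appealing to the main induction on the smaller inner alphabet $B\setminus\set b$. The difficulty is that the subtree at such a $b$-node may itself contain further $b$-nodes, so the languages you need ($\set{bs':\alpha(bs')=g'}$ with $s'$ \emph{arbitrary}) are exactly what Lemma~\ref{lemma:describe-h} asserts, hence the argument is circular. Your ``induction on the number of $b$-nodes'' is an induction at the level of individual trees (in the style of Lemma~\ref{lemma:v-squash}, which proves equality of values); it does not yield an induction at the level of definable languages, since the number of $b$-layers is unbounded and one pass of antichain substitution removes only the innermost layer.

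For your second reduction you already flag the problem: the quotient-or-subalgebra with strictly smaller $H$ is not specified, and it is not clear it exists while still satisfying the identities and recognizing the relevant language.

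The paper avoids both difficulties by a single observation: from the proof of Lemma~\ref{lemma:find-h}, in any $X$-trimmed tree of type in $G$ no non-twig inner node can have $\alpha(b)\notin W$ (condition~(2) there) and no leaf with bad-brother type can have an inner-node sibling (condition~(3)); moreover every leaf has type in $F$ (condition~(1)). Hence \emph{all} offending labels—both inner labels with $\alpha(b)\notin W$ and leaf labels with bad-brother type—occur only inside twigs. A single pass of twig-squashing (take $\varphi$ = ``twig node'', with $L_i$ the twigs of type $g_i$) therefore yields a tree over the restricted alphabet $(A',B')$ in one shot, and the restricted morphism $\beta$ satisfies both normalizing conditions. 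No change to the algebra $(H,V)$ is needed, and there is no iteration.
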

\begin{proof}
  Recall that all forest types from $G$ are good brothers. In particular, all
  bad brothers in $F$ are from $X$, and can therefore only appear in
  leaves, as long as we are working over $X$-trimmed forests. Let $A' \subseteq A$
  be the set of leaf labels that are mapped by $\alpha$ to a good
  brother in $F$.  Let $B' \subseteq B$ be the set of inner node
  labels $b$ with $\alpha(b) \in W$.

  Let $\beta$ be the restriction of $\alpha$ to this smaller alphabet:
  \[
    \beta : \freef {A'} {B'} \to (H,V)\ .
  \]
  Note that over $X$-trimmed forests, the only forest types from $F$ in the
  image of $\beta$ are good brothers, and all inner node labels $b$
  satisfy $\alpha(b) \in W$. Assume now, that we have shown
  Lemma~\ref{lemma:describe-h} for the morphism $\beta$, i.e.~the set
  $K$ of trees that have type $g$ under $\beta$ is tree-definable
  modulo~$X$.  We will use the antichain composition principle to extend this
  result to $\alpha$. The idea is that we squash twig nodes into
  leaves, thus eliminating labels outside $A', B'$.

  Let $\varphi$ be a formula that is true in twig nodes (the node is
  not a leaf, but all of its proper descendants are leaves); this is
  clearly an antichain formula.  Let $G=\set{g_1,\ldots,g_n}$.  By
  assumption that $\alpha$ is leaf saturated, for each $g_i$ there is
  a leaf label $a_i \in A$ with $\alpha(a_i)=g_i$. For each $g_i$, let
  $L_i$ be the set of twig trees with value $g_i$ (under $\alpha$).
  Each $L_i$ is tree-definable, since the type of a twig tree is
  determined by its root label and the set of its leaf labels
  by~(\ref{eq:apcom}). It is easy to see that a tree $t$ over $(A,B)$
  has the same type under $\alpha$ as the tree
  \[
    t[ (L_1,\varphi) \to a_1,\ldots, (L_n,\varphi) \to a_n]\ .
  \]
  Furthermore, if the type of $t$ under $\alpha$ is $g$, then the
  latter forest belongs to the domain of $\beta$, since all nodes with
  a label outside $A'$ or $B'$ are covered by $\varphi$.  Therefore, we
  can use the antichain composition principle to conclude that the forests
  with value $g$ under $\alpha$ can be defined in $\fotwo$.
\end{proof}

From now on, we use the assumptions stated in the previous lemma.
Recall that good brothers are closed under concatenation, and
therefore $F$ is a subsemigroup of $H$.  This allows us to define a
semigroup automaton~$\Aa$, whose semigroup is $F$.  The input alphabet
of this automaton is:
\begin{enumerate}[$\bullet$]
\item The inner node labels are $B$
\item The leaf labels are $A'=\set{a \in A : \alpha(a) \in F}$.
\end{enumerate}
For $a \in A'$, we define $\beta_A(a)$ to be $\alpha(a)$.  For $b \in
B$, we would like the associated function $\beta_B(b)$ to be
$\alpha(b)$.  Even though Lemma~\ref{lemma:loss-of-generality}
guarantees that $\alpha(b)$ belongs to $W$, this context type cannot
be used since it need not generate a function $F \to F$. The reason is
that $\alpha(b)h$ may be outside $F$ for types $h$ outside $G$.  To
solve this problem, we artificially redefine the function:
\begin{equation}
  \label{eq:redefine}
  \beta_B(b)(h) = \left\{
  \begin{array}{rcl}
   \alpha(b)h & & \text{ if }\alpha(b)h \in F\\
   g_0  & & \text{ otherwise}.
  \end{array}\right.
\end{equation}
In the above, $g_0$ is an arbitrarily chosen forest type from $G$. 

By the proof of Theorem~\ref{thm:equiv-with-regul}, this automaton
induces a forest algebra morphism
\[
  \beta : \freef {A'} {B'} \to (F,F^F)\ .
\]
This morphism is not the same as $\alpha$, due to the second clause in
(\ref{eq:redefine}). However, it agrees with $\alpha$ over
the forests that are relevant to Lemma~\ref{lemma:describe-h}:
\begin{lem}\label{lemma:new-agrees-with-old}
  For any $g \in G$, and forest $t$, if $\alpha( t)=g$ then $\beta(
  t)=g$.
\end{lem}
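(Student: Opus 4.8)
The plan is to prove Lemma~\ref{lemma:new-agrees-with-old} by structural induction on the forest (or tree) $t$, showing simultaneously the slightly stronger statement that for \emph{every} subtree $s$ of $t$, if $\alpha(s) \in F$ then $\beta(s) = \alpha(s)$, and moreover if $\alpha(s) \in G$ then $\beta(s) = \alpha(s) \in G$. The two morphisms are built from the same data on leaves (we set $\beta_A(a) = \alpha(a)$ for $a \in A'$) and differ only through the redefinition~(\ref{eq:redefine}) of the action of inner node labels; so the whole point is to check that along any relevant subtree the ``otherwise'' branch of~(\ref{eq:redefine}) is never taken.

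First I would handle leaves: if $s = a$ is a leaf appearing in a tree whose type is in $G$, then $\alpha(s) = \alpha(a)$, and since the whole tree's type is in $G \subseteq F$ and $G$ is a forest component with no forest type reachable into it from outside $X \cup G$ (the choice of $G$), together with the $X$-trimmed / leaf-saturation setup, the leaf's type lies in $F$, hence $a \in A'$ and $\beta(a) = \alpha(a)$ by definition. Next, for a forest $s = s_1 + \cdots + s_n$ with $\alpha(s) \in G$: each $\alpha(s_i)$ must be a good brother (by Lemma~\ref{lemma:loss-of-generality} all forest types in $F$ are good brothers, and the $s_i$ have types in $F$ since $\alpha(s) \in G$ is reachable from each $\alpha(s_i)$ via $u = (\text{rest of the forest}) + \hole$, hence $\alpha(s_i) \in F$), so each $\alpha(s_i) \in G$, and by the induction hypothesis $\beta(s_i) = \alpha(s_i)$; since $F$ is a subsemigroup of $H$ and $+$ is preserved, $\beta(s) = \sum_i \beta(s_i) = \sum_i \alpha(s_i) = \alpha(s) \in G$. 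Finally, for a tree $s = b(s')$ with $\alpha(s) = \alpha(b)\,\alpha(s') \in F$: by Lemma~\ref{lemma:loss-of-generality} we have $b \in B'$ with $\alpha(b) \in W$; by Lemma~\ref{lemma:reach-preservers} (the ``all'' version of preservation) applied to the forest type $\alpha(s')$, since $\alpha(b)\alpha(s')$ lies in a forest component that meets $F$, one gets $\alpha(s') \in F$, hence by induction $\beta(s') = \alpha(s')$; then $\alpha(b)\beta(s') = \alpha(b)\alpha(s') = \alpha(s) \in F$, so the first clause of~(\ref{eq:redefine}) fires and $\beta(s) = \beta_B(b)(\beta(s')) = \alpha(b)\alpha(s') = \alpha(s)$.

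The main obstacle, and the step deserving the most care, is the inner-node case: one must argue that whenever a subtree $s = b(s')$ has $\alpha(s) \in F$, already the argument $\alpha(s')$ lies in $F$ (so the induction hypothesis applies) \emph{and} the redefinition never triggers along the way. This is precisely where Lemma~\ref{lemma:reach-preservers} is needed — it says that if $g, h, vg$ are in the same forest component then so is $vh$, which we instantiate to conclude that $\alpha(b)$ acting on $\alpha(s')$ cannot ``escape'' $F$ downward without $\alpha(s')$ itself already being outside $F$; combined with the fact that all forest types of $F$ are good brothers and $\alpha(b) \in W$, this pins the whole computation inside $F$. Once that structural claim is in place, the agreement $\beta = \alpha$ on all relevant subtrees is a direct propagation up the tree, and taking $s = t$ gives the statement of the lemma.
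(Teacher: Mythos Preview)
Your approach---structural induction showing the ``otherwise'' branch of~(\ref{eq:redefine}) never fires---is exactly the paper's, which compresses it to one line: if $\alpha(t)=g$, the bad case is never used while computing $\beta(t)$. The strategy is sound, but your execution contains one false claim and some unnecessary detours.

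In the forest case, the assertion ``so each $\alpha(s_i) \in G$'' is wrong: take leaves $a_1,a_2$ with $\alpha(a_i) \in F \setminus G \subseteq X$ and $\alpha(a_1)+\alpha(a_2) \in G$; then $\alpha(s)\in G$ but neither summand is in $G$. Fortunately you only need $\alpha(s_i)\in F$, which does hold since $\alpha(s)$ (hence some element of $G$) is reachable from each $\alpha(s_i)$ via the context $(\text{rest of the forest})+\hole$; that already suffices to invoke your induction hypothesis. In the inner-node case you do not need Lemma~\ref{lemma:reach-preservers}: if $\alpha(b)\alpha(s')\in F$ then some $g'\in G$ is reachable from $\alpha(b)\alpha(s')$, hence also from $\alpha(s')$ (prepend $\alpha(b)$ to the witnessing context type), so $\alpha(s')\in F$ directly. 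Finally, the lemma does not assume $t$ is $X$-trimmed, and leaf saturation plays no role; every leaf of $t$ lies in $A'$ simply because its type is in $F$ by the same reachability argument, which is precisely what the paper means by ``all of its leaf labels belong to $A'$ by definition of~$F$''.
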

\begin{proof}
  If $t$ has a type in $G$ under $\alpha$, then all of its leaf labels
  belong to $A'$ by definition of~$F$. Therefore, $t$ belongs to the
  domain of $\beta$.  The lemma is proved by induction on the size
  of~$t$. If $\alpha( t)=g$, then the ``bad'' second case
  in~(\ref{eq:redefine}) is never used while calculating $\beta( t)$.
\end{proof}
\begin{lem}\label{lemma:new-is-ok}
  The image of $\beta$ satisfies
  identities~(\ref{eq:apcom}),~(\ref{eq:da}) and~(\ref{eq:special}).
\end{lem}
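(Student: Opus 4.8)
Property~(\ref{eq:apcom}) is immediate: since $\alpha$ is leaf saturated, $\beta$ is onto on the first sort, so the horizontal semigroup of the image of $\beta$ is $F$; under the standing assumptions $F$ consists of good brothers and is a subsemigroup of $H$, hence it inherits $h+h=h$ and $g+h=h+g$ from $(H,V)$. The content of the lemma is therefore in~(\ref{eq:da}) and~(\ref{eq:special}), which are statements about the vertical monoid $V'$ of the image of $\beta$. The difficulty that shapes the whole argument is that $V'$ is \emph{not} a submonoid of $V$: because of the second clause of~(\ref{eq:redefine}), an element of $V'$ need not be the restriction to $F$ of any element of $V$.

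The plan is to compare $\beta$ with $\alpha$ through two facts. Call an application of $\beta_B$ a \emph{reset} if it uses the second clause of~(\ref{eq:redefine}). (i) $\beta(t)=\alpha(t)$ whenever $\alpha(t)\in F$: by induction on subtrees one sees that $\alpha(t)\in F$ forces $\alpha(s)\in F$ for every subtree $s$ of $t$ --- using that if $y$ is reachable from $x$ and $y\in F$ then $x\in F$, and that $F$ is closed under $+$ --- so no reset is ever performed. (ii) Right after a reset the running value equals $g_0\in G$, and it stays in $G$ from then on: adding a sibling forest keeps it in $G$ because every element of $F$ is a good brother, and passing through an inner node $b$ keeps it in $G$ because $\alpha(b)\in W$ by Lemma~\ref{lemma:loss-of-generality}, so the first clause of~(\ref{eq:redefine}) applies; moreover on computations that stay in $G$, $\beta$ again agrees with $\alpha$. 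Combining (i) and (ii): for a context $P$ over the alphabet of the automaton and a leaf $a$ with $\beta(a)=f\in F$, the value $\beta(P\,a)$ equals $\alpha(\tilde P\,a)$, where $\tilde P$ is $P$ with every off-path subforest replaced by a leaf of the same $\beta$-value, \emph{unless} a reset occurs during the evaluation, in which case $\beta(P\,a)=\alpha(\tilde C)\,g_0$, where $\tilde C\in W$ is the part of $\tilde P$ above the first reset node. Moreover the map $P\mapsto\tilde P$ respects the relation $\mainle$: if $u_1\mainle u_2$ holds in $V'$, witnessed by a decomposition into common factors $v_0,\dots,v_n$ and coefficients $h_1,\dots,h_n\in F$, then the corresponding $\tilde v_i\in V$ with the same coefficients witness $\tilde u_1\mainle\tilde u_2$ in $(H,V)$.

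With this in hand, each instance of~(\ref{eq:da}) and~(\ref{eq:special}) is checked by evaluating both sides at an arbitrary $f\in F$ (the exponents $\omega$ make sense because $V'$ is aperiodic, which follows from aperiodicity of $(H,V)$ via the same comparison). For suitable context representatives, both sides of the instance become $\beta$ applied to contexts $\tilde P_L,\tilde P_R$ with $\alpha(\tilde P_L)=\alpha(\tilde P_R)$ in $V$ --- this equality is exactly what the identity asserts in $(H,V)$, invoking $\tilde u_1\mainle\tilde u_2$ and $\tilde w_1\mainle\tilde w_2$ for~(\ref{eq:special}). If no reset occurs, then $\beta(\tilde P_L\,a)=\alpha(\tilde P_L)\,f=\alpha(\tilde P_R)\,f=\beta(\tilde P_R\,a)$ and we are done. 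If a reset occurs --- which by fact (i) is the same condition ``$\alpha(\tilde P_L)\,f\notin F$'' on both sides --- then by fact (ii) the value drops to $g_0\in G$ and the rest of the computation runs inside $G$, where $\beta$ mimics $\alpha$; since the two representatives can be chosen to share the innermost block and the post-reset value $g_0$ is common, each side reduces to $\alpha$ applied to the appropriate $W$-context in front of an element of $G$ of the form $(\tilde u_2\tilde w_2)^\omega(\cdots)\,g_0$, and the two such expressions differ again only by the instance of~(\ref{eq:da}) (resp.~(\ref{eq:special})) that holds \emph{as an identity of context types} in $(H,V)$. Hence the two sides coincide.

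The main obstacle is exactly this reset case: one must arrange the context representatives so that the first reset node is pinned down by a prefix common to the two sides, so that after the reset the two computations genuinely differ only by a context-type identity valid in $(H,V)$ --- here it is essential that the two sides of~(\ref{eq:special}) are equal as elements of $V$, not merely as functions on $H$. Carrying the $\mainle$-witnessing decomposition through $P\mapsto\tilde P$, and handling the subcase in which the reset occurs past the common innermost block (using~(\ref{eq:da}) in $(H,V)$ to see that the ``exit point'' still behaves compatibly on the two sides), is the technical core of the argument; once this bookkeeping is in place, Lemma~\ref{lemma:new-is-ok} follows.
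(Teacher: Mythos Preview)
Your facts~(i) and~(ii) are correct and useful, and the no-reset case does reduce cleanly to the identity in $(H,V)$. The gap is in the reset case. You correctly observe that ``a reset occurs'' is equivalent to ``$\alpha(\tilde P_L)f\notin F$'', hence is the same condition on both sides. But to conclude you must show that the two post-reset computations agree. When the reset falls inside the shared innermost $(u_2w_2)^\omega$ block, the remaining difference is indeed an instance of~(\ref{eq:special}) in $(H,V)$, as you say. When the reset falls \emph{above} that block, however, the reset positions on the two sides need not coincide, and you give no argument---you only label this ``the technical core'' and assert that the bookkeeping works out. That is the missing idea: nothing you have written forces $\alpha(\tilde C_L)g_0=\alpha(\tilde C_R)g_0$ in that subcase.

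The paper's route is different and avoids the explicit reset case-split entirely. After first disposing of the degenerate case where the hole of $p_2q_2$ sits in the root, the paper normalizes $p_1,p_2,q_1,q_2$ so that every node off the main path is a single leaf (leaf saturation), and uses aperiodicity to insert one extra copy of $p_2q_2$ at the bottom. In the resulting forests
\[
(p_1q_1)^\omega(p_2q_2)^\omega(p_2q_2)t
\qquad\text{and}\qquad
(p_1q_1)^\omega p_1q_2(p_2q_2)^\omega(p_2q_2)t
\]
the only twig nodes lie inside the common suffix $(p_2q_2)t$; since the second clause of~(\ref{eq:redefine}) can only be invoked at a twig node, $\beta$ agrees with $\alpha$ on the entire upper context, giving $\beta(\text{whole})=\alpha(\text{upper})\cdot\beta((p_2q_2)t)$ on each side. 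The two upper $\alpha$-contexts are equal by~(\ref{eq:special}) in $(H,V)$, and the common factor $\beta((p_2q_2)t)$ cancels. This twig-node isolation is precisely the device that your proof lacks: it confines all possible divergence between $\alpha$ and $\beta$ to a block shared by both sides, so no separate reset analysis is needed.
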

\begin{proof}
  We only focus on identity~(\ref{eq:special}), the others are easy to
  show.  The key idea is that $\alpha$ and $\beta$ only disagree in
  twig nodes, and these are not important for the
  identity~(\ref{eq:special}).

  Let then $p_1 \mainle p_2,q_1 \mainle q_2$ be contexts. 
We need to
  show that
  \[
    \beta((p_1 q_1 )^\omega (p_2 q_2 )^\omega) = 
    \beta(  (p_1 q_1 )^\omega p_1 q_2  (p_2 q_2 )^\omega)\ .
  \]
Thanks to
  the faithfulness of contexts in forest algebra, it suffices to
  show that both sides induce the same transformations on forests, i.e.~
  \[
    \beta((p_1 q_1 )^\omega (p_2 q_2 )^\omega  t) = 
    \beta(  (p_1 q_1 )^\omega p_1 q_2  (p_2 q_2 )^\omega  t)
  \]
  holds for every forest $ t$. 

  Consider first the case when both $p_2,q_2$ have the hole in the
  root, and therefore so do $p_1,q_2$. In this case the equality
  above becomes:
  \[
    \beta(\omega(s_1 + t_1) + \omega(s_2+t_2) + t) =
    \beta(\omega(s_1 + t_1) + s_1 + t_2 + \omega(s_2+t_2) + t)\ .
  \]
  The above equality follows by commutativity of the horizontal monoid
  $F$, and aperiodicity of $H$, i.e.~ $\omega h + h = \omega h$. The
  latter is a consequence of aperiodicity of $V$, itself a consequence
  of (\ref{eq:da}), by iterating
  \[
    \omega h + h = (h+\hole)^\omega h =(h+\hole)^\omega (h+\hole)^\omega h = \omega h + h + h\ .
  \]
  We can therefore now assume that in the context $p_2q_2$, at least
  one inner node is an ancestor of the hole.  Thanks to the assumption
  on leaf saturation, in the contexts $p_1,q_1,p_2,q_2$ every subtree
  that does not contain the hole can be squashed to a single node,
  without affecting the image under $\beta$. We therefore assume that
  in the contexts above, all nodes outside the main path are leaves.
  As remarked above, a consequence of equation~(\ref{eq:da}) is that
  $V$ is aperiodic, i.e.~$v^\omega = v^\omega v$ holds for every
  context type $v$.  Therefore, it is sufficient to show
  \begin{equation}\label{eq:redefined-ok}
    \begin{array}{rcl}
          \beta((p_1 q_1 )^\omega (p_2 q_2 )^\omega(p_2 q_2 )  t)=
    \beta(  (p_1 q_1 )^\omega p_1 q_2  (p_2 q_2 )^\omega (p_2 q_2 )
     t)\ .
    \end{array}
  \end{equation}
  The only part where $\alpha$ and $\beta$ disagree are twig nodes.
  Thanks to our assumption on the form of $p_1,p_2,q_1,q_2$, the only
  place where the forests in~(\ref{eq:redefined-ok}) contain twig
  nodes is $p_2q_2 t$.  Therefore, we have
\[
  \beta((p_1 q_1 )^\omega (p_2 q_2 )^\omega(p_2 q_2 )  t) =
  \alpha((p_1 q_1 )^\omega (p_2 q_2 )^\omega) \beta((p_2 q_2 ) 
  t)\ .
\]
In the same way we can decompose the right side
of~(\ref{eq:redefined-ok}). Applying the assumption that the image of
$\alpha$ satisfies~(\ref{eq:special}), we get the desired result.
\end{proof}

\begin{proof}[Proof of Lemma~\ref{lemma:describe-h}]
  By Lemma~\ref{lemma:new-agrees-with-old}, a tree has type $g$ under
  $\alpha$ if and only if a) its type under $\alpha$ belongs to $G$;
  and b) it has type $g$ under $\beta$. Condition a) can be tested by
  thanks to Lemma~\ref{lemma:find-h}.  Since $F$ is a proper subset of
  $H$, we can use the induction assumption to test condition b).
\end{proof}

\subsection{The induction base}\label{sec:otherwise}

\newcommand{\efeq}{\equiv}
In this section, we assume that the techniques from the previous two
sections cannot be applied. That is:
\begin{enumerate}[$\bullet$]
\item All forest types from $H \setminus X$ are in a single forest
  component.
\item For all inner node labels $b \in B$, $v$ is reachable from
  $v\alpha(b)$.
\end{enumerate}
Note that the second assumption does not necessarily mean that any
context type reachable from $v$ is in the same context component.
Indeed, it is possible that for some forest type $g$, the context type
$v$ is no longer reachable from $v(\hole + g)$.

We will show
\begin{equation}
  \label{eq:one-component}
  vf=vg \qquad \mbox{for all } f,g \in H
  \setminus X\ .
\end{equation}
Before we do this, we show how Proposition~\ref{induction:prop}
follows. For every every forest type $h \in H$, we need to show that
the forest language
\[
L =  \set{  t :  v(\alpha( t)) = h}
\]
is forest definable modulo $X$.  By
assumption~(\ref{eq:one-component}), there is some forest type $h_0
\in H$ such that $vf=h_0$ holds for all $f \in H \setminus X$.
\begin{enumerate}[$\bullet$]
\item If an $X$-trimmed forest $ t$ contains an inner node label---which can easily be tested by the logic---then $\alpha( t)$
  must be in the single forest component $H \setminus X$. In
  particular, $v\alpha( t)=h_0$.  So in this case, $\varphi$ is
  either ``true'' or ``false'' depending on whether $h_0=h$ or not.
\item Otherwise, the forest $ t$ is the concatenation of some leaves
      $a_1+\cdots +a_n$. In this case, the type of $v\alpha(t)$ can be
      calculated based on the set of leaf labels in $ t$.
\end{enumerate}

The rest of this section is devoted to showing
(\ref{eq:one-component}).  The following lemma is the key step in our
proof~(\ref{eq:one-component}). It says that not only any two forest
types $h,g \in H \setminus X$ can be reached from each other---which is
the assumption on there being one forest component---but they can also
be reached from each other by only using contexts without any
branching. Furthermore, the context type that goes from $g$ to $h$ can
be chosen independently of $g$. However, all these statements are
relative to context types from the context component of $v$.
\begin{lem}\label{lemma:letter-reach}
  Let $h \in H \setminus X$.  There are inner node labels
  $b_1,\ldots,b_n \in B$ such that $wh=w\alpha(b_1 \cdots b_n)g$ holds
  for each forest type  $g \in H \setminus X$ and context type $w$ in the context
  component of $v$.
\end{lem}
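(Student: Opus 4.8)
The plan is to exploit both standing assumptions of Section~\ref{sec:otherwise}: that $H \setminus X$ is a single forest component, and that $v$ is reachable from $v\alpha(b)$ for every inner node label $b$. First I would fix a target type $h \in H \setminus X$. Since $H \setminus X$ is a single forest component and $h$ lies in it, every $g \in H \setminus X$ satisfies $g = u_g h$ for some context type $u_g \in V$, and conversely $h = u'_g g$. The issue is that $u'_g$ may branch (contain forests off the main path) and may depend on $g$. The branching is dealt with as follows: any context type is of the form $v_0(f_1 + v_1)\cdots(f_k + v_k)$, i.e.~a composition of single-node contexts $b$ and ``add a sibling forest'' contexts $f + \hole$; I would argue that the sibling forests $f_i$ can be absorbed. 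Indeed, using commutativity and idempotency of~$H$ (identity~(\ref{eq:apcom})), prepending or appending a forest $f$ to the argument only changes the argument by $f + \hole$; and since we are always working relative to a fixed forest component (that of $h$, which is all of $H\setminus X$), Lemma~\ref{lemma:reach-preservers}-style reasoning shows these additions do not leave the component, so they can be reabsorbed into the eventual target~$h$. This reduces the claim to: for some sequence of inner node labels $b_1,\dots,b_n$, the branch-free context $\alpha(b_1\cdots b_n)$ sends every $g \in H\setminus X$ to $h$.

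The next step is to make the target independent of $g$, and here the key tool is the $\omega$-power together with identity~(\ref{eq:da}). Suppose first I have shown that for each $g$ there is a branch-free word $w_g$ of inner labels with $\alpha(w_g) g = h$ (relative to the component of $v$; I will carry the ``relative to $w$ in the context component of $v$'' qualifier throughout, composing on the left by $w$ and using that $w$'s component is closed under the relevant operations by Lemma~\ref{lemma:rgreen-preservers}). To get a single word working for all $g$ simultaneously, I would take the branch-free context $\prod_g (\alpha(w_g))^\omega$ over all $g\in H\setminus X$, or rather iterate: since $H\setminus X$ is finite, I chain the $w_g$'s and raise to a high power. Using aperiodicity (a consequence of~(\ref{eq:da})) and~(\ref{eq:da}) itself in the form $(vw)^\omega = (vw)^\omega w (vw)^\omega$, one shows that applying this long branch-free context to any $g$ collapses it into the common value~$h$ — the point being that once some prefix of the big context has mapped $g$ into a ``small'' part of the component, the remaining high-power suffix stabilizes the result, and~(\ref{eq:da}) guarantees the stable value does not depend on which $g$ we started from. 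This is essentially the DA collapsing argument familiar from the word case.

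The main obstacle I expect is the interaction between the two sorts: branch-freeness is a statement about $V$, but the forest types in $X$ are precisely the ones we are allowed to ignore only in non-leaf positions, and adding a sibling forest $f+\hole$ is harmless only when $f$'s type stays in the right component. So the delicate bookkeeping is showing that the sibling forests that appear when we decompose an arbitrary $u_g$ can all be taken to have types that, relative to the component of $v$ acting on the left, get absorbed — this is exactly where Lemma~\ref{lemma:reach-preservers} and the observation following~(\ref{eq:one-component}) (that $v$ may fail to be reachable from $v(\hole+g)$) must be handled carefully. I would isolate this as a sub-claim: relative to any context type $w$ in the component of $v$, and for any $f \in H$ and $g \in H\setminus X$, we have $w(f + \hole)g$ still in the component $H\setminus X$ (or more precisely $w\cdot(f+\hole)\cdot$ acts the same as $w$ on elements of $H\setminus X$ up to the component), so the branches genuinely do not matter. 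Once that sub-claim is in place, the remainder is the standard DA argument sketched above.
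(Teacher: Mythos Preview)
There is a genuine gap: your plan never invokes identity~(\ref{eq:special}), yet this identity is the heart of the paper's proof of this lemma. Your absorption sub-claim --- that, relative to $w$ in the component of $v$, the side-forests $f_i$ in a decomposition $u = (f_1+\alpha(b_1))\cdots(f_n+\alpha(b_n))$ can be dropped without changing the action on $H\setminus X$ --- cannot be derived from~(\ref{eq:apcom}) and~(\ref{eq:da}) alone. You yourself note that $v$ need not be reachable from $v(\hole+f)$, so $w(f+\hole)$ may leave the context component of $w$; once it does, neither Lemma~\ref{lemma:reach-preservers} nor DA gives you any control over what it does to elements of $H\setminus X$. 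More decisively, the paper exhibits a forest algebra satisfying~(\ref{eq:apcom}) and~(\ref{eq:da}) but not~(\ref{eq:special}); if your branch-absorption argument went through using only the first two identities, it would also go through in that algebra, which would contradict the failure of~(\ref{eq:special}) there (since~(\ref{eq:special}) is precisely a statement comparing a branched context with its branch-free skeleton $v_1\cdots v_n \mainle u$).

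What the paper actually does is different and worth internalising. First, it builds a \emph{universal} context $u_h = u(h_1+\cdots+h_n+\hole)$, where $h_1,\ldots,h_n$ enumerate all of $H$, so that $u_h f = h$ for \emph{every} $f\in H$, not just for $f\in H\setminus X$; this already makes the letters $b_1,\ldots,b_n$ from the decomposition of $u_h$ independent of $g$, so your Step~4 chaining-and-DA argument is unnecessary. Second, to strip the branches from $u_h$, the paper does \emph{not} absorb them one by one. Instead it observes $v_1\cdots v_n \mainle u_h$ and $v_{n+1}\cdots v_m \mainle u_g$ (for an analogous $u_g$), arranges an $\omega$-power using that $w v_1\cdots v_m\bar w = w$ for some $\bar w$, and then applies~(\ref{eq:special}) once to swap $(v_1\cdots v_m\bar w)^\omega(u_h u_g\bar w)^\omega$ for $(v_1\cdots v_m\bar w)^\omega v_1\cdots v_n\, u_g\bar w\,(u_h u_g\bar w)^\omega$. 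The universality of $u_g$ and $u_h$ then collapses the tail to $g$. So the missing idea in your plan is precisely the $\mainle$ relation and identity~(\ref{eq:special}); without them the branch-removal step does not go through.
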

\begin{proof}
  Let $h$ be a forest type outside $ X$.  We first show that there is
  a context type $u_g$ such that $h=u_hf$ holds for every forest type
  $f \in H$. By assumption on there being only one forest component
  outside $X$, the forest type $h$ can be reached from every forest
  type. In particular, there is some context type $u$ such that
  $h=u(h_1+\cdots+h_n)$, where $h_1,\ldots,h_n$ are all the forest
  types in $H$. Let
  \[
    u_h = u(h_1+\cdots+h_n + \hole)  \ .
  \]
  Thanks to idempotency and commutativity of $H$,
  i.e.~identity~(\ref{eq:apcom}), 
  \[
    h_1+\cdots +h_n =     h_1+\cdots +h_n + f
  \]
  holds for any forest type $f$, and therefore also $h=u_hf$.

   We can decompose the context  $u_h$ as 
  \[
    u_h=(f_1+\alpha(b_1)) \cdots (f_n + \alpha(b_n)) 
  \]
  for some $n $ and $f_1,\ldots,f_n \in H$ and $b_1,\ldots,b_n
  \in B$. (In general, some of the $f_i$ may be empty; but the proof
  follows the same lines.)  Let us denote $\alpha(b_i)$ by $v_i$. We
  will show that
  \[
    wh=wv_1\cdots v_ng 
  \]
  holds for any forest type $g$ and any context type $w$ in the
  context component of $v$, thus proving the lemma.

  Let then $g,w$ be as above. As for $h$, we can define a context type
  $u_g$ such that $g=u_gf$ holds for any forest type $f$. This context
  can also be decomposed as
  \[
        u_g=(f_{n+1}+\alpha(b_{n+1})) \cdots (f_m + \alpha(b_m))
  \]
  for some $m \ge n+1 $ and $f_{n+1},\ldots,f_m \in H$ and
  $b_{n+1},\ldots,b_m \in B$.  As previously, we denote $\alpha(b_i)$
  by $v_i$.  By definition, we have
   \begin{eqnarray}\label{eq:mainle}
     v_1 \cdots v_n \mainle u_h \qquad v_{n+1} \cdots v_m \mainle u_g
   \end{eqnarray}
   
   Let now $w \in V$ be in the same context component as $v$.  By
   assumption on $w$ and Lemma~\ref{lemma:rgreen-preservers}, also the
    context type $w v_1 \cdots v_m$ is in the same context component as $v$.
  In particular, there is some $ \bar w \in V$ such that 
  \[
        w v_1 \cdots v_m  \bar w = w\ .
  \]
  By iterating the above $\omega$ times, and appending $h$, we get
\[wh =   w(v_1 \cdots v_m  \bar  w)^\omega h\ .\]
Since $u_hf=h$ holds for all forest types $f$, the above can be
rewritten as
\[w(v_1 \cdots v_m  \bar w)^\omega (u_hu_g  \bar w)^\omega h\ .\]
Using the property from identity~(\ref{eq:special}), we get
\[\eqalign{
   w(v_1\cdots v_m\bar w)^\omega (u_h u_g\bar w)^\omega h
&= w(v_1\cdots v_m\bar w)^\omega v_1 \cdots v_n u_g\bar w (u_hu_g 
   v)^\omega h\cr
&= w(v_1 \cdots v_m  \bar w)^\omega v_1 \cdots v_n g=
\bar wv_1\cdots v_n g\ ,
  }
\]
which concludes the proof of the lemma.
\end{proof}

We now use the above Lemma to conclude the proof of
(\ref{eq:one-component}). Indeed, let $f,g$ be forest types outside
$X$. By the above lemma, there are inner node labels $b_1,\ldots,b_m
\in B$ such that
\[
  f=w\alpha(b_1 \cdots b_n)h \qquad   g=w\alpha(b_{n+1} \cdots b_m)h
\]
holds for all $w$ in the context component of $v$ and all forest types
$h$ outside $X$.  Let $v_i=\alpha(b_i)$. By assumption on the
equivalence class of $v$ and by Lemma~\ref{lemma:rgreen-preservers},
there must be some $ v \in V$ such that
\[v v_1 \cdots v_m  \bar v = v\ .\]
But then we have
\[  vf = v(v_1 \cdots v_m  \bar v)^\omega f = 
    v(v_1 \cdots v_m  \bar v)^\omega v_{n+1} \cdots v_m  \bar v (v_1
  \cdots v_m  \bar v)^\omega  f = 
    v(v_1 \cdots v_m  \bar v)^\omega g = vg\ .
\]
The second equality follows from (\ref{eq:da}).

\section{Empty forests}
\label{sec:empty-forests}
The forest algebra setting used in this paper does not allow empty
forests.  There is also a two-sorted alphabet $(A,B)$, where letters
from $A$ are only allowed in leaves, and letters from $B$ are only
allowed in inner nodes.  A different, and arguably more elegant,
setting is considered in~\cite{forestalgebra}, where empty forests are
allowed, and only one alphabet is used. 

Why do we not use the forest algebra with empty forests here? The
reason is that the completeness proof in
Proposition~\ref{induction:prop} uses an induction on the size of the
leaf alphabet, so it helps that the leaf alphabet is part of the
definition of the forest algebra. The assumption on nonempty forests
follows, since if we want a separate alphabet for leaves, there are
algebraic reasons to consider forest algebras without the empty
forest.  A natural question emerges: does our characterization also
work for forest algebra with empty forests?  In this section, we give
an informal argument that the answer to this question is yes.

We will not give a detailed discussion of forest algebra with empty
forests here. We define only define the syntactic object. The
interested reader is referred to~\cite{forestalgebra}.  Let $A$ be an
alphabet. We define $A^\Delta_H$ (respectively, $A^\Delta_V$) to be
the set of (possibly) empty forests (respectively, contexts) labeled
by $A$, without any restriction on labels in leaves or inner nodes. We
write $\onefreef A$ for the pair $(A^\Delta_H,A^\Delta_V)$. The only
difference between $\onefreef A$ and $\freef A A$ is that the second
does not allow the empty forest on its first coordinate.  It is not
hard to see that $\onefreef A$ is a forest algebra, as defined in
Section~\ref{sec:forest-algebra}.  Given a set $L$ of forests,
possibly including the empty forest, the \emph{syntactic forest
  algebra with empty forests of $L$} is defined to be the quotient of
$\onefreef A$ under the two-sorted equivalence relation defined below.
\[\eqalign{
  t &\simeq t'\cr
  p &\simeq p'\cr
  }
  \qquad
  \eqalign{
  \miff  &\qquad \forall p \in A^\Delta_V  \ \ pt \in
  L\  \iff\  pt' \in L\cr
  \miff  &\qquad \forall q \in A^\Delta_V \forall s
  \in A^\Delta_H  \ \ qps \in L\  \iff\  qp's \in L\cr
  }
\]
This equivalence relation is a refinement of the Myhill-Nerode
equivalence introduced in Section~\ref{sec:forest-algebra} (for the
case when $A=B$).  It may possibly distinguish more contexts because
the variable $s$ can also quantify over the empty forest.

\begin{thm}
  Let $L$ be a forest language. Let $(H,V)$ be its syntactic forest
  algebra, and let $(H',V')$ be its syntactic forest algebra with
  empty forests. If $(H,V)$ satisfies the identities from
  Theorem~\ref{thm:main}, then so does $(H',V')$, and vice versa.
\end{thm}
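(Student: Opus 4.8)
The plan is to relate the two syntactic algebras by a surjective forest algebra morphism and then use the fact that all three identities in Theorem~\ref{thm:main} are preserved under morphic images and, in the relevant direction, reflected by the extra structure.

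\medskip

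First I would pin down the connection between $(H,V)$ and $(H',V')$. Since the equivalence $\simeq$ refining the ordinary Myhill--Nerode equivalence is coarser on forests than the $\simeq$ relation restricted to nonempty forests is finer than nothing — more carefully: the $\simeq$-classes of nonempty forests inject into the $\simeq$-classes, and since every ordinary Myhill--Nerode context-test is a special case of the $\simeq$ context-test, each $\simeq$-class of nonempty forests is contained in a single ordinary class. This yields a partial map from $H'$ (restricted to the images of nonempty forests) onto $H$, and similarly a map $V' \to V$ on contexts, which together form a surjective forest algebra morphism $\pi$ from the subalgebra of $(H',V')$ generated by nonempty forests and their contexts onto $(H,V)$ — or, going the other way, one observes that $(H,V)$ is obtained from $(H',V')$ by forgetting the empty forest and the ``empty-forest-sensitive'' distinctions. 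Because identities are preserved by morphic images, the direction ``$(H',V')$ satisfies the identities $\Rightarrow$ $(H,V)$ satisfies them'' is then immediate, modulo checking that the relation $\mainle$ behaves well under $\pi$ — here I would use Lemma~\ref{lemma:fixpoint}: $\mainle$ is the least relation closed under the generating rules, and a surjective morphism maps a relation satisfying the closure conditions to one satisfying them, so $u \mainle w$ in $V'$ implies $\pi(u) \mainle \pi(w)$ in $V$, which is exactly the compatibility needed for~(\ref{eq:special}).

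\medskip

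The substantive direction is ``$(H,V)$ satisfies the identities $\Rightarrow$ $(H',V')$ does too''. The cleanest route is to invoke the main theorem rather than argue algebraically: if $(H,V)$ satisfies~(\ref{eq:apcom}),~(\ref{eq:da}),~(\ref{eq:special}), then by Theorem~\ref{thm:main} the language $L$ (as a language of nonempty forests) is forest-definable in $\fotwo$. Now I would argue that forest-definability of $L$ over nonempty forests, together with the fact that $\fotwo$ formulas can test whether a forest is empty — or rather, that one can handle the empty forest separately as a single extra case, since membership of the empty forest in $L$ is just one bit — implies that the $\simeq$ relation (with empty forests) coincides with the ordinary one on nonempty forests and adds at most the empty-forest class, so $(H',V')$ is essentially $(H,V)$ with possibly one new $H$-element and correspondingly adjusted $V$-action. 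Then one checks directly that adjoining the empty-forest element preserves~(\ref{eq:apcom}) (idempotency and commutativity are trivially inherited, and the empty forest is idempotent and central by associativity/commutativity of $+$), preserves~(\ref{eq:da}) (the vertical monoid $V'$ embeds in or maps onto $V$ in a way that respects $\omega$-powers, using aperiodicity), and preserves~(\ref{eq:special}) (again using that $\mainle$ in $V'$ projects to $\mainle$ in $V$, and the new $H$-element only enters as a possible value of the substituted forests, which is covered by universally quantifying over all forest types). Alternatively, and perhaps more honestly for an ``informal argument'', I would run the completeness proof of Section~\ref{sec:completeness} directly in the empty-forest setting, noting that the only place nonemptiness was used was the induction on leaf-alphabet size, and that the Ehrenfeucht--Fra\"iss\'e machinery of Section~\ref{sec:correctness} and the morphic-image Theorem~\ref{thm:ef} were already set up for one-sorted forests with empty forests.

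\medskip

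I expect the main obstacle to be making precise the claim that $(H',V')$ is ``essentially'' $(H,V)$ plus possibly one element: in general the empty-forest-sensitive context equivalence could split several ordinary context classes, not just add a single forest class, so one must verify that under the hypothesis that $(H,V)$ satisfies the identities (equivalently $L$ is $\fotwo$-definable) no such extra splitting of contexts occurs beyond what is forced by the single empty-forest value. The key technical lemma to establish here is that for an $\fotwo$-definable $L$, two contexts agreeing on all nonempty substitutions also agree on the empty substitution up to the information already recorded — intuitively because an $\fotwo$ formula evaluated at the root cannot distinguish ``hole filled by empty forest'' from a boundary condition it can already see. Once that is in hand, the transfer of all three identities is routine book-keeping. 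Since the paper explicitly says this section only gives an \emph{informal} argument, I would present the morphism $\pi$, state the two preservation directions, flag the empty-forest case as the one delicate point, and refer to the already-general one-sorted development in Sections~\ref{sec:correctness} and~\ref{sec:empty-forests} rather than redoing the completeness induction.
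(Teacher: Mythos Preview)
Your treatment of the right-to-left direction (identities in $(H',V')$ imply identities in $(H,V)$) matches the paper: $(H,V)$ is a morphic image of the subalgebra of $(H',V')$ generated by nonempty forests, and identities---including~(\ref{eq:special}) once you observe $\mainle$ is pushed forward by morphisms---transfer.

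For the left-to-right direction, your primary route has a real gap. The ``key technical lemma'' you propose---that for an $\fotwo$-definable $L$, contexts agreeing on all nonempty substitutions also agree on the empty substitution---is false. Take $A=\{a,b\}$ and let $L$ be the forest language ``some root tree has label $b$ and has at least one child'', which is forest-definable by the formula $b \land \tlef\,\true$. The contexts $p=b\hole$ and $p'=b(a+\hole)$ are equivalent on every nonempty forest $t$ (in every surrounding context $r$, both $r(bt)$ and $r(b(a+t))$ put a $b$-rooted tree with a child wherever the hole was), so they coincide in $V$. But on the empty forest they give the single leaf $b\notin L$ versus the tree $b(a)\in L$, so they are distinguished in $V'$. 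Hence $(H',V')$ is \emph{not} just $(H,V)$ plus one horizontal element; $V'$ can genuinely split classes of $V$, and your direct identity-checking on ``$(H,V)$ plus one element'' does not get off the ground.

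The paper avoids this entirely by taking the route you list only as a fallback. After using Theorem~\ref{thm:main} to conclude that $L$ is forest-definable, it does \emph{not} compare $(H',V')$ with $(H,V)$ at all; instead it reruns the correctness argument of Section~\ref{sec:correctness} in the one-sorted setting. The point is that Proposition~\ref{prop:ef-transfer} was deliberately proved for one-sorted morphisms (allowing empty forests), so the Ehrenfeucht--Fra\"iss\'e argument showing that any forest-definable language has a syntactic algebra satisfying~(\ref{eq:apcom}),~(\ref{eq:da}),~(\ref{eq:special}) goes through verbatim for $(H',V')$. That is the whole proof of this direction: definability of $L$ plus the one-sorted correctness argument. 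Your suggestion to ``rerun the completeness proof of Section~\ref{sec:completeness} in the empty-forest setting'' is a different (and harder) idea and is not what the paper does.
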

\begin{proof}
  We begin with the right to left implication. Since $\freef A A$ is a
  subalgebra of $\onefreef A$, and since the equivalence relation
  defining $(H',V')$ is a refinement of the equivalence relation
  defining $(H,V)$, it follows that $(H,V)$ is a subalgebra of
  $(H',V')$. In particular, any identities that hold in the latter
  must also hold in the former.

  For the left to right implication, assume that $(H,V)$ satisfies the
  identities from Theorem~\ref{thm:main}. By the theorem, the
  recognized language $L$ is forest-definable in $\fotwo$. To
  conclude, we will show that if a language $L$ is forest-definable in
  $\fotwo$, then its syntactic forest algebra with empty forests
  $(H',V')$ satisfies the identities from Theorem~\ref{thm:main}. This
  follows by the correctness argument presented in
  Section~\ref{sec:correctness}. The reason why we can use that
  argument is that it relied on Proposition~\ref{prop:ef-transfer} to
  transfer Duplicator strategies, and this proposition also works for
  the more general one-sorted morphisms that are appropriate for
  forest algebras with empty forests.
\end{proof}

\section{One quantifier alternation}
In~\cite{therienwilkefo2}, it was shown that over words, the temporal
logic $\wordfotwo$ has the same expressive power as $\Sigma_2 \cap
\Pi_2$, where
\begin{enumerate}[$\bullet$]
\item $\Sigma_2$ are word properties definable by a first-order
  formula with quantifier prefix $\exists^* \forall^*$; the signature
  contains label tests and the left-to-right order on word positions.
\item $\Pi_2$ are complements of $\Sigma_2$.
\end{enumerate}
For instance, consider the word language $b^*aA^*$ over the alphabet
$A=\set{a,b,c}$. This language can be defined in $\wordfotwo$ by the
formula
\[
\tlf (a \land \neg \tlf^{-1} \neg b)\ .
\]
This language can also be defined both in $\Sigma_2$ and $\Pi_2$, as witnessed by the formulas:
\[
  \begin{array}{lll}
      \exists x \forall y \  &\ a(x) \ \land \  (y < x\ \Rightarrow \ b(y)) & \qquad \in \Sigma_2\\
  \forall x \exists y \  &\ c(x)\ \Rightarrow \ (y < x \ \land \ a(y))& \qquad \in \Pi_2\ .
  \end{array}
\]
Both classes $\Sigma_2$ and $\Pi_2$ can be extended to trees using the
descendant order on tree nodes.  We show here that the result
from~\cite{therienwilkefo2} fails for trees:
\begin{prop}
  Over trees, the classes $\fotwo$ and $\Sigma_2 \cap \Pi_2$ have
  incomparable expressive power. Likewise for forests.
\end{prop}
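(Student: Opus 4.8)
The plan is to prove incomparability by giving, for each of the two possible inclusions, a regular tree language that violates it; the forest statement then follows from the same two examples (read inside a single tree of a forest, or via the obvious forest formulas).

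For ``$\Sigma_2\cap\Pi_2\not\subseteq\fotwo$'' I would take $L_1$ = ``at least two nodes carry label $a$''. It is defined by the sentence $\exists x\exists y\,(x\ne y\wedge a(x)\wedge a(y))$ and, dually, by a $\forall\forall$ sentence, so $L_1\in\Sigma_1\cap\Pi_1\subseteq\Sigma_2\cap\Pi_2$. On the other hand, writing $b$ for an inner label, the trees $b(a+a)$ and $b(a)$ receive the same type in every forest algebra satisfying~(\ref{eq:apcom}), yet only the first lies in $L_1$. Since $\fotwo$ is invariant under duplication and reordering of sibling subtrees --- equivalently, the syntactic algebra of any $\fotwo$-definable language satisfies~(\ref{eq:apcom}) --- the language $L_1$ cannot be defined in $\fotwo$.

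For ``$\fotwo\not\subseteq\Sigma_2\cap\Pi_2$'' I would work over the alphabet with one leaf label $a$ and two inner labels $b,c$, and let $L_2$ be ``some node carries label $a$ and all of its proper ancestors carry label $b$''. This is defined in $\fotwo$ by $(a\wedge\neg\tlf^{-1}\neg b)\vee\tlef(a\wedge\neg\tlf^{-1}\neg b)$ --- essentially the $\ctl$-like formula already used in the introduction --- and it is also defined by the $\Sigma_2$ sentence $\exists x\,\forall y\,(a(x)\wedge(y<x\to b(y)))$, so $L_2\in\Sigma_2$. It therefore remains to show $L_2\notin\Pi_2$, i.e.\ that $\overline{L_2}=$ ``every node carrying label $a$ has a proper ancestor not carrying $b$'' is not in $\Sigma_2$. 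For this I would use the Ehrenfeucht--Fra\"iss\'e game for $\Sigma_2$: to defeat every $\Sigma_2$ sentence with $k$ existential and $m$ universal quantifiers, it suffices, for each $k,m$, to exhibit a tree $A\in\overline{L_2}$ and a tree $B\notin\overline{L_2}$ on which Duplicator wins the game in which Spoiler marks $k$ nodes of $A$, Duplicator answers with $k$ nodes of $B$, Spoiler then marks $m$ nodes of $B$, Duplicator answers with $m$ nodes of $A$, and the induced correspondence must be a partial isomorphism. Fix $k,m$ and pick $K,N$ much larger than $k+m$. Let $A$ and $B$ both be a root labeled $b$ with $K+1$ children, each child a path of the same length $N$. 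In $A$ every such path is $b\cdots b\,c\,b\cdots b\,a$ with more than $m$ nodes in each $b$-block, so every leaf of $A$ has a $c$ above it and $A\in\overline{L_2}$. In $B$ one of the $K+1$ paths is replaced by the all-$b$ path $b\cdots b\,a$ of the same length, whose leaf has only $b$'s above it, so $B\notin\overline{L_2}$; the remaining $K$ paths are as in $A$. Since both trees have $K+1$ equally long branches, $k$ existential marks in $A$ touch at most $k$ branches, and Duplicator can match them by $k$ nodes of $B$ living in ``good'' branches only --- there are still more than $k$ of them --- copying each marked branch verbatim; then $m$ universal marks in $B$ can neither trace a whole all-$b$ branch nor locate the depth at which the missing $c$ would sit, so Duplicator maps the defective branch to a fresh good branch of $A$ and, using that each good branch has more than $m$ $b$-nodes below its $c$, reproduces all marks while preserving labels, the ancestor order, and the partition into branches. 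Hence the map is a partial isomorphism, no $\Sigma_2$ sentence separates $\overline{L_2}$ from $L_2$, and letting $k,m$ grow gives $\overline{L_2}\notin\Sigma_2$, so $L_2\notin\Pi_2$.

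I expect the main obstacle to be exactly this last game analysis: making precise that $k$ existential probes cannot name a ``cut'' of $c$-nodes separating root from leaves (so the defective branch of $B$ stays hidden among the $K$ good branches) and that $m$ universal probes cannot witness a $c$-free root-to-leaf path, so that in every branch the long stretches of $b$'s in $A$ leave Duplicator enough room to mirror Spoiler's choices. The remaining ingredients --- membership of $L_1$ in $\Sigma_1\cap\Pi_1$ and its bisimulation non-invariance, and membership of $L_2$ in $\fotwo$ and in $\Sigma_2$ --- are routine.
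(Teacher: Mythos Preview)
Your first half is essentially the paper's argument: a simple counting property (the paper uses ``three $a$'s'', you use ``two $a$'s'') is in $\Sigma_1$ and violates the idempotency identity~(\ref{eq:apcom}). One slip: ``at least two $a$'s'' is \emph{not} $\Pi_1$ --- $\Pi_1$ properties are preserved under substructures, and removing an $a$-leaf can destroy the property. This is harmless, since $\Sigma_1\subseteq\Pi_2$ already gives you $L_1\in\Sigma_2\cap\Pi_2$.

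For the second half you take a genuinely different route. The paper uses a much simpler witness: the forest language ``no root node is a leaf'' is trivially forest-definable in $\fotwo$, and a two-line pigeonhole argument (Lemma~\ref{lemma:delta2-pumping}) shows that any $\exists^k\forall^m$ sentence true in the forest $n(ba)$ with $n>k+m$ remains true in $n(ba)+a$, since the extra leaf can always be hidden in an unpebbled $ba$-copy. This avoids the branch-by-branch bookkeeping entirely. Your example $L_2$ and the accompanying $\Sigma_2$ EF-game are correct --- the strategy of routing Spoiler's existential pebbles into good branches of $B$ and then mapping the bad branch to a fresh good branch of $A$ (using that each $b$-block has more than $m$ nodes, so the $c$ can be dodged) does yield a partial isomorphism --- but it is substantially more work than the paper's argument. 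What your approach buys is a tree-language witness directly (the paper's example is intrinsically a forest property and the tree case is only sketched), at the cost of a considerably longer verification; the paper's pumping argument, on the other hand, is almost immediate once the right pair of forests is written down.
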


A mentioned in the introduction, the class $\Sigma_2 \cap \Pi_2$ was
given an effective characterization in~\cite{deltatwo}. We prove the
above proposition for forests, the case for trees is done the same
way. The inequality
\[
  \fotwo \supsetneq \Sigma_2 \cap \Pi_2
\]
is witnessed by the language ``three nodes with label $a$'', which
cannot be defined in $\fotwo$ by virtue of~(\ref{eq:apcom}). To show
the remaining inequality
\[
  \fotwo \subsetneq \Sigma_2 \cap \Pi_2\ ,
\]
we will demonstrate in the following lemma that the forest property
``no root node is a leaf'' cannot be defined in $\Sigma_2$, although
it is forest-definable in $\fotwo$.
\begin{lem}\label{lemma:delta2-pumping}
  Let $a$ be a leaf label, $b$ an inner node label, and $\varphi$ be a
  formula of the form
\[
  \exists x_1 \ldots x_i \forall y_1 \ldots y_j \psi(x_1\ldots x_i,
  y_1\ldots y_j) \quad \in \Sigma_2\ ,
\]
with $\psi$ quantifier-free.  Let $n>i+j$. If $n(ba)$ satisfies $\varphi$, then so does $n(ba)+a$.
\end{lem}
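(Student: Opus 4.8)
The plan is to use the standard partial-isomorphism criterion for implications between $\Sigma_2$ formulas. Write $\varphi = \exists x_1 \ldots x_i\, \forall y_1 \ldots y_j\, \psi$, with $\psi$ quantifier-free over the signature consisting of the label predicates and the descendant order $\le$. To prove that $n(ba)\models\varphi$ implies $n(ba)+a\models\varphi$ it is enough to proceed as follows. Fix a tuple $\bar a=(a_1,\ldots,a_i)$ of nodes of $n(ba)$ witnessing the existential block, so that $n(ba)\models\psi[\bar a,\bar b]$ for every tuple $\bar b$ of nodes of $n(ba)$. I would exhibit a tuple $\bar a'=(a'_1,\ldots,a'_i)$ of nodes of $n(ba)+a$, and then, for an arbitrary tuple $\bar b'=(b'_1,\ldots,b'_j)$ of nodes of $n(ba)+a$, exhibit a tuple $\bar b$ of nodes of $n(ba)$ together with a map $\sigma$ sending each $a'_k$ to $a_k$ and each $b'_l$ to $b_l$ which is an isomorphism between the substructures induced on the pebbled nodes. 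Since $\psi$ is quantifier-free, this gives $n(ba)+a\models\psi[\bar a',\bar b']$ iff $n(ba)\models\psi[\bar a,\bar b]$; the right-hand side holds by the choice of $\bar a$, and as $\bar b'$ was arbitrary we conclude $n(ba)+a\models\varphi$.

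To carry this out, call each of the $n$ copies of the tree $ba$ inside $n(ba)$ a \emph{branch} (a $b$-labelled root with its unique $a$-labelled leaf child), so that $n(ba)+a$ is these $n$ branches plus one extra root $e$ which is an $a$-labelled leaf. Since $n(ba)$ is literally the subforest of $n(ba)+a$ obtained by deleting $e$, I take $\bar a':=\bar a$. Given $\bar b'$, I set $b_l:=b'_l$ whenever $b'_l$ lies in a branch, and $b_l:=c$ whenever $b'_l=e$, where $c$ is the $a$-leaf of a \emph{free} branch $B^{*}$, meaning a branch that contains none of $a_1,\ldots,a_i$ and none of those $b'_l$ that lie in branches. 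Such a $B^{*}$ exists because these nodes occupy at most $i+j$ of the $n$ branches and $n>i+j$. Finally $\sigma$ is the identity on $a_1,\ldots,a_i$ and on those $b'_l$ lying in branches, and $\sigma(e):=c$.

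The only point that requires care --- and the main obstacle --- is checking that $\sigma$ is a partial isomorphism, i.e.\ that it preserves and reflects labels, $\le$, and equality on the pebbled set. Labels are fine since $e$ and $c$ are both $a$-labelled and $\sigma$ is the identity elsewhere. For $\le$, the new root-leaf $e$ is both $\le$-minimal and $\le$-maximal in $n(ba)+a$, hence $\le$-related only to pebbles equal to $e$ itself; and $c$ is $\le$-related only to the $b$-root of $B^{*}$, which carries no pebble by the choice of $B^{*}$. Hence for every pebble $x\ne e$ we get $x\not\le e$, $e\not\le x$, and symmetrically $\sigma(x)\not\le c$, $c\not\le\sigma(x)$, because $\sigma(x)=x$ lies in a branch different from $B^{*}$; all $\le$-relations and equalities among the remaining pebbles are preserved outright. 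This is exactly where the hypothesis $n>i+j$ is needed: $e$ is an $a$-leaf that is a root, an order type not realised anywhere in $n(ba)$, so $e$ must be sent to an $a$-leaf whose parent escapes all pebbles, and reserving such a branch requires $n$ to exceed the number $i+j$ of pebbles.
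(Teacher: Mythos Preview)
Your proof is correct and follows essentially the same approach as the paper's own proof: keep the existential witnesses $\bar a$ unchanged inside $n(ba)+a$, and for any universal assignment $\bar b'$ send every occurrence of the extra root leaf $e$ to the $a$-leaf of a branch left untouched by all $i+j$ pebbles, which exists because $n>i+j$. The paper states this contrapositively and in one sentence (``this copy can be used in $n(ba)$ to simulate $a$''), whereas you spell out the partial-isomorphism verification explicitly, but the construction and the use of the hypothesis $n>i+j$ are identical.
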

\begin{proof}
  Assume then that $n(ba)$ satisfies $\varphi$. We need to show that
  $n(ba) + a$ does too. For $x_1,\ldots,x_i$, we pick the same nodes
  in $n(ba)+a$ as the nodes in $n(ba)$ that witnessed $\varphi$.  We
  need to show that for any assignment of the nodes $y_1,\ldots,y_j$
  in $n(ba)+a$ that makes $\psi$ false, we also can find an assignment
  in $n(ba)$ that makes $\psi$ false.  The key point is that any
  assignment of $x_1,\ldots,x_i,y_1,\ldots,y_j$ in $n(ba)+a$ must
  leave at least one copy of $ba$ without any variables; this copy can
  be used in $n(ba)$ to simulate $a$.
\end{proof}

\section{Closing remarks}\label{sec:conclusions}
The contribution of this paper is a characterization of languages
definable in $\fotwo$. This characterization is expressed in terms of
identities that must be satisfied in the syntactic algebra. A
corollary of this characterization is an algorithm for deciding if a
given regular language can be expressed in $\fotwo$. The algorithm
runs in polynomial time if the input is given as a forest algebra.

As mentioned in the introduction, there are many open problems waiting
to be solved in this field. Of those closely related to $\fotwo$, the
following look interesting:
\begin{enumerate}[$\bullet$]
\item What are the identities for two-variable first-order logic with
  the descendant relation? The question boils down to: what identity
  should replace idempotency $h+h=h$? Here is one candidate: $ v(h+h)
  + vh = vh + vh$.
\item What are the identities for an extension of $\fotwo$, where we
  allow operators of the form $\tlef^{k} \varphi$, with the meaning:
  ``the current node has $k$ incomparable descendants where $\varphi$
  holds''. This seems to be a reasonable extension of $\fotwo$ that is
  capable of counting in a proper way (recall that two-variable logic
  could express the property ``there are two $a$'s'', but not the
  property ``there are three $a$'s'').
\end{enumerate}
It is conceivable that a modification of the techniques developed in
this paper can be sufficient to solve the above two logics. For other
logics mentioned in this paper, such as full first-order logic, or
even variants of $\fotwo$ with horizontal order, new techniques need
to be developed.

\end{document}